 \newtheorem{theorem}{Theorem}[section]
 \newtheorem{corollary}[theorem]{Corollary}
 \newtheorem{lemma}[theorem]{Lemma}
 \newtheorem{proposition}[theorem]{Proposition}
 \newtheorem{definition}[theorem]{Definition}
\newcommand{\J}{\mathbf{J}}
\newcommand{\DZ}{\Delta\mathbf{Z}}
\newcommand{\Z}{\mathbf{Z}}
\newcommand{\Om}{\Omega}
\newcommand{\bnu}{\boldsymbol{\nu}}
\newcommand{\Tau}{\boldsymbol{\tau}}
\newcommand{\p}{\boldsymbol{\,.\,}}
\newcommand{\dv}{\,\delta v}
\newcommand{\ds}{\,\delta s}
\newcommand{\s}{s\searrow0}
\newcommand{\A}{\mathbf{A}}
\newcommand{\V}{\nabla\textbf{V}_c}
\newcommand{\Vv}{\textbf{V}_c}
\newcommand{\As}{\mathbf{A}_{\tau}}
\newcommand{\PhiS}{\Phi_{\tau}}
\newcommand{\Vs}{\nabla_{\tau}\textbf{V}_c}
\newcommand{\E}{\mathbf{E}}
\renewcommand{\H}{\mathbf{H}}
\newcommand{\curl}{\textbf{curl\hspace{.01in}}}%{\nabla\times}
\newcommand{\curlS}{\textbf{curl}_{\tau}}%{\nabla_{\tau}\times}
\newcommand{\dvg}{\textbf{div}}
\newcommand{\dvgS}{\textbf{div}_{\tau}\hspace{.01in}}
\newcommand{\grad}{\nabla}
\newcommand{\gradS}{\nabla_{\tau}}
\newcommand{\tta}{\boldsymbol{\theta}}
\newcommand{\ttaS}{\boldsymbol{\theta}_{\tau}}
\author{Houssem Haddar}
\address{INRIA-Saclay/CMAP , Ecole Polytechnique, Route de Saclay, 91128 Palaiseau Cedex FRANCE}%
\author{Mohamed Kamel RIAHI}
\address{INRIA-Saclay/CMAP , Ecole Polytechnique, Route de Saclay, 91128 Palaiseau Cedex FRANCE}%
\title[Eddy current direct and inverse solvers]{3D direct and inverse solvers for eddy current testing of deposits in steam generator}
\keywords{Electromagnetism, Eddy current, impedance boundary condition, inverse problem, shape optimization}
\begin{document}

\maketitle

 %\tableofcontents

 \begin{abstract}
	We consider the inverse problem of estimating the shape profile of an unknown deposit from a set of eddy current impedance measurements. The measurements are acquired with an axial probe, which is modeled by a set of coils that generate a magnetic field inside the tube. For the direct problem, we validate the method that takes into account the tube support plates, highly conductive part, by a surface impedance condition. For the inverse problem, finite element and shape sensitivity analysis related to the eddy current problem are provided in order to determine the explicit formula of the gradient of a least square misfit functional. A  geometrical-parametric shape inversion algorithm based on cylindrical coordinates is designed to improve the robustness and the quality of the reconstruction. Several numerical results are given in the experimental part. Numerical experiments on synthetic deposits, nearby or far away from the tube, with different shapes are considered in the axisymmetric configuration.
\end{abstract}
%\keywords{Electromagnetism, Eddy current, impedance boundary condition, inverse problem, shape optimization}
%------------------------------------------------------------------------------------------------------------------------- begin 
%---------------------------------------------------------------------------------------------------
%							INTRODUCTION
%---------------------------------------------------------------------------------------------------
\section{Introduction}
	Eddy current testing simulation for the detection of cracks, default and deposit is a challenging research problem in non destructive evaluation, which has a major interest in many industrial applications. 
The induced eddy current created on the surface of a test-piece due to the presence of an electromagnetic field, generated by a moving probe, enables impedance measurement on the test-piece. 
The vector magnetic potential and electric scalar potential formulation is of common practice for the approximation of the eddy current solution in 3d-configuration. The finite element approximation of the eddy current direct problem can be achieved in this case with Lagrange finite element~\cite{MR2680968} instead of $\curl$-conform finite elements, while conserving continuity of the tangential component of the solution. 
	
	This work is focused on the shape reconstruction of deposits: conducting materials using time-harmonic eddy current measurements. We propose two approaches based on an efficient numerical model of the probe-defect interaction; the first consider the test-piece as a pileup of several layers and is parametrized by the cylindrical coordinates, and the second one proposes a regularization approach that smooth the descent direction of the inverse algorithm.  Both methods are potentially capable of treating  clogging on of the TSP.
	
	In this framework, a main issue is the ability to assess the 3D conductivity profile of the sample under test. The major difficulty encountered to achieve this aim is the non-linearity and ill-posedness of eddy currents inverse scattering models.
	The numerical method here proposed has been developed in order to treat efficiently
the typical situation encountered in eddy current testing where clogging are present in
an a priori known sub-region. The anomalies perturb locally the induced eddy current and therefore it is efficient to assume as
unknown the impedance measurement as the difference between the healthy signal and the one produced with the flowed part. 
%---------------------------------------------------------------------------------------------------
%							Industrial context 
%---------------------------------------------------------------------------------------------------

\section{The industrial Problem}

Steam generators (SGs) are critical components in nuclear power plants. Heat produced in a nuclear reactor core is transferred as pressurized water of high temperature via the primary coolant loop into a SG, consisting of tubes in U-shape, and boils coolant water in the secondary circuit on the shell side of the tubes into steam. This steam is then delivered to the turbine generating electrical power. The SG tubes are hold by the broached quatrefoil tube support plates (TSP) with flow paths between tubes and plates for the coolant circuit, see Fig~\ref{quadrifolTSP}. 
\begin{figure}[!htbp]\centering
  \begin{minipage}[c]{0.45\linewidth}\includegraphics[scale=.3]{./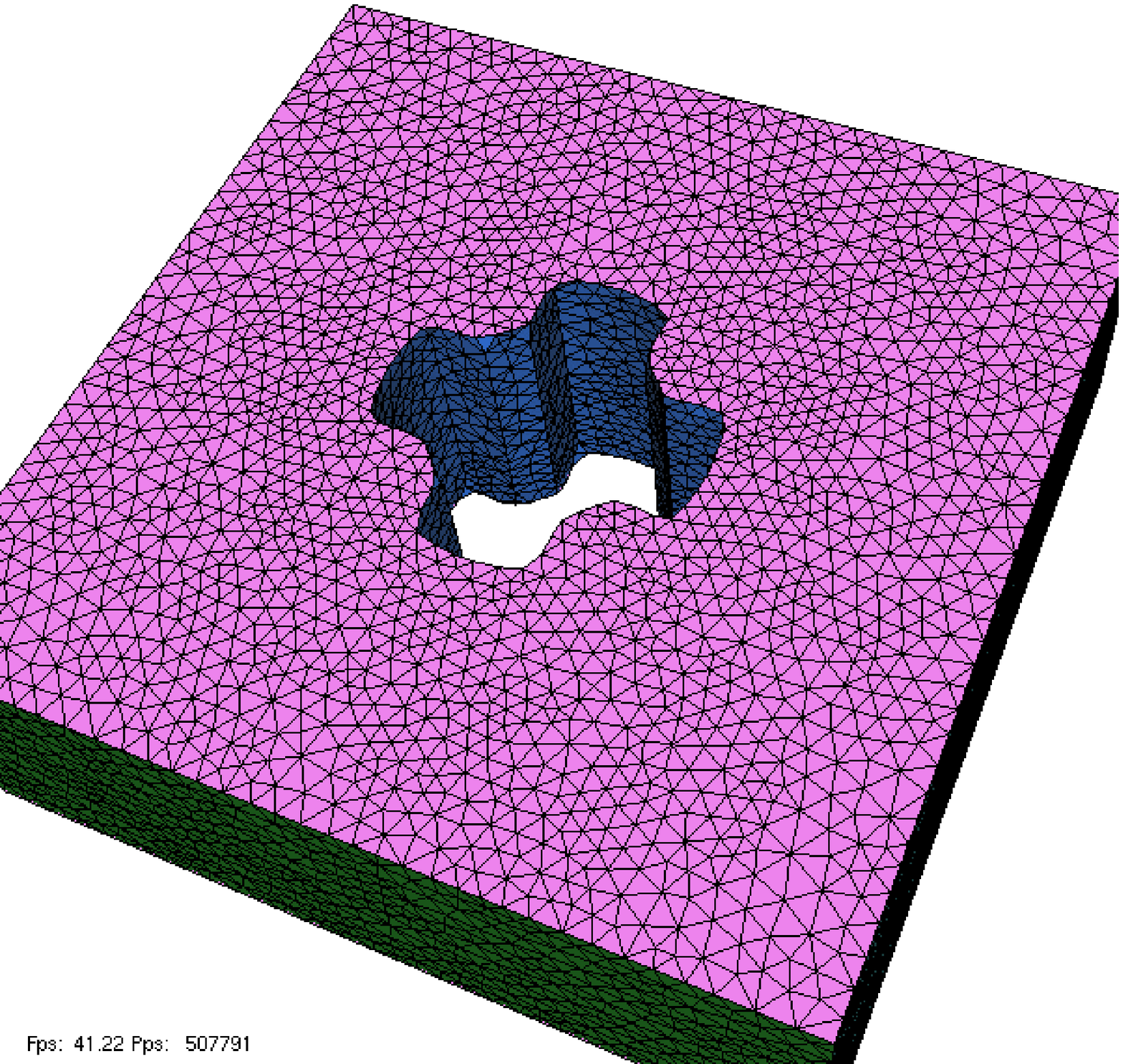}  \end{minipage}
  \begin{minipage}[c]{0.45\linewidth}\includegraphics[scale=.3]{./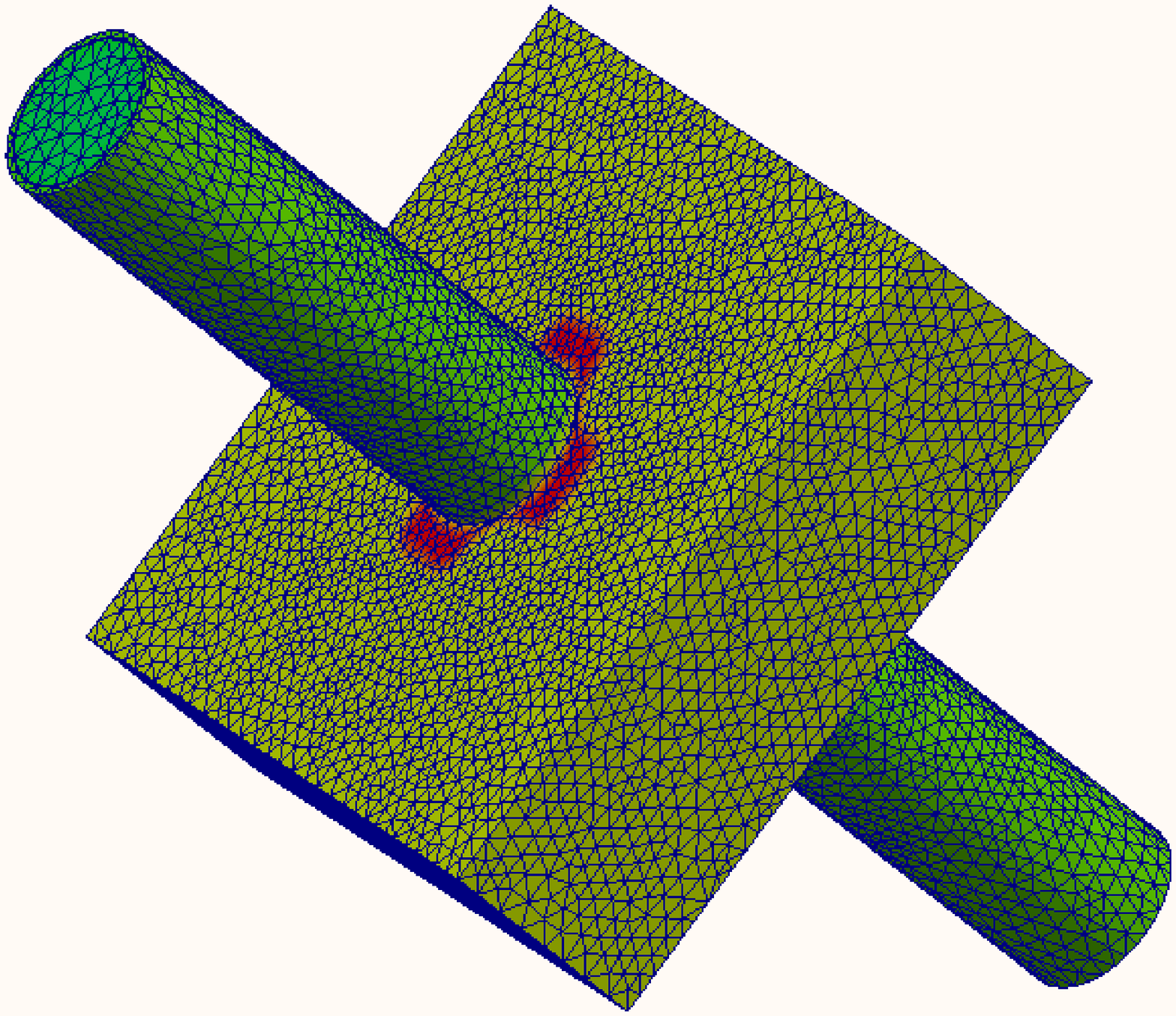}  \end{minipage}
\caption{The quatrefoil TSP is a conductive material (left) with the tube  also conductor material (right), which is 100\% clogged by magnetite (presented with red color).}\label{quadrifolTSP}
\end{figure}
	Due to the impurity of the coolant water in the secondary circuit, conductive magnetic deposits are observed on the shell side of the U-tubes, usually at the level of the quatrefoil TSP after a long-term exploitation of the SGs. Theses deposits could, by clogging the flow paths of coolant circuit between the tubes and the support plates, reduce the power productivity and even harm the structure safety. Without disassembling the SG, the lower part of the tubes -- which is very long -- is inaccessible for normal inspections. Therefore, a non-destructive testing procedure, called eddy current testing (ECT), is widely practiced in industry to detect the presence of defects, such as cracks, flaws, inclusions and deposits.
	
	  Eddy currents are created through a process called electromagnetic induction. When alternating current is applied to the conductor, such as copper wire, a magnetic field develops in and around the conductor (depending on its conductivity). This magnetic field expands as the alternating current rises to maximum and collapses as the current is reduced to zero. If another electrical conductor is brought into the close proximity to this changing magnetic field, current will be induced in this second conductor. Eddy currents are induced electrical currents that flow in a circular path. They get their name from ``eddies'' that are formed when a liquid or gas flows in a circular path around obstacles when conditions are right.
\begin{figure}[!htbp]\centering
\includegraphics[scale=.2]{./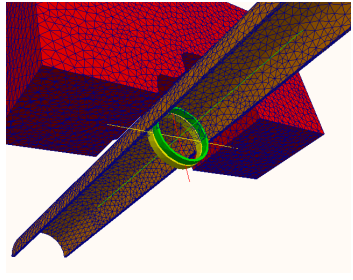}
\caption{Probing the TSP in steam generator.}\label{probingTSP}
\end{figure}

	In the ECT of steam generator, one introduces a probe consisting of two copper wire coils in the tube see Fig.~\ref{probingTSP}. Each of these coils is connected to a current generator producing an alternating current and to a voltmeter measuring the voltage change across the coil. One of the coils is excited by its current generator to create a primary electromagnetic field which in turn induces a current flow -- the eddy current -- in the conductive material nearby, such as the tube and the conducting support plates. Given the deposit-free case as background information, the presence of conducting deposits distorts the eddy current flow and leads to a current change in the two coils, which is measured by the linked voltmeters in terms of impedance. This measurement is called ECT signal that we use to identify the deposits. 
\begin{equation}\label{industparams}
\left.\begin{array}{|c|c|c|c|}
  \hline\text{Paramertes}\slash \text{Domain} & \text{TSP} & \text{Tube} & \text{Deposit}
\\\hline\text{Conductivity} \sigma & 1.75.e+06 & 0.97.e+06 & 61 
\\\hline\text{Relative permeability}\mu_r & 70 & 1.01 & 1.64
\\\hline\end{array}\right.
\end{equation}

%---------------------------------------------------------------------------------------------------
%							THE DIRECT PROBLEM
%---------------------------------------------------------------------------------------------------
\section{Direct and inverse numerical models and schemes}

Time harmonic Maxwell's equations for the electric field $\E$ and the magnetic field $\H$ on a domain $\Om=\cup_\ell\Om_\ell$ ($i=0$ for vacuum, $\ell=t$ for tube, $\ell=d$ for deposit and/or flaw and $\ell=p$ for plates) reads:

\begin{equation}\label{maxwell}
\begin{array}{ll}
\curl\H+(i\omega\epsilon_\ell-\sigma_\ell)\E=\J&\text{on } \cup_\ell\Omega_\ell:\text{ Maxwell-Amp\`ere},\\
\curl\E - i\omega\mu_\ell\H = 0.&\text{on } \cup_\ell\Omega_\ell:\text{ Maxwell-Faraday}.
\end{array}
\end{equation}

The Eddy Current model assume: $$\epsilon_i+\dfrac{i\sigma}{\omega}\cong\frac{i\sigma}{\omega},$$
which is the case with low frequency model e.g $\omega\epsilon\approx \text{55.78e-7}<<\sigma_\ell$. We are therefore concerned with the following equations:
\begin{equation}\label{MAMF}
\begin{cases}
\curl\H-\sigma_\ell \E=\J&\text{on } \cup_\ell\Omega_\ell:\text{ Maxwell-Amp\`ere},\\
\curl\E - i\omega\mu_\ell\H = 0.&\text{on } \cup_\ell\Omega_\ell:\text{ Maxwell-Faraday}.
\end{cases}
\end{equation}
 
From Eq.~\eqref{MAMF}$_1$, with the fact that $\sigma_0=0$ in $\Om_I$ we have :

$$\curl\H = \J\quad\text{on }\Om_I.$$
The current density $\J$ is required to be divergence free and uniformly distributed on a support included in $\Om_I$ (principally it models the solenoid source coil of the present problem). So, let $\J\in (L^2(\Om))^3$  with $\grad\J=0$ in $\Om_I$ as $\J:=:J_0[-r^{-1}y_{|_{\Om_I}},r^{-1}x_{|_{\Om_I}},0]_{\pm}$ ($J_0=|\J|$), where $r=\sqrt{x^2+y^2}$. 

Applying the divergence operator $\nabla\p$ on the Maxwell-Faraday equation Eq.~\eqref{MAMF}$_2$ we obtain:
\begin{equation}
\grad(\mu_\ell\H) = 0\text{ on }\cup_\ell\Om_\ell.
\end{equation}
For computational reasons we are obliged to limit the computational domain $\Om$ with an artificial surface $\partial\Om$, where we have to take into account the continuity of the tangential component of the magnetic fields $\H$ solution of the above equations. It is thus necessary to impose the boundary condition:
\begin{equation}
\H\times\bnu = 0 \text{ on } \partial\Om,
\end{equation}
in order to describe a perfect conductivity. 
In addition, we point out the fact that we have to consider the behavior of the magnetic field at the interface   limiting the insulator part $\Om_I$ and the conductor part $\Om_c$:
\begin{equation}\label{Hcontinuity}
\begin{cases}
\mu_0\H\,.\,\bnu_I  + \mu_c\H\p\bnu_c =0 &\text{ on } \partial\Om_I\cap\partial\Om_c, \\
\mu_0\H\times\bnu_I  + \mu_c\H\times\bnu_c =0 &\text{ on } \partial\Om_I\cap\partial\Om_c. 
\end{cases}
\end{equation}
In the above $\bnu_I$ and $\bnu_c$ are outward normal from the insulator part and the conductor part respectively. 
\subsection{Setting up the direct solver}
In this subsection, we will develop the setting of our direct solver that is based on the $\A\!-\!\V$ mixed formulation as magnetic vector potential $\A$ and a scalar electric potential $\V$ such that:
\begin{equation}\label{defA}
\begin{cases}
\E = i\omega\A +  \V &\text{ on } \Om,\\
\mu_\ell\H= \curl \A &\text{ on }\cup_\ell\Om_\ell,
\end{cases}
\end{equation}
where $\V$ is uniquely defined on the conductive $\Om_c$. Following the aboves equations, it is immediate that $\curl \E=i\omega\curl\A=i\omega\mu_\ell\H$ in $\Om$, hence Eq.~\eqref{MAMF}$_2$ is satisfied. Furthermore, $\mu\H$ is a solenoidal vector field in all the computational domain $\Om$ thus from the divergence theorem we have:
$$\int_\Om \grad\p(\mu_l\H) \dv=\int_{\Om_c\cup\Om_c} \grad\p(\curl\A) \dv= 0.$$
It yields that: 
\begin{eqnarray*}
\int_{\Om_c} \grad\p(\mu\H) \dv &=& \int_{\Om_I}\grad\p(\mu\H)\dv.\\
	\int_{\partial\Omega\cup\big(\partial\Om_I\cap\partial\Om_c\big)} \mu_c\H\p\bnu_c \,ds&=&	\int_{\partial\Omega\cup\big(\partial\Om_I\cap\partial\Om_c\big)}\mu_I\H\p\bnu_c \,ds.
\end{eqnarray*}
We obtain finally satisfaction of the Eq.~\eqref{Hcontinuity}$_1$.

In order to avoid singular system and make well-posed problem in the sense of the magnetic potential vector $\A$ is unique, it is classical and necessary to impose additional conditions, known as Coulomb gauge conditions
\begin{equation}\label{CGCdiv}\grad\p\A=0\text{ in } \Om,
\end{equation}
with the boundary condition $\A\p\bnu = 0$ on $\partial\Om$.

Let us go back to Maxwell-Amp\`ere equation~\eqref{MAMF}$_1$, by applying the divergence we obtain 
$$
-\grad\p(\sigma_c\E) = \grad\p\J \quad\text{ on } \Om_c,
$$
Where in the weak formulation, after an integration by part ,$\forall \varphi\in H^1(\Omega_c)$ we obtain:
 $$\int_{\Om_c}\sigma_c\E\p\grad\varphi-\int_{\partial\Om\cup\big(\partial\Om_I\cap\partial\Om_c\big)} \sigma_c\bnu\p\E\varphi\,ds=-\int_{\Om_c}\J\p\grad\varphi\,\dv +\int_{\partial\Om\cup\big(\partial\Om_I\cap\partial\Om_c\big)}\bnu\p\J\,.\varphi ds.$$
 By identification of integrals and using the fact that $\E=i\omega\A+\V$ in $\Om_c$ we obtain:
 \begin{eqnarray*}
 \int_{\Om_c}\sigma_c\big(i\omega\A+\V\big)\p\,\grad\varphi \,\dv&=&-\int_{\Om_c}\J\p\grad\varphi\,\dv\\
-\int_{\partial\Om\cup\big(\partial\Om_I\cap\partial\Om_c\big)} \sigma_c\big(i\omega\A+\V\big)\p\bnu\,.\varphi\,ds&=&\int_{\partial\Om\cup\big(\partial\Om_I\cap\partial\Om_c\big)}\J\p\bnu\,.\varphi ds.
 \end{eqnarray*}
 Hence, it is necessary to include the equations below as an additional constraints to the new problem that has as unknown the magnetic vector potential $\A$ and the scalar electric potential $\V$. The introduction of a gauge on the vector magnetic field $\A$ leads to a differential constraint; $\dvg\A=0$ in $\Omega$. A classical technique incorporates this constraint using a penalization term 
 $$
- \dfrac 1{\tilde\mu}\grad\dvg\A,
 $$
 in the Amp\`ere equation, where $\tilde\mu$ is a suitable average of $\mu$ in $\Om$.

 We are now in a good position to introduce the complete ($\A,\V$) strong formulation of our problem. It reads:
 \begin{equation}\label{strongPbAV}
 \begin{cases}
 \curl\big( \dfrac 1 \mu\curl \A\big) -\dfrac 1 {\tilde\mu}\grad\dvg\A-\sigma i\omega\A - \sigma\V = \J &\text{ on }\Om,\\
\dvg \big( i\omega\sigma\A+\sigma\V \big) = \dvg\J =0&\text{ on }\Om_c,\\
\big( \sigma i\omega\A+\sigma\V\big)\p\bnu = \J\p\bnu&\text{ on }\partial\Om_I\cap\partial\Om_c,\\
\A\p\bnu = 0 &\text{ on }\partial\Om,\\
\big(\dfrac{1}{\mu}\curl\A\big)\times\bnu = 0 &\text{ on }\partial\Om,
 \end{cases}
 \end{equation}
where $\V$ is determined up to an additive constant. We may thus (numerically) make a supplement condition such that
$$
\int_{\Om_{C_i}} \V \dv = 0 , \Om_{C_i} \text{ is any connex component subset of } \Om_c.
$$
This also could be incorporated under the global problem by penalization 
$$
\delta\tilde\sigma\V, \text{ in } \Om_c, \text{ with a small } \delta<<1.
$$  
%\begin{remark}-----------\end{remark}
Consider the space $H(\textbf{curl},\Omega)\cap H_0(\textbf{div},\Omega)$.
where 
$$H(\textbf{curl};\Omega):=\{u\in (L^2(\Omega)\big)^3 \,|\, \curl u\in(L^2(\Omega)\big)^3 \},$$
and 
$$
H(\textbf{div};\Omega):=\{ u\in(L^2(\Omega)\big)^3\,|\, \nabla\p u\in L^2(\Omega) \},
$$
also we have 
$$
H_0(\textbf{div};\Omega):=\{u\in H(\textbf{div};\Omega)\, |\, u\p\bnu_{|\partial\Omega}=0\}.
$$
Let us take test functions $\Phi\in H(\textbf{curl};\Omega)\cap H_0(\textbf{div};\Omega)$ and $\varphi\in H^1(\Omega_c)$ for the Eq.~\eqref{strongPbAV}$_1$ and the Eq.~\eqref{strongPbAV}$_2$ respectively. After integration by part we obtain the following weak formulations:
\begin{equation}\label{varf}
\begin{cases}
\displaystyle\int_{\Omega}\dfrac{1}{\mu}\curl\A\p\curl\overline{\Phi} \dv +\dfrac{1}{\tilde\mu}\int_{\Omega}\dvg\A\dvg\overline{\Phi} \dv
	- \int_{\Omega_c}\sigma(i\omega\A+\V)\p\overline{\Phi} \dv
	&= \int_{\Omega}\J\p\overline{\Phi} \dv.\\
\displaystyle\int_{\Omega_c}\sigma\big(i\omega\A+\V\big)\p\overline{\nabla\varphi} \dv&= \int_{\Omega_c}\J\p\overline{\nabla\varphi} \dv.
\end{cases}
\end{equation}

 Let us denote by $\tilde\varphi$ the solution of the Neumann problem:
\begin{equation*}\begin{cases}
 \Delta \tilde\varphi = \dvg\A \quad &\text{ on }\Omega,\\
 \nabla\tilde\varphi.\bnu = 0 \quad &\text{ on }\partial\Omega.
 \end{cases}\end{equation*} 
 
We have first $\nabla\tilde\varphi\in H(\textbf{curl};\Omega)\cap H_0(\textbf{div};\Omega)$ and $\tilde\varphi_{|_{\Omega_c}}\in H^{1}(\Omega_c)$. Therefore using $(\nabla\tilde\varphi,\tilde\varphi_{|_{\Omega_c}})$ as a test functions for Eq.~\eqref{strongPbAV}$_1$ and the Eq.~\eqref{strongPbAV}$_2$, we obtain immediately: 
\begin{equation*}
\int_{\Omega}\nabla\dvg\A\p\overline{\nabla\tilde\varphi} \dv = 0 =-\int_{\Omega}\dvg\A\p\overline{\dvg\nabla\tilde\varphi} \dv - \int_{\partial\Omega}\dfrac{\partial\tilde\varphi}{\partial \bnu}\overline{\dvg\A} \ds.
\end{equation*}
Thus
\begin{equation*}
\int_{\Omega}\dvg\A\p \overline{\Delta\tilde\varphi} \dv = \int_{\Omega}|\dvg\A|^2=0.
\end{equation*}
The solution of Eq.~\eqref{strongPbAV} satisfies the gauge condition~\eqref{CGCdiv}.

Obviously $(C^\infty_0(\Omega))^3\subset H(\textbf{curl};\Omega)\cap H_0(\textbf{div};\Omega)$, hence let $\Phi_1$ be a solution belonging to $(C^\infty_0(\Omega))^3$ taken as a test function. It yelds after integration by part the following variational formulation:
\begin{equation*}
\int_{\Omega}\dfrac{1}{\mu}\curl\A\p \overline{\curl\Phi}_1 \dv -\int_{\Omega_c}\big( \sigma i\omega\A+\sigma\V \big)\p\overline{\Phi}_1\dv= \int_{\Omega}\J\p\overline{\Phi}_1\dv .\\
\end{equation*}
We apply the same procedure taken as test function; $\Phi_2$ as any function that  belongs to $H(\textbf{curl};\Omega)$, we obtain:
\begin{equation*}
\int_{\Omega}\dfrac{1}{\mu}\curl\A\p\curl\overline{\Phi}_2 \dv -\int_{\Omega_c} \sigma\big( i\omega\A+\V\big)\p\overline{\Phi}_2\dv+\int_{\partial\Omega}\big(\bnu\times\dfrac{1}{\mu}\curl\A\big)\p\overline{\Phi}_2 \dv= \int_{\Omega}\J\p\overline{\Phi}_2 \dv.\\
\end{equation*}
When subtracting the tow equations above we conclude that 
\begin{equation*}
 \dfrac{1}{\mu}\curl\A \times\bnu = 0, \text{ on }\partial\Om   \text{ weakly in } H(\textbf{curl};\Omega)\cap H_0(\textbf{div};\Omega).
\end{equation*}
Finally the boundary condition~\eqref{strongPbAV}$_5$ is satisfied. 
The strong formulation Eqs.~\eqref{strongPbAV} of the eddy-current problem is well defined.
	
	In the following we will give a suitable well-posed weak variational formulation~\cite{MR2680968} that link its solution to the solution of the strong formulation Eq.~\eqref{strongPbAV}.   
We multiply Eq.~\eqref{varf}$_{2}$ by $\dfrac{-1}{i\omega}$ to obtain : 
\begin{equation*}
\displaystyle\dfrac{-1}{i\omega}\int_{\Omega_c}\sigma\big(i\omega\A+\V\big)\p\overline{\nabla\varphi} \dv= \dfrac{-1}{i\omega}\int_{\Omega_c}\J\p\overline{\nabla\varphi} \dv.
\end{equation*}	
and couple this with Eq.~\eqref{varf}$_1$ in a single mixed weak variational formulation, which writes:
\begin{align*}
&\int_{\Omega}\dfrac{1}{\mu}\curl\A\p\overline{\curl\Phi} \dv +\dfrac{1}{\tilde\mu}\int_{\Omega}\dvg\A\overline{\dvg\Phi} \dv
	- \dfrac{1}{i\omega}\int_{\Omega_c}\sigma(i\omega\A+\V)\p (i\omega\overline\Phi+\overline{\grad\varphi}) \dv \\
&=\int_{\Omega}\J\p\overline{\Phi} \dv - \dfrac{1}{i\omega}\int_{\Omega_c}\J\p\overline{\nabla\varphi} \dv.
\end{align*}
	For reason of simplicity and abbreviation, we define the sesquilinear form $\mathcal{L}(\A,\Vv,\Phi,\varphi)$ as the right-hand side of the above, which writes:
\begin{equation}\label{sesquil}\displaystyle
\mathcal{L}\big(\A,\Vv;\Phi,\varphi\big):=\int_{\Omega}\dfrac{1}{\mu}\curl\A\p\overline{\curl\Phi} \dv +\dfrac{1}{\tilde\mu}\int_{\Omega}\dvg\A\overline{\dvg\Phi} \dv
	- \dfrac{1}{i\omega}\int_{\Omega_c}\sigma(i\omega\A+\V)\p (i\omega\overline{\Phi}+\overline{\grad\varphi}) \dv .
\end{equation}

\subsection{Models for highly conductive parts}
	The TSP have a very high conductivity $\sigma_\text{p}$ as compered with the tube .i.e $\sigma_\text{t}$ and the corresponding skin depth is then very small. Taking into account the effect of TSP using the 3D model, described above, which requires a very thin mesh size (proportional to the skin depth) inside TSP and leads to a huge size of the discrete 3D problem. We hereafter explain how one can avoid meshing the volume of TSP by imposing appropriate impedance boundary condition (IBC) on its boundary. More precisely it is shown in~\cite{Durufle2006533} that electromagnetic field satisfies 
 \begin{equation}\label{IBC}
\bnu\times\H = -\dfrac{1}{\mathcal{Z}_{\Gamma_{p}}}\E_T,  \quad\text{ on } \Gamma_{p}.
\end{equation}
(up to $O(\delta^2)$) on $\Gamma_{p}$. In the above equation; $\mathcal{Z}_{\Gamma_{p}}:=\dfrac{1-i}{\delta\sigma_p}$ with the skin depth $\delta:=\sqrt{\frac{2}{\omega\mu\sigma_p}}$ and the tangential component of the electric field $\E_T=\bnu\times\big( \E\times\bnu\big)$.
 Therefore if $\delta$ is sufficiently small i.e. $\omega\sigma_\text{p}\mu$ is sufficiently large Eq.~\eqref{IBC} is a very good approximation.

 Taking into account the definition given at Eq.~\eqref{defA} we can express the above boundary condition with the magnetic vector potential and the scalar electric potential as  :

\begin{equation}\label{ROTA}
\bnu\times(\dfrac{1}{\mu_p}\curl\A )=  -\dfrac{1}{ \mathcal{Z}_{\Gamma_{p}}}\big(i\omega\As+\nabla_{\Gamma}\V \big), \quad\text{ on } \Gamma_{p}.
\end{equation}

where, $\As:=\bnu\times\big(\A\times\bnu\big)$ and $\dvgS:=\bnu\times\big(\V|_{\Gamma}\times\bnu\big)$ are the surface traces of the magnetic vector potential and the gradient of the scalar potential over the $\Gamma_p$ manifold.

	 In addition, the normal component given in the Eq.~\eqref{MAMF}$_1$ at the impedance surface reads: $\curl\H\p\bnu = \sigma_p\E\p\bnu + \J\p\bnu$, which we reformulate taking into account the IBC condition at Eq.~\eqref{IBC}. We have therefore on $\Gamma_p$
\begin{eqnarray*}
 \sigma_p\E\p\bnu + \J\p\bnu&=&-\dvg(\bnu\times\H),\\ 
&=&\dvgS\Big(\dfrac{1}{\mathcal{Z}_{\Gamma_{p}}}\big( \bnu\times\big( \E\times\bnu\big) \Big),\\
&=&\dfrac{1}{\mathcal{Z}_{\Gamma_{p}}} \dvgS\E_T.
\end{eqnarray*}
%Where we used the fact that $div_{\Gamma}=$.
Consequently:
\begin{equation}\label{NORM}
 \sigma_p\big(i\omega\A+\V \big)\p\bnu  = \dfrac{1}{\mathcal{Z}_{\Gamma_{p}}} \dvgS\big(i\omega\As+\Vs),\quad\text{ on } \Gamma_{p}.
\end{equation}
having used Eq.~\eqref{defA} and the fact that $\J=0$ is not supported in the TSP.

	The impedance surface term of the weak formulation of Eq.~\eqref{strongPbAV} at the interface $\Gamma$ of the TSP writes:
\begin{eqnarray}\label{ibcROTA}
\int_{\Gamma_{p}} \big(\bnu\times(\frac{1}{\mu_p}\curl\A)\big)\p\overline{\PhiS} \ds_{p}&=& -\dfrac{1}{\mathcal{Z}_{\Gamma_{p}}}\int_{\Gamma_{p}}(i\omega\As+\Vs)\p\overline{\PhiS} \ds_p,\\
\label{ibcNORM}
\int_{\Gamma_{p}}\sigma_p(i\omega\A+\V)\p\bnu\,\overline{\varphi} \ds_{p} &=&-\dfrac{1}{\mathcal{Z}_{\Gamma_{p}}}\int_{\Gamma_{p}}(i\omega\As+\Vs)\p\overline{\gradS\varphi} \ds_{p},
\end{eqnarray}
where~Eq.~\eqref{ibcROTA} is a direct consequence of Eq.~\eqref{ROTA} and Eq.~\eqref{ibcNORM} is consequence of Eq.~\eqref{NORM} having used integration by part.

\subsection{Impedance measurements of 3D deposit}
\begin{definition}\label{definitionDZ}
$$\DZ_{kl}=\int_{\Gamma} (\E_{l}^{0}\times\H_{k} - \E_{k}^{}\times\H_{l}^{0})\p\bnu\, d\Gamma$$
\end{definition}
\begin{lemma}
The volume impedance measured with the coil $k$ in the electromagnetic field induced by the coil $l$ writes:
\begin{eqnarray}\label{impedkl}
\displaystyle
\DZ_{kl}:=\frac{1}{i\omega|\J|}\frac{\mu_0-\mu_d}{\mu_d\mu_0}\int_{\Om_{d}}\big( \curl\E_k\p\curl\E_l^0\big)\dv 
+\frac{\sigma_d-\sigma_0}{|\J|}\int_{\Om_d}\E_k\p\E_l^0 \dv,
\end{eqnarray}
where $\E_l^0$ refers to the electric field propagating in vacuum.  
\end{lemma}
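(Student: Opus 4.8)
The plan is to start from the reciprocity-type quantity $\DZ_{kl}$ in Definition~\ref{definitionDZ} and rewrite the surface integral over $\Gamma$ as a volume integral over $\Om_d$ using integration by parts (a Green-type identity for the curl operator), exploiting that the fields $(\E_k,\H_k)$ solve the eddy current system~\eqref{MAMF} with the true material parameters $(\sigma_d,\mu_d)$ inside $\Om_d$, while $(\E_l^0,\H_l^0)$ solve the same system with the background parameters $(\sigma_0,\mu_0)$ (i.e.\ vacuum/deposit-free). First I would apply the vector Green formula
$$
\int_{\Om_d}\big(\curl\E_l^0\p\H_k-\E_l^0\p\curl\H_k\big)\dv=\int_{\partial\Om_d}(\E_l^0\times\H_k)\p\bnu\,d\Gamma,
$$
and the analogous identity with the roles of the two fields swapped, then subtract. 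Since $\partial\Om_d$ plays the role of $\Gamma$ (the deposit being compactly contained in $\Om_I\cup\Om_c$ away from the coils and from $\partial\Om$), the boundary term is exactly $\DZ_{kl}$.

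The second step is to substitute the constitutive/field equations inside $\Om_d$. From Maxwell--Faraday~\eqref{MAMF}$_2$, $\curl\E_k=i\omega\mu_d\H_k$ and $\curl\E_l^0=i\omega\mu_0\H_l^0$; from Maxwell--Amp\`ere~\eqref{MAMF}$_1$ with $\J=0$ in $\Om_d$, $\curl\H_k=\sigma_d\E_k$ and $\curl\H_l^0=\sigma_0\E_l^0$. Plugging these in, the difference of the two Green identities becomes a combination of $\int_{\Om_d}\H_k\p\H_l^0\dv$-type and $\int_{\Om_d}\E_k\p\E_l^0\dv$-type terms; expressing $\H_k=\frac{1}{i\omega\mu_d}\curl\E_k$ and $\H_l^0=\frac{1}{i\omega\mu_0}\curl\E_l^0$ converts the magnetic term into $\int_{\Om_d}\curl\E_k\p\curl\E_l^0\dv$ with the prefactor $\frac{1}{(i\omega)^2}\big(\frac{1}{\mu_d}-\frac{1}{\mu_0}\big)=-\frac{1}{\omega^2}\frac{\mu_0-\mu_d}{\mu_d\mu_0}$, and the conductive term collects into $(\sigma_d-\sigma_0)\int_{\Om_d}\E_k\p\E_l^0\dv$. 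A careful bookkeeping of the $i\omega$ factors and of which field carries $\sigma_d$ versus $\sigma_0$ yields the two contrast factors $\frac{\mu_0-\mu_d}{\mu_d\mu_0}$ and $(\sigma_d-\sigma_0)$ in~\eqref{impedkl}; the overall normalization $\frac{1}{i\omega|\J|}$ and $\frac{1}{|\J|}$ come from the scaling of the measured impedance by the source amplitude $|\J|=J_0$ (the measured ECT signal being impedance per unit driving current).

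The step I expect to be the main obstacle is the bookkeeping of signs and of the $i\omega$ powers in the two cross terms, together with justifying that no extra boundary or source contributions appear: one must check that $\J$ is supported away from $\Om_d$ (so $\curl\H=\sigma\E$ holds cleanly in $\Om_d$ and on a neighborhood of $\Gamma$), that the tangential traces used in the reciprocity pairing are the physically continuous ones across $\Gamma$, and that the deposit region is indeed disjoint from the coil supports and from $\partial\Om$ so that $\Gamma=\partial\Om_d$ is the only boundary entering the Green identity. Once the geometry is pinned down, the identity is a direct, if delicate, algebraic manipulation; the regularity needed ($\E\in H(\curl)$, piecewise-smooth interface $\Gamma$) is exactly what the weak formulation~\eqref{varf} already provides, so I would invoke that rather than reprove it.
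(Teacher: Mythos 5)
Your proposal follows essentially the same route as the paper: apply the divergence theorem (equivalently, the vector Green identity) to $\E_l^0\times\H_k-\E_k\times\H_l^0$ over $\Om_d$, substitute Maxwell--Faraday and Maxwell--Amp\`ere with the respective parameters $(\mu_d,\sigma_d)$ and $(\mu_0,\sigma_0)$, and collect the contrast terms. One small slip in your sketch: the prefactor of the $\curl\E_k\p\curl\E_l^0$ term is $\dfrac{1}{i\omega}\Big(\dfrac{1}{\mu_d}-\dfrac{1}{\mu_0}\Big)$, not $\dfrac{1}{(i\omega)^2}\Big(\dfrac{1}{\mu_d}-\dfrac{1}{\mu_0}\Big)$, since each cross term $\H\p\curl\E$ carries exactly one factor $\frac{1}{i\omega\mu}\curl\E$ (the extra $i\omega$ from $\curl\E=i\omega\mu\H$ cancels one of the two you introduced); this is precisely the bookkeeping you flagged as the delicate step, and once corrected it reproduces the paper's computation.
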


\begin{proof}
Applying the divergence theorem we have,
\begin{eqnarray}\label{dvgappendix}
\int_{\Gamma} (\E_{l}^{0}\times\H_{k} - \E_{k}^{}\times\H_{l}^{0})\p\bnu\, d\Gamma
&=&\int_{\Om_{d}} \nabla\p(\E_{l}^{0}\times\H_{k} - \E_{k}^{}\times\H_{l}^{0})\dv\\
&=&\int_{\Om_{d}} \nabla\p(\E_{l}^{0}\times\H_{k} )\,\dv-\int_{\Om_{d}}\nabla\p( \E_{k}^{}\times\H_{l}^{0})\dv.\nonumber
\end{eqnarray}

Using Eq.~\eqref{maxwell}, in one hand we have:
\begin{eqnarray}\label{El0Hk}
\int_{\Om_{d}} \nabla\p(\E_{l}^{0}\times\H_{k} )\,\dv
&=&\int_{\Om_{d}} \big(\H_{k}\p\curl\E_{l}^{0}-\E_{l}^{0}\p\curl\H_{k}\big)\p\dv\\
&=&\frac{\mu_0}{i\omega\mu_{d}\mu_0}\int_{\Om_{d}} \curl\E_k\p\curl\E_l^0\dv + (i\omega\epsilon-\sigma_d)\int_{\Om_{d}}\E_l^0\p\E_k\dv\nonumber
\end{eqnarray}
and in the other hand:
\begin{eqnarray}\label{EkHl0}
\int_{\Om_{d}} \nabla\p(\E_{k}\times\H_{l}^0 )\,\dv
&=&\int_{\Om_{d}} \big(\H_{l}^0\p\curl\E_{k}^{}-\E_{k}^{}\p\curl\H_{l}^{0}\big)\p\dv\\
&=&\frac{\mu_d}{i\omega\mu_d\mu_0}\int_{\Om_{d}} \curl\E_l^0\p\curl\E_k\dv  + (i\omega\epsilon-\sigma_0)\int_{\Om_{d}}\E_k\p\E_l^0 \dv.\nonumber
\end{eqnarray}
We substitute Eqs~\eqref{El0Hk}-\eqref{EkHl0} in Eq.~\eqref{dvgappendix}, we obtain:
\begin{equation}\label{DZkl}
|\J|\DZ_{kl}= \frac{\mu_0-\mu_d}{i\omega\mu_d\mu_0}\int_{\Om_{d}}\big( \curl\E_k\p\curl\E_l^0\big)\dv +(\sigma_0-\sigma_d)\int_{\Om_d}\E_k\p\E_l^0 \dv.
\end{equation}
\end{proof}
Let us now gives the involving formulation of impedance measurements signals:
\begin{equation}\label{impedmod}
\begin{cases}
{\bf Z}_{FA} = \frac{i}{2}\big(\DZ_{11}+\DZ_{12}\big) \quad\text{ absolute mode},\\
{\bf Z}_{F3} =\frac{i}{2} ((\DZ_{11}-\DZ_{22}\big) \quad\text{ differential mode}.
\end{cases}
\end{equation}
Taking into account the definition of the impedance given at Eq.~\eqref{impedkl},  Therefore formulas at Eq.~\eqref{impedmod} read:
\begin{eqnarray*}
{\bf Z}_{FA}&=&\frac{1}{2\omega|\J|}\frac{\mu_0-\mu_d}{\mu_d\mu_0}\int_{\Om_d}\curl\E_1 \p\curl\big(\E_1^0+\E_2^0\big)\dv \\
&&+ \frac i2\frac {\sigma_0-\sigma_d}{|\J|}\int_{\Om_d}\E_1 \p\big(\E_1^0+E_2^0\big)\dv.
\end{eqnarray*}
and 
\begin{eqnarray*}
{\bf Z}_{F3}&=&\frac{1}{2\omega|\J|}\frac{\mu_0-\mu_d}{\mu_d\mu_0}\int_{\Om_d}\curl\E_1\p\curl\E_1^0-\curl\E_2\p\curl\E_2^0\dv\\
&&+\frac i2\frac {\sigma_0-\sigma_d}{|\J|}\int_{\Om_d}\E_1\p\E_1^0-\E_2\p\E_2^0\dv.
\end{eqnarray*}
\subsection{Impedance signals for IBC models}

\begin{lemma}The surface impedance measure taking into account the impedance boundary condition at Eq.~\eqref{IBC} writes: 
\begin{equation}\label{}
\DZ_{kl}=\int_{\Gamma}(\H_k\times\bnu)\p\Big(\mathcal{Z}_{\Gamma}\H^0_l\times\bnu - \E^0_{l,T} \Big) \,d\Gamma,
\end{equation}
as well
 \begin{equation}\label{}
\DZ_{kl}=\int_{\Gamma_d}\E_{k,T}\p\big( \H_{l}^{0}\times\bnu-\dfrac{1}{\mathcal{Z}_{\Gamma}} \E_{l,T}^{0}\big)\,d\Gamma.
\end{equation}

\end{lemma}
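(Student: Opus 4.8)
The plan is to argue directly from Definition~\ref{definitionDZ}. In contrast with the volume model, $\DZ_{kl}$ is there already a surface integral over $\Gamma$, so no divergence theorem or integration by parts is needed; the only two ingredients will be the scalar triple product identity and the impedance boundary condition~\eqref{IBC} satisfied by the perturbed field $(\E_k,\H_k)$ on $\Gamma$.

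First I would recast the integrand of Definition~\ref{definitionDZ} in terms of tangential traces. For smooth vector fields $\mathbf{a},\mathbf{b}$ one has $(\mathbf{a}\times\mathbf{b})\p\bnu=\mathbf{a}\p(\mathbf{b}\times\bnu)$, and since $\mathbf{b}\times\bnu$ is tangential only the tangential part of $\mathbf{a}$ contributes. Applying this to the two terms in the definition gives
\[
\DZ_{kl}=\int_{\Gamma}\E^0_{l,T}\p(\H_k\times\bnu)\,d\Gamma-\int_{\Gamma}\E_{k,T}\p(\H^0_l\times\bnu)\,d\Gamma.
\]
Next I would substitute~\eqref{IBC} written for the perturbed field, namely $\E_{k,T}=\mathcal{Z}_{\Gamma}(\H_k\times\bnu)$, equivalently $\H_k\times\bnu=\mathcal{Z}_{\Gamma}^{-1}\E_{k,T}$. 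For the first identity one replaces $\E_{k,T}$ in the second integral, so that its integrand becomes $\mathcal{Z}_{\Gamma}(\H_k\times\bnu)\p(\H^0_l\times\bnu)$; then $(\H_k\times\bnu)$ is a common factor and one reads off
\[
\DZ_{kl}=\int_{\Gamma}(\H_k\times\bnu)\p\big(\mathcal{Z}_{\Gamma}\H^0_l\times\bnu-\E^0_{l,T}\big)\,d\Gamma,
\]
up to the normal orientation convention. For the second identity one instead uses $\H_k\times\bnu=\mathcal{Z}_{\Gamma}^{-1}\E_{k,T}$ inside the first integral, so that $\E_{k,T}$ becomes the common factor and the bracket $\big(\H^0_l\times\bnu-\mathcal{Z}_{\Gamma}^{-1}\E^0_{l,T}\big)$ appears. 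It is worth noting that only the perturbed pair is subject to~\eqref{IBC}: the background pair $(\E^0_l,\H^0_l)$ is simply the deposit-free field restricted to $\Gamma$, and it is precisely its failure to satisfy the impedance relation that keeps the bracket nonzero, so that $\DZ_{kl}$ appears as the impedance residual of the background field tested against the perturbed tangential trace.

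The main obstacle is purely one of sign and orientation bookkeeping: one must fix $\bnu$ as the unit normal pointing outward from the deposit, consistently with the derivation of~\eqref{IBC}, handle the identity $\H\times\bnu=-\bnu\times\H$ carefully when moving between the two equivalent forms of~\eqref{IBC}, and be sure never to apply the impedance relation to $(\E^0_l,\H^0_l)$. Beyond these elementary manipulations on $\Gamma$ there is no analytic difficulty.
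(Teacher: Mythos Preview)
Your proposal is correct and follows essentially the same route as the paper: start from Definition~\ref{definitionDZ}, use the scalar triple product to rewrite the integrand with tangential traces, then substitute the impedance relation~\eqref{IBC} for the perturbed field $(\E_k,\H_k)$ and factor out either $\H_k\times\bnu$ or $\E_{k,T}$. The paper's proof does exactly this (arriving at the intermediate form $\int_{\Gamma}\E_{k,T}\p(\H^0_l\times\bnu)-\E^0_{l,T}\p(\H_k\times\bnu)\,d\Gamma$ before substituting), and your remark that only the perturbed pair satisfies~\eqref{IBC} while the background pair does not is precisely the point that makes the residual nonzero; your caveat about sign and orientation bookkeeping is well placed, since the paper's own chain of equalities carries an overall sign flip at the first rewriting step.
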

\begin{proof}
Using the surface integral definition of the impedance $\DZ$ at Eq. \eqref{definitionDZ} we have
\begin{eqnarray}
\int_{\Gamma} (\E_{l}^{0}\times\H_{k} - \E_{k}^{}\times\H_{l}^{0})\p\bnu\, d\Gamma &=&\int_{\Gamma}\bnu\p(\E_k\times\H^0_l)-\bnu\p(\E^0_l\times\H_k)\,d\Gamma\nonumber
\\&=&\int_{\Gamma} \H^0_l\p(\bnu\times\E_k)-\H_k\p(\bnu\times\E_l^0)\,d\Gamma\nonumber
\\&=&\int_{\Gamma}\E_{k,T}\p(\H^0_l\times\bnu) - \E^0_{l,T}\p(\H_k\times\bnu)\label{ApassB}
\\&=&\int_{\Gamma}(\mathcal{Z}_{\Gamma}\H_k\times\bnu)\p(\H^0_l\times\bnu)-\E^0_{l,T}\p(\H_k\times\bnu)\,d\Gamma\label{passB}
\\&=&\int_{\Gamma}(\H_k\times\bnu)\p\Big(\mathcal{Z}_{\Gamma}\H^0_l\times\bnu - \E^0_{l,T} \Big) \,d\Gamma\nonumber.
\end{eqnarray}
 Having used Eq.~\eqref{IBC} to write Eq.~\eqref{passB} from Eq.~\eqref{ApassB}. It is worth noting that different interpretation gives en equivalent formulation of the impedance 
$
\DZ_{kl}=\int_{\Gamma_d}\big( \bnu\times\H_{l}^{0}-\dfrac{1}{\mathcal{Z}_{\Gamma}} \E_{l,T}^{0}\big)\p\E_{k,T} \,d\Gamma.
$
 when at Eq \eqref{ApassB} we replace $\H_k\times\bnu$ instead of $\E_{k,T}$ according to the IBC at Eq.~\eqref{IBC}.
\end{proof}

%---------------------------------------------------------------------------------------------------
%							THE INVERSE PROBLEM
%---------------------------------------------------------------------------------------------------
\section{The inverse problem by shape sensitivity analysis}
The inverse problem is a shape optimization problem that aims at finding the shape of a conductive domain addressed by the electromagnetic field due to the presence of probes. The optimization problem consists in minimizing a least squared impedance signal gaps, which is the difference between a computed signal and a measured one. Obviously the minimization is based on a gradient evaluation of the shape function. Because of the non linearity of the signal in regards to the shaped domain (where the impedance signal is measured), we use adjoint state to explicitly evaluate the shape gradient. 

\begin{proposition}
Consider a continuously differentiable, bijective map $T_s:\mathbb{R}^3\longrightarrow\mathbb{R}^3$ that transform $\Omega$ to $T_s(\Omega):=\Omega^s$. For any vector field $\mathcal{B}$ belongs to $H(\text{curl},\Omega)$ the following equalities hold on $H(\text{curl},\Omega)$
\begin{equation}
\big(\curl \mathcal{B}\big)\circ T_{s} =\dfrac{DT_{s}}{det(DT_{s})} \curl\big(DT_{s}^t\mathcal{B}\circ T_{s}\big),
\end{equation}
Whereas for $\mathcal{B}$ belonging to $H(\dvg,\Omega)$ following equality holds on $H(\dvg,\Omega)$
\begin{equation}
(\dvg\mathcal{B})\circ T_{s} = \dfrac{1}{det(DT_{s})}\dvg\big(det(DT_{s})DT_{s}^{-1}\mathcal{B}\circ T_{s}\big).
\end{equation}
And for all $\varphi\in H^1(\Omega)$
\begin{equation}
(\grad\varphi)\circ T_s = DT_s^{-t}\grad(\varphi\circ T_s).
\end{equation}
\end{proposition}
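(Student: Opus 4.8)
\emph{Strategy of proof.} The plan is to recognize the three identities as the statements that $\varphi\mapsto\varphi\circ T_s$, the covariant Piola map $\mathcal{B}\mapsto DT_s^{t}(\mathcal{B}\circ T_s)$ and the contravariant Piola map $\mathcal{B}\mapsto \det(DT_s)\,DT_s^{-1}(\mathcal{B}\circ T_s)$ intertwine $\grad$, $\curl$ and $\dvg$ respectively, and to prove each one by the chain and product rules on smooth fields, extending afterwards by density. Since $T_s$ is a bijection with $DT_s$, $DT_s^{-1}$ and $\det(DT_s)$ bounded, composition with $T_s$ and multiplication by these matrices are continuous from $L^{2}$ to $L^{2}$, so it suffices to establish each equality for $\mathcal{B}\in(C^{\infty})^{3}$ and $\varphi\in C^{\infty}$ and then pass to the limit along a sequence converging in the graph norm of $H(\curl)$ (resp. $H(\dvg)$). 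For the curl and divergence identities the pointwise computation will momentarily use second derivatives of $T_s$; this extra smoothness is removed at the end by a further density argument approximating the given $C^{1}$ diffeomorphism by smooth ones. Throughout write $J:=DT_s$, $\mathcal{C}:=J^{t}(\mathcal{B}\circ T_s)$ and $\mathcal{D}:=\det(J)\,J^{-1}(\mathcal{B}\circ T_s)$.

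First I would dispatch the gradient, which is pure chain rule: $\partial_{x_j}(\varphi\circ T_s)=\sum_i (\partial_{y_i}\varphi)\circ T_s\,\partial_{x_j}(T_s)_i$ reads $\grad(\varphi\circ T_s)=J^{t}\big((\grad\varphi)\circ T_s\big)$, and inverting $J^{t}$ gives $(\grad\varphi)\circ T_s=J^{-t}\grad(\varphi\circ T_s)$.

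Next, the curl. Differentiating $\mathcal{C}_c=\sum_i J_{ic}\,(B_i\circ T_s)$ with the product rule, $\partial_{x_b}\mathcal{C}_c$ is the sum of a term carrying the Hessian $\partial^2_{x_bx_c}(T_s)_i$ and the term $\sum_{i,k}J_{ic}J_{kb}\,(\partial_{y_k}B_i)\circ T_s$. In $(\curl\mathcal{C})_a=\sum_{b,c}\varepsilon_{abc}\partial_{x_b}\mathcal{C}_c$ the Hessian term drops, being symmetric in $(b,c)$ against the antisymmetric $\varepsilon_{abc}$, and for the remaining term I would use the contraction of $\sum_{a,b,c}\varepsilon_{abc}J_{ra}J_{kb}J_{ic}=\det(J)\,\varepsilon_{rki}$ with $J^{-1}$, namely $\sum_{b,c}\varepsilon_{abc}J_{kb}J_{ic}=\det(J)\sum_r (J^{-1})_{ar}\,\varepsilon_{rki}$. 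This collapses the expression to $(\curl\mathcal{C})_a=\det(J)\sum_r(J^{-1})_{ar}\big((\curl\mathcal{B})\circ T_s\big)_r$, i.e. $\curl\mathcal{C}=\det(J)\,J^{-1}\big((\curl\mathcal{B})\circ T_s\big)$; solving for $(\curl\mathcal{B})\circ T_s$ gives the first claimed formula.

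Finally the divergence. Writing $\mathcal{D}_c=\sum_i\mathrm{adj}(J)_{ci}\,(B_i\circ T_s)$ with $\mathrm{adj}(J)=\det(J)J^{-1}$ and differentiating, the term involving $\partial_{x_c}\mathrm{adj}(J)_{ci}$ vanishes by the Piola identity $\sum_c\partial_{x_c}\mathrm{adj}(J)_{ci}=0$ (each row of $\mathrm{cof}(DT_s)$ is divergence-free, itself a consequence of expanding the cofactors and symmetry of second derivatives), while the surviving term gives $\sum_c\mathrm{adj}(J)_{ci}\,\partial_{x_c}(B_i\circ T_s)=\det(J)\sum_{c,k}(J^{-1})_{ci}J_{kc}\,(\partial_{y_k}B_i)\circ T_s=\det(J)\,(\partial_{y_i}B_i)\circ T_s$ after $\sum_cJ_{kc}(J^{-1})_{ci}=\delta_{ki}$; summing on $i$ yields $\dvg\mathcal{D}=\det(J)\,(\dvg\mathcal{B})\circ T_s$, which rearranges to the second formula. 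The only steps that are not pure bookkeeping are these two algebraic facts — the Levi--Civita/cofactor contraction for the curl and the Piola identity $\dvg\,\mathrm{cof}(DT_s)=0$ for the divergence — so that is where I expect to spend care; conceptually, all three identities are the single statement that the exterior derivative commutes with pullback, $d(T_s^{*}\omega)=T_s^{*}(d\omega)$, applied to $0$-, $1$- and $2$-forms on $\mathbb{R}^{3}$, which is the structural reason they hold.
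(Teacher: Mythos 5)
Your proof is correct. Note that the paper itself states this proposition without any proof at all (it is the standard covariant/contravariant Piola transformation lemma, used routinely in shape-sensitivity analysis), so there is nothing to compare against: your computation simply supplies what the paper omits. The three ingredients you isolate are exactly the right ones --- chain rule for the gradient, cancellation of the Hessian term against $\varepsilon_{abc}$ plus the determinant contraction $\sum_{b,c}\varepsilon_{abc}J_{kb}J_{ic}=\det(J)\sum_r(J^{-1})_{ar}\varepsilon_{rki}$ for the curl, and the Piola identity $\sum_c\partial_{x_c}\mathrm{adj}(J)_{ci}=0$ for the divergence --- and each is applied correctly. Your closing remark that all three are instances of $d(T_s^*\omega)=T_s^*(d\omega)$ for $0$-, $1$- and $2$-forms is the right structural explanation. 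The only point deserving care, which you already flag, is the regularity bookkeeping: the pointwise computation uses second derivatives of $T_s$, while the hypothesis is only $C^1$; your plan of approximating $T_s$ by smooth diffeomorphisms and passing to the limit (the Piola identity and the determinant contraction being stable under $C^1$ convergence of $DT_s$) closes that gap, and is in fact needed here since the paper later takes $T_s=\mathrm{Id}+s\tta$ with $\tta$ of limited smoothness. One should also record that $\det(DT_s)$ is bounded away from zero on the relevant compact set so that composition with $T_s$ and multiplication by $DT_s^{\pm 1}$ are indeed bounded on $L^2$, which holds for the perturbations of the identity used in the paper when $s$ is small.
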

Let us denote by $j_{s}=det(DT_{s})$ and by $w_{s}=j_{s}\|DT^{-1}_{s}\bnu\|_{\mathbb{R}^3}$ 
\begin{proposition}
\begin{eqnarray*}
\dfrac{\partial}{\partial s}\Big(j_{s}\Big)_{s=0} &=&\dvg \tta,
\\\dfrac{\partial}{\partial s}\Big(\dfrac{1}{j_{s}}\Big)_{s=0} &=& -\dvg \tta,
\\ \dfrac{\partial}{\partial s}\big(w_{s}\big)_{s=0}&=& \dvgS\tta,
\\ \dfrac{\partial}{\partial s}\Big( \dfrac{DT_{s}^tDT_{s}}{j_s}\Big)_{s=0}&=&-\dvg(\tta)I + D\tta + D\tta^t
\\ \dfrac{\partial}{\partial s}\Big( \dfrac{DT_{s}}{j_s}\Big)_{s=0} &=& -\dvg(\tta) I + D\tta,
\end{eqnarray*}
\end{proposition}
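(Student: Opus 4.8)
The plan is to treat all five formulas as first-order Taylor expansions in $s$ about $s=0$. Write $T_s=\mathrm{Id}+s\,\tta+o(s)$, so that $DT_s=I+s\,D\tta+o(s)$, $DT_s^{-1}=I-s\,D\tta+o(s)$, $DT_0=I$ and $j_0=\det(DT_0)=1$; here $D\tta=\grad\tta$ is the Jacobian matrix of $\tta$ and $\operatorname{tr}(D\tta)=\dvg\tta$. Every identity will then come from differentiating the relevant determinant, norm, or matrix product at $s=0$ and using these base values.

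For the first identity I would invoke Jacobi's formula $\frac{\partial}{\partial s}\det(DT_s)=\det(DT_s)\,\operatorname{tr}\!\big(DT_s^{-1}\tfrac{\partial}{\partial s}DT_s\big)$, which at $s=0$ gives $\frac{\partial}{\partial s}j_s\big|_{s=0}=\operatorname{tr}(D\tta)=\dvg\tta$. The second identity then follows from $\frac{\partial}{\partial s}\big(1/j_s\big)=-j_s^{-2}\frac{\partial}{\partial s}j_s$ together with $j_0=1$, giving $-\dvg\tta$. For the fourth and fifth identities I would use the Leibniz rule for matrix-valued maps: from the expansion of $DT_s$ one gets $\frac{\partial}{\partial s}\big(DT_s^tDT_s\big)\big|_{s=0}=D\tta+D\tta^t$ and $\frac{\partial}{\partial s}\big(DT_s\big)\big|_{s=0}=D\tta$, and then differentiating the products $(DT_s^tDT_s)\cdot(1/j_s)$ and $DT_s\cdot(1/j_s)$, using $\frac{\partial}{\partial s}(1/j_s)\big|_{s=0}=-\dvg\tta$ and the base values $DT_0^tDT_0=DT_0=I$, $1/j_0=1$, yields $-\dvg(\tta)I+D\tta+D\tta^t$ and $-\dvg(\tta)I+D\tta$ respectively. (Equivalently, one may simply multiply out $(I+s(D\tta+D\tta^t))(1-s\dvg\tta)$ and $(I+sD\tta)(1-s\dvg\tta)$ and read off the coefficient of $s$.)

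The one formula that needs a little more than a bare expansion is the third, $\frac{\partial}{\partial s}w_s\big|_{s=0}=\dvgS\tta$ with $w_s=j_s\|DT_s^{-1}\bnu\|_{\mathbb{R}^3}$. I would apply the product rule: since $\|DT_s^{-1}\bnu\|^2=DT_s^{-1}\bnu\cdot DT_s^{-1}\bnu$ and $\frac{\partial}{\partial s}DT_s^{-1}\big|_{s=0}=-D\tta$, differentiating and using that $\bnu$ is the unit normal gives $\frac{\partial}{\partial s}\|DT_s^{-1}\bnu\|\big|_{s=0}=-\,\bnu\cdot(D\tta)\bnu$. Combined with $\frac{\partial}{\partial s}j_s\big|_{s=0}=\dvg\tta$ and the base values $j_0=\|\bnu\|=1$, the Leibniz rule gives $\frac{\partial}{\partial s}w_s\big|_{s=0}=\dvg\tta-\bnu\cdot(D\tta)\bnu$. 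It then remains to identify this with the surface divergence, i.e.\ to use the decomposition $\dvgS\tta=\dvg\tta-\bnu\cdot(D\tta)\bnu$ of the full divergence of $\tta$ along $\Gamma$ into its tangential part and its normal derivative in the direction $\bnu$ (equivalently $\dvgS\tta=\operatorname{tr}\!\big((I-\bnu\bnu^t)D\tta\big)$).

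The main obstacle is precisely this last point: fixing the sign conventions and the definition of $\dvgS$ acting on a vector field that need not be tangent to $\Gamma$, and checking the tangential-divergence identity $\dvgS\tta=\dvg\tta-\bnu\cdot(D\tta)\bnu$ (orientation of $\bnu$, projector $P=I-\bnu\bnu^t$, and the fact that $\operatorname{tr}(PDT\,P)=\operatorname{tr}(PDT)$). Everything else reduces to Jacobi's formula and first-order Taylor expansions of smooth matrix-valued functions evaluated at the identity, which are routine.
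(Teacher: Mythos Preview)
Your proposal is correct and follows exactly the approach of the paper, which simply records the first-order Taylor expansions $j_s=1+s\,\dvg\tta+o(\|s\tta\|)$ and $DT_s=I+s\,D\tta$ and leaves the rest to the reader. In fact you supply considerably more detail than the paper does, in particular for the surface-Jacobian identity $\frac{\partial}{\partial s}w_s\big|_{s=0}=\dvgS\tta$, where your derivation via $\dvg\tta-\bnu\cdot(D\tta)\bnu$ and the projector $I-\bnu\bnu^t$ is the standard justification that the paper omits entirely.
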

\begin{proof} The results could be proven with the taylor first order expansion :
\begin{eqnarray*}
j_{s} = 1 + s\dvg(\tta) + o(\|s\tta\|)
\\ DT_{s} = I + sD\tta
\end{eqnarray*}
\end{proof}

We recall hereafter some useful vector identities for the $\curl, \dvg$ and $\grad$ operator applied to complex or real valued vectors ${A}$, ${B}$ and ${C}$ that belong to $\mathbb{C}^3$.
\begin{eqnarray}
{A}\p{B}\times{C} &=& {B}\p{C}\times{A}={C}\p{A}\times{B}\label{vecid1}.
\\{A}\times({B}\times{C}) &=& {C}\times({B}\times{A}) = ({A}\p{C}){B}-({A}\p{B}){C}\label{vecid2}.
\\\dvg(a{A}) &=& a\dvg({A}) + {A}\p\grad(a)\label{vecid3}.
\\\curl({A}\times{B}) &=& \dvg({B}){A} - \dvg({A}){B} + D{A}{B} - D{B}{A}\label{vecid4}.
\\{A}\times\curl({B})&=& D{B}^t{A} - D{B}{A}\label{vecid5}.
\\\grad({A}\p{B}) &=& {A}\times\curl{B} + {B}\times\curl{A} + D{B}{A} + D{A}{B}\label{vecid6}.
\\\dvg({A}\times{B}) &=& {B}\p\curl{A} - {A}\p\curl{B}.\label{vecid6}.
\end{eqnarray}
The shape derivative of a scalar complex valued function as $\Vv$ is given by 
\begin{equation}
\Vv'(\Omega,\tta):= \dot{\bf V}_{c} - \grad\Vv(\Omega)\p\tta.
\end{equation}
	We define the shape and material derivative for a given function $\E\in \big(H^1(\Om)\big)^3$ as
\begin{definition}
We denote by $\dot{\E}(\Omega)$ the material derivative of $\E$ that verifies:
$$
 \dot{\E}(\Omega)  = \lim_{\s}\dfrac{1}{s}\big(\E(\Omega^s)\circ T_{s} - \E(\Omega) \big).
$$
Consider an extention of $\E$ from $\mathcal H$ to $\mathcal H$ denoted $\E$ as well. We have in the weak sense on $H(curl,\Omega)$
$$\lim_{\s}\dfrac{1}{s}(\E(\Omega^s)-\E) \rightharpoonup \dot{\E} - D\E.\tta$$
\begin{equation}\label{shapeE2shapeA}
\E'=\dot{\E} - D\E.\tta
\end{equation}
As $\E=i\omega\A+\V$ we have by linearity $\E^\prime=i\omega\A^\prime+\V^\prime$.
\end{definition}
\begin{proposition}
The shape difference quaution $\mathcal W := \lim_{\s} \dfrac{1}{s}(DT_s^t\E(\Omega^s)\circ T_s - \E(\Omega))$ verifes:
$$
\mathcal W = D\tta^t \E + D\E \tta + \E'
$$
\end{proposition}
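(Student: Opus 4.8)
The plan is to expand $DT_s^t$ to first order in $s$ and reduce everything to the material derivative $\dot\E(\Omega)$ that has already been defined. Recall from the Taylor expansion stated in the preceding proposition that $DT_s = I + sD\tta + o(\|s\tta\|)$, hence $DT_s^t = I + sD\tta^t + o(\|s\tta\|)$ and $\frac{1}{s}(DT_s^t - I) \longrightarrow D\tta^t$ as $\s$, with convergence in $C^0(\overline\Omega)$ (indeed uniform), by the $C^1$-regularity of $s\mapsto T_s$.

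First I would split the difference quotient defining $\mathcal W$ into two pieces,
$$
\frac{1}{s}\big(DT_s^t\,\E(\Omega^s)\circ T_s - \E(\Omega)\big)
= \frac{1}{s}\big(\E(\Omega^s)\circ T_s - \E(\Omega)\big)
+ \frac{DT_s^t - I}{s}\,\big(\E(\Omega^s)\circ T_s\big).
$$
The first term converges, by the definition of the material derivative above, to $\dot\E(\Omega)$. For the second term I would use that the existence of the material derivative forces $\E(\Omega^s)\circ T_s \to \E(\Omega)$ as $\s$, together with $\frac{1}{s}(DT_s^t - I)\to D\tta^t$ in $L^\infty$; passing to the limit in the product then yields $D\tta^t\,\E(\Omega)$. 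This gives $\mathcal W = \dot\E(\Omega) + D\tta^t\E$. Finally, substituting the relation between the material and shape derivatives, $\dot\E = \E' + D\E\,\tta$ (Eq.~\eqref{shapeE2shapeA}), I would rearrange to obtain $\mathcal W = D\tta^t\E + D\E\,\tta + \E'$, which is the claimed identity.

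The main obstacle is keeping the functional-analytic bookkeeping honest: the convergences above must be read in the weak topology of $H(\curl,\Omega)$, as stressed in the definition preceding the statement, so one has to justify that the product $\frac{DT_s^t - I}{s}\,(\E(\Omega^s)\circ T_s)$ passes to the limit in that topology. The point is that the multiplier $\frac{DT_s^t - I}{s}$ converges \emph{strongly} (in $C^0$, a fortiori in $L^\infty$), while $\E(\Omega^s)\circ T_s$ converges (at least weakly), so the product converges in the appropriate sense; one should also check that $D\tta^t\E$ lies in $H(\curl,\Omega)$ so that $\mathcal W$ indeed belongs to the target space. Once this is in place, the remaining steps are the elementary Leibniz expansion and the algebraic substitution of Eq.~\eqref{shapeE2shapeA}.
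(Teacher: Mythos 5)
Your proof is correct and follows essentially the same route as the paper: both split off the factor $\frac{1}{s}(DT_s^t-I)$ acting on $\E(\Omega^s)\circ T_s$ to produce $D\tta^t\E$, and identify the remaining quotient $\frac{1}{s}(\E(\Omega^s)\circ T_s-\E(\Omega))$ with $\dot\E=\E'+D\E\,\tta$ (the paper merely writes this last quotient as two further pieces converging to $D\E\,\tta$ and $\E'$ separately, where you invoke the definition of the material derivative and Eq.~\eqref{shapeE2shapeA} instead). Your added care about passing to the limit in the product (strong $L^\infty$ convergence of the multiplier against weak convergence of the field) is a welcome refinement that the paper leaves implicit.
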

\begin{proof}
\begin{eqnarray*}
\dfrac{1}{s}( DT_s^t\E(\Omega^s)\circ T_s - \E )&=&\dfrac{1}{s}\Big( (DT_s^t\E(\Omega^s)\circ T_s -\E(\Omega^s)\circ T_s) + (\E(\Omega^s)\circ T_s- \E(\Omega) \Big)\\
&=&\dfrac{1}{s}\Big( (DT_s^t\E(\Omega^s)\circ T_s -\E(\Omega^s)\circ T_s) \Big) + \dfrac{1}{s}\Big(\E(\Omega^s)\circ T_s- \E(\Omega^s) \Big)\\
&&+\dfrac{1}{s}\Big(\E(\Omega^s)-\E(\Omega) \Big),
\end{eqnarray*}
which converge weakly in $H(curl,\Omega)$ to $ D\tta^t \E + D\E \tta + \E'$.
\end{proof}
%-------------------------------------------------------------------------------------------------------------------------------
\subsection{Shape sensitivity for impedance signal response}
$$
\mathcal{J}_{\ast}(\Omega_d):=\dfrac{1}{2}\int_{z_{min}}^{z_{max}} | \Z_{\ast}(\Omega_d;\xi)- \Z_{\ast}(\Omega_d^\star;\xi)|^2 d\xi
$$
where $\ast$ is either $FA$  or $F3$ type signal measurement.
Let us consider the following perturbation of the identity transformation: 
\begin{equation*}
\left.\begin{array}{ll}
T_{\tta}:&L^2(\mathbb{R})^3 \mapsto  L^2(\mathbb{R})^3\\
 &\Omega_d^0\longrightarrow\Omega_d=(Id+\tta)(\Omega_d^0).
 \end{array}\right.
\end{equation*}
where $\Om_d:=\{x\in\mathbb{R}^3\backslash x= \tilde x + \tta(\tilde x),\quad \forall \tilde x\in\Om_d^0\}.$

The shape derivative of the cost functional $\mathcal{J}(\Omega_d)$ is defined as follows:

$$
\mathcal{J}^\prime(\Omega_d^0,\tta)=\lim_{\s} \dfrac{\mathcal{J}(\Omega_d)-\mathcal{J}(\Omega_d^0)}{s}.
$$ 

Analogously we define the shape derivative of the impedance signal type measurement $\Z_{\ast}^\prime$ and $\Z_{kl}^\prime$.

$$
\mathcal{J}^\prime(\Omega^0_d,\tta) = \int_{z_\text{min}}^{z_\text{max}} \Re\Big\langle \Z_{\ast}'(\Omega;\xi)(\tta),\overline{\big(\Z_{\ast}(\Omega^0_d;\xi)-\Z_{\ast}(\Omega^\star_d;\xi)  \big)}\Big\rangle \,d\xi
$$
In order to obtain the governing equation of a shape function, it is common practice to through the material derivative. We develop here after some preliminaries calculus that helps us to burn several lines in our proofs.  
%---------------------------------------------------------------------------------------------------
%				Preliminary shapes derivative calculus
%---------------------------------------------------------------------------------------------------

Recall that $\bnu$ denote the outward directed unit normal to $\Gamma$. We define the jump $\Big[.\Big]_\pm$  
$$
\Big[F\Big]_\pm := \lim_{t\rightarrow 0} F(x+t\bnu) - \lim_{t\rightarrow 0} F(x-t\bnu) \text{ for } x\in\Gamma.
$$
of the continuous extension of a function $F$ from the exterior and the interior of $\Omega$, respectively.

\subsubsection{Preliminaries calculus}
Let us define the shape functionals $\mathcal{C}_\text{curl}(\Omega)$, $\mathcal{C}_\dvg(\Omega)$ and $\mathcal{C}_\text{mix}(\Omega)$ as follows
\begin{eqnarray*}
\mathcal{C}_\text{curl}(\Omega) &=& \int_{\Om}\dfrac{1}{\mu}\curl\A(\Omega)\p\overline{\curl\Phi} \dv\\
\mathcal{C}_\dvg(\Omega) &=& \dfrac{1}{\tilde\mu}\int_{\Om}\dvg\A(\Omega)\overline{\dvg\Phi} \dv\\
\mathcal{C}_\text{mix}(\Omega)&=&\int_{\Om_c}\sigma\big( i\omega\A(\Omega)+\V(\Omega)\big)\p\overline{\big(i\omega\Phi+\grad\varphi\big)}\, dv
\end{eqnarray*}
\begin{lemma}\label{shape-derivatives}
The Eulerian derivative of the above shape functions exist on $\Omega$ and are given by 
 \begin{align*}
 \displaystyle\mathcal{C}_\text{curl}'(\Omega,\tta)=&   \int_{\Omega}\dfrac{1}{\mu} \curl\A'\p \overline{\curl\Phi} \dv
 -\displaystyle\int_{\Omega}  (\tta^t\bnu)\Big[\dfrac{1}{\mu}\Big]_\pm \bnu^t\curl\A\p\overline{\bnu^t\curl\Phi}\ds
\\ &  -\displaystyle\int_{\Omega}  \Big[\mu\Big]_\pm \frac{1}{\mu}\curlS\A\p\frac{1}{\mu}\overline{\curlS\Phi} \ds
\\\mathcal{C}_\dvg'(\Omega;\tta)   =&\dfrac{1}{\tilde\mu}\int_{\Omega}\dvg\A' \dvg\Phi\,dv.
\\\mathcal{C}_\text{mix}'(\Omega_c,\tta)=&\dfrac{1}{i\omega}\int_{\Om_c} \sigma\big( i\omega\A'+\V'\big) \p\big(i\omega\overline{\Phi}+\overline{\grad\varphi}\big) \dv
\\&-\dfrac{1}{i\omega}\int_{\Gamma} (\tta^t\bnu) \Big[\sigma\Big]_\pm \big(i\omega\As+\Vs\big)\p\big(i\omega\overline{\PhiS}+\overline{\gradS\varphi}\big) \ds
\end{align*}
\end{lemma}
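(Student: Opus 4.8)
The strategy is to pull each volume (or surface) integral back to the fixed reference domain $\Omega^0$ via the diffeomorphism $T_s$, differentiate the transported integrand at $s=0$ using the transport identities of the two preliminary propositions, and then push forward again to $\Omega$. First I would write, for $\mathcal{C}_\text{curl}$, the change of variables $y=T_s(x)$ so that $\int_{\Omega^s}\frac{1}{\mu}\curl\A(\Omega^s)\p\overline{\curl\Phi}\,dv$ becomes $\int_{\Omega^0}\frac{1}{\mu}(\curl\A(\Omega^s))\circ T_s\p\overline{(\curl\Phi)\circ T_s}\;j_s\,dv$; here $\mu$ transports trivially since it is piecewise constant in the reference configuration. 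Substituting the Piola-type identity $(\curl\mathcal B)\circ T_s=\frac{DT_s}{j_s}\curl(DT_s^t\,\mathcal B\circ T_s)$ from the first proposition turns the integrand into $\frac{1}{\mu}\frac{DT_s^tDT_s}{j_s}\curl(DT_s^t\A(\Omega^s)\circ T_s)\p\overline{\curl(DT_s^t\Phi\circ T_s)}$. Differentiating at $s=0$ with the product rule then uses: $\frac{\partial}{\partial s}\big(\frac{DT_s^tDT_s}{j_s}\big)_{s=0}=-\dvg(\tta)I+D\tta+D\tta^t$ from the second proposition; and $\frac{\partial}{\partial s}\big(\curl(DT_s^t\,\mathcal B\circ T_s)\big)_{s=0}=\curl\mathcal W$, where $\mathcal W=D\tta^t\mathcal B+D\mathcal B\,\tta+\mathcal B'$ is the quantity identified in the proposition preceding the lemma (with $\mathcal B'=\curl$ commuting appropriately, $\curl\mathcal W$ collapses). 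I would then recombine these derivative terms and recognize that the "bulk" contributions reassemble into $\int_\Omega\frac{1}{\mu}\curl\A'\p\overline{\curl\Phi}\,dv$, while the remaining terms involving $D\tta,D\tta^t,\dvg\tta$ are converted, via the tangential/normal decomposition of $\curl$ on $\Gamma$ and integration by parts, into the two boundary integrals over $\Gamma$ with the jumps $[\mu]_\pm$ and $[1/\mu]_\pm$; the split between the normal part $\bnu^t\curl\A$ and the surface part $\curlS\A$ is exactly what produces the two separate jump terms.

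For $\mathcal{C}_\dvg$ the same pull-back gives $\int_{\Omega^0}\frac{1}{\tilde\mu}(\dvg\A(\Omega^s))\circ T_s\,\overline{(\dvg\Phi)\circ T_s}\,j_s\,dv$, and the divergence transport identity $(\dvg\mathcal B)\circ T_s=\frac{1}{j_s}\dvg(j_s DT_s^{-1}\mathcal B\circ T_s)$ makes the two factors of $j_s$ cancel one $1/j_s$ each against the Jacobian, leaving after differentiation only $\frac{1}{\tilde\mu}\int_\Omega\dvg\A'\,\dvg\Phi\,dv$ — the absence of boundary terms here reflects that $\tilde\mu$ is a global constant and no material jump enters, and one should check that the $D\tta$-type terms coming from $\frac{\partial}{\partial s}(j_sDT_s^{-1})_{s=0}$ recombine into $\dvg\A'$ (since $\A'=\dot{\A}-D\A\,\tta$ and $\dvg$ of the material-derivative correction produces exactly the compensating terms). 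For $\mathcal{C}_\text{mix}$, which lives on $\Omega_c$, I would transport $\sigma$ (piecewise constant in reference), the vectors $\A,\Phi$ by the covariant rule $DT_s^{-t}$ appropriate to $\grad\varphi$-type objects and the consistency of $\E=i\omega\A+\V$, and use $\frac{\partial}{\partial s}(j_s)_{s=0}=\dvg\tta$ together with $(\grad\varphi)\circ T_s=DT_s^{-t}\grad(\varphi\circ T_s)$; the bulk term reassembles as $\frac{1}{i\omega}\int_{\Omega_c}\sigma(i\omega\A'+\V')\p\overline{(i\omega\Phi+\grad\varphi)}\,dv$ and the leftover collects into the single surface integral over $\Gamma$ weighted by $(\tta^t\bnu)[\sigma]_\pm$ evaluated on the tangential traces, since crossing $\Gamma=\partial\Omega_d$ is where $\sigma$ jumps.

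The main obstacle is the $\mathcal{C}_\text{curl}$ computation: correctly bookkeeping the three separate sources of $s$-dependence (the metric factor $DT_s^tDT_s/j_s$, the transported test field $DT_s^t\Phi\circ T_s$, and the transported solution $DT_s^t\A\circ T_s$), and then showing that after differentiation the non-$\A'$ terms are \emph{exactly} a boundary distribution supported on $\Gamma$ — i.e., that all genuinely volumetric leftover terms cancel because $\mu$ is constant on each side of $\Gamma$, so that $\grad(1/\mu)$ and $\grad\mu$ are supported on $\Gamma$ as measures. This requires care in splitting $\curl$ on $\Gamma$ into its normal component $\bnu^t\curl(\cdot)$ and its tangential/surface component $\curlS(\cdot)=\frac{1}{\mu}\curlS$ as written, and in matching the coefficient $(\tta^t\bnu)$ (the normal speed of the moving interface) against the standard Hadamard structure; one must also verify that the extension of $\Phi$ off $\Gamma$ does not contribute, which follows because $\Phi$ is a fixed test function so $\Phi'=-D\Phi\,\tta$ and its transported derivative is tangentially well-defined. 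Once the $\curl$ case is pinned down, the $\dvg$ and mixed cases follow the same template with strictly fewer terms.
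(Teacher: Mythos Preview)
Your overall strategy---pull back by $T_s$, differentiate the transported integrand using the Piola identities and the derivative formulas for $DT_s^tDT_s/j_s$, $j_s$, etc., then extract the $\A'$ contribution and push the remainder to $\Gamma$---is exactly the route the paper takes, and your treatment of $\mathcal{C}_\dvg$ and $\mathcal{C}_\text{mix}$ is on target.

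The gap is in the $\mathcal{C}_\text{curl}$ step, precisely where you flag ``the main obstacle.'' You write that the non-$\A'$ volumetric leftovers cancel ``because $\mu$ is constant on each side of $\Gamma$, so that $\grad(1/\mu)$ and $\grad\mu$ are supported on $\Gamma$ as measures.'' That is not the mechanism: the leftover terms
\[
\int_\Omega \tfrac{1}{\mu}\big(D\tta+D\tta^t-\dvg(\tta)I\big)\curl\A\p\overline{\curl\Phi}
+\int_\Omega \tfrac{1}{\mu}\curl(D\tta^t\A+D\A\tta)\p\overline{\curl\Phi}
+\int_\Omega \tfrac{1}{\mu}\curl\A\p\overline{\curl(D\tta^t\Phi+D\Phi\tta)}
\]
do not contain $\grad\mu$ at all, so piecewise constancy of $\mu$ cannot by itself make them vanish in the bulk. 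What the paper actually does is: (i) use the algebraic identity $D\tta^t B+DB\,\tta=\grad(B\p\tta)-\tta\times\curl B$ so that $\curl(D\tta^t\A+D\A\tta)=\curl(\curl\A\times\tta)$ and likewise for $\Phi$; (ii) integrate by parts and then substitute the governing equation $\curl(\tfrac{1}{\mu}\curl\A)=\sigma(i\omega\A+\V)+\J$ for $\curl\curl\A$; (iii) observe that the resulting bulk terms of the form $\sigma(i\omega\A+\V+\J)\times\tta\p\overline{\curl\Phi}$ appear with opposite signs from the $\A$-transport piece and the $\Phi$-transport piece and cancel exactly. Only after this algebraic/PDE cancellation do the surviving terms live on $\Gamma$, and only then does the piecewise constancy of $\mu$ enter---to write the boundary contributions as jumps $[\mu]_\pm$, $[1/\mu]_\pm$ under the tangential/normal split. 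Without invoking the equation satisfied by $\A$, the bulk terms simply do not disappear; your Hadamard-style intuition is right about the \emph{form} of the answer but skips the substantive computation that gets you there.
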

\begin{proof}
\begin{enumerate}[i)]
%\int_{\Omega}\big( \big)\p\big( \big) \,dv
\item 
\begin{eqnarray*}
\mathcal{C}_\text{curl}(\Omega^s)&=&\int_{\Omega}\big( \curl\A(\Omega_{d}^{s})\big)\circ T_{s}\p \overline{\big(\curl\Phi\big)\circ T_{s}}\,j_{s}\dv
\\&=&\int_{\Omega}\big(\dfrac{1}{\mu}\dfrac{DT_{s}}{j_{s}}\curl(DT_{s}^t\A(\Omega_{d}^{s})\circ T_{s}) \big)
	\p \overline{\big( \dfrac{DT_{s}}{j_{s}}\curl(DT_{s}^t\Phi\circ T_{s})\big)} \, j_s\dv
\\&=&\int_{\Omega}\big(\dfrac{1}{\mu}\dfrac{DT_s^tDT_{s}}{j_{s}}\curl(DT_{s}^t\A(\Omega_{d}^{s})\circ T_{s})\big)\p \overline{\big(\curl(DT_{s}^t\Phi\circ T_{s}) \big)} \dv	
\\&=&\int_{\Omega}\big(\dfrac{1}{\mu}\dfrac{DT_s^tDT_{s}}{j_{s}}\curl(DT_{s}^t\A(\Omega_{d}^{s})\circ T_{s})\big)\p \overline{\big(\curl(DT_{s}^t\Phi\circ T_{s}) \big)} \dv	
\end{eqnarray*}
So:
\begin{eqnarray*}
\dfrac{1}{s}\big( \mathcal{C}_\text{curl}(\Omega^s) -\mathcal{C}_\text{curl}(\Omega^0) \big)
&=&\dfrac{1}{s}\displaystyle\int_{\Omega} \dfrac{1}{\mu}\Big(\big(\dfrac{DT_s^tDT_{s}}{j_{s}}-I \big)\p \curl(DT_{s}^t\A(\Omega_{d}^{s})\circ T_{s})\Big)\p \overline{\big(\curl(DT_{s}^t\Phi\circ T_{s}) \big)} \dv
\\&&+\dfrac{1}{s}\displaystyle\int_{\Omega} \dfrac{1}{\mu}\Big(\curl\big(DT_s^t\A(\Omega_{d}^{s})\circ T_{s}\big)-\curl\A \Big)\p \overline{\big(\curl(DT_{s}^t\Phi\circ T_{s}) \big)} \dv
\\&&+\dfrac{1}{s}\displaystyle\int_{\Omega} \dfrac{1}{\mu}\curl\A \p \overline{\big(\curl((DT_{s}^t-I)\Phi\circ T_{s}) \big)} \dv.
\end{eqnarray*}
Thus the Eulerian derivative $\mathcal{C}_\text{curl}'(\Omega,\tta)=\lim_{\s}\dfrac{1}{s}\big( \mathcal{C}_\text{curl}(\Omega^s) -\mathcal{C}_\text{curl}(\Omega^0) \big)$ writes
\begin{eqnarray}
\mathcal{C}_\text{curl}'(\Omega,\tta)&=
& \nonumber
\int_{\Omega}\dfrac{1}{\mu} \big(D\tta + D\tta^t  -\dvg\tta I \big)\curl\A\p \overline{\curl\Phi}\dv +\int_{\Omega}\dfrac{1}{\mu} \curl(\mathcal W)\p \overline{\curl\Phi} \dv\nonumber
\\&=&\int_{\Omega}\dfrac{1}{\mu} \big(D\tta + D\tta^t  -\dvg\tta I \big)\curl\A \p \overline{\curl\Phi} \dv +\int_{\Omega}\dfrac{1}{\mu} \curl\A'\p \overline{\curl\Phi} \dv \nonumber
\\&&+\int_{\Omega}\dfrac{1}{\mu} \curl(D\tta^t\A+D\A\tta)\p\overline{\curl\Phi} \dv + \displaystyle\int_{\Omega} \dfrac{1}{\mu}\curl\A \p \overline{\curl(D\tta^t\Phi + D\Phi\tta)}\nonumber
\\&=&\int_{\Omega}\dfrac{1}{\mu} \curl\A'\p \overline{\curl\Phi} \dv \nonumber
\\&&+\int_{\Omega}\dfrac{1}{\mu} \bigg(\curl(D\tta^t\A+D\A\tta) + \big(D\tta + D\tta^t  -\dvg\tta I \big)\curl\A \bigg)\p \overline{\curl\Phi} \dv  \label{ddd1}
\\&&+ \displaystyle\int_{\Omega} \dfrac{1}{\mu}\curl\A \p \overline{\curl(D\tta^t\Phi + D\Phi\tta)} \dv\label{ddd2}.
\end{eqnarray}
Remark first the following identities:
\begin{align}
( D\tta^t\curl\A ) +& ( D\tta^t\curl\A - \dvg(\tta)\curl\A ) \nonumber
\\&=(\curl\A\times\curl\tta +D\tta\curl\A) - \Big(\curl(\curl\A\times\tta)-D(\curl\A)\tta \Big)\nonumber
\\&= \curl\A\times\curl\tta  + D\tta\curl\A+D(\curl\A)\tta - \curl(\curl\A\times\tta)\nonumber
\\&= \grad\big(\curl\A\p\tta\big)-\tta\times\big( \curl\curl\A\big) - \curl\big(\curl\A\times\tta\big)\nonumber
%\\&= \grad\big(\curl\A\p\tta \big)-\curl\big(\curl\A\times\tta\big) + \curl\big(\curl\A\big)\times\tta\nonumber
\\&= \grad\big(\curl\A\p\tta \big)-\curl\big(\curl\A\times\tta\big)+ \mu\sigma(i\omega\A+\V + \J)\times\tta\label{ida}
\end{align}
having replaced $\curl\curl\A$ in Eq.~\eqref{strongPbAV} for the last line. In addition we have:
\begin{align}
\curl\big(D\tta^t\A+D\A\tta \big) &= \curl\Big(\A\times\curl\tta + D\tta\A+ D\A\tta\Big)\nonumber
\\& = \curl\Big(\grad\big(\A\p\tta\big) - \tta\times\curl\A \Big)\nonumber
\\& = \curl\big(\curl\A\times\tta\big)\label{idb}.
\end{align}
 Hence, in regard to Eq.\eqref{idb} the domain integral.~\eqref{ddd2} leads to : 
\begin{align}
  \int_{\Omega} \dfrac{1}{\mu}&\curl\A \p \overline{\curl(D\tta^t\Phi + D\Phi\tta)}=  \int_{\Omega} \dfrac{1}{\mu}\curl\A \p \overline{\curl(\curl \Phi\times\tta)}\nonumber
 \\&=\displaystyle\int_{\Omega} \curl(\dfrac{1}{\mu}\curl\A) \p \overline{\curl \Phi\times\tta} \dv + \int_{\Gamma}\Big[\bnu\times\dfrac{1}{\mu} \curl\A\p\overline{\curl\Phi\times\tta}  \Big]_\pm\ds \nonumber
 \\&=\displaystyle\int_{\Omega}  (\sigma(i\omega\A+\V+\J) \p \overline{\curl \Phi\times\tta} \dv 
 	+ \int_{\Gamma}\Big[
					(\tta^t\curl\A)(\bnu^t\overline{\curl\Phi}) - (\tta^t\bnu)\dfrac{1}{\mu}\curl\A\p\overline{\curl\Phi}
					\Big]_\pm\ds  \nonumber
 \\&= - \int_{\Omega}  (\sigma(i\omega\A+\V+\J)\times\tta \p \overline{\curl \Phi} \dv 
 	+ \int_{\Gamma}\Big[
					(\tta^t\dfrac{1}{\mu}\curl\A)(\bnu^t\overline{\curl\Phi}) - (\tta^t\bnu)\dfrac{1}{\mu}\curl\A\p\overline{\curl\Phi}
					\Big]_\pm \ds \nonumber.
\end{align}
Remark also that, taking into account the identities Eq.~\eqref{ida},\eqref{idb}, the domaine integral \eqref{ddd1} remains: 
\begin{align*}
\int_{\Omega}\dfrac{1}{\mu}& \bigg(\curl(D\tta^t\A+D\A\tta) + \big(D\tta + D\tta^t  -\dvg\tta I \big)\curl\A \bigg)\p \overline{\curl\Phi} \dv
\\&=\int_{\Omega}   \dfrac{1}{\mu} \nabla\big(\tta^t\curl\A\big)\p \overline{\curl\Phi} \dv + \int_{\Omega}\sigma\big(i\omega\A+\V +\J \big)\times\tta \p \overline{\curl\Phi} \dv.  
\end{align*}

Furthermore, using the identity $\dfrac{1}{\mu}\nabla\big(\tta^t\curl\A\big)\p\overline{\curl\Phi}=\dvg\Big(\dfrac{1}{\mu} \big(\tta^t\curl\A\big)\p\overline{\curl\Phi}\Big)$, and the divergence theorem, we obtain 
 \begin{align*}
\int_{\Omega}   \dfrac{1}{\mu}& \nabla\big(\tta^t\curl\A\big)\p \overline{\curl\Phi} \dv 
\\&=\int_{\Omega}   \dvg\Big(\dfrac{1}{\mu} \big(\tta^t\curl\A\big)\p\overline{\curl\Phi}\Big) \dv,
\\&=-\int_{\Gamma}\Big[\dfrac{1}{\mu}(\tta^t\curl\A)\, (\bnu^t\overline{\curl\Phi})\Big]_{\pm} \ds
\\&= - \int_{\Gamma} \Big [ \dfrac{1}{\mu}   (\tta^t\bnu) (\curl\A^t\bnu) + \ttaS^t\curlS\A  \big)\Big ]_{\pm}\,  \overline{\curl\Phi} \ds.
\\&= - \int_{\Gamma} \Big [ \dfrac{1}{\mu}   (\tta^t\bnu) (\curl\A^t\bnu)\big)\Big ]_{\pm}\, \overline{\curl\Phi}\ds 
        -\int_{\Gamma} \Big [ \dfrac{1}{\mu}   \big(\ttaS^t\curlS\A \Big ]_{\pm}\, \overline{\curl\Phi}\ds.
\\&= - \int_{\Gamma}   (\tta^t\bnu)\Big [ \dfrac{1}{\mu} \bnu^t\curl\A \Big ]_{\pm}\, \overline{\curl\Phi}\ds 
        -\int_{\Gamma}   \ttaS^t\Big [ \dfrac{1}{\mu}\curlS\A \Big ]_{\pm}\, \overline{\curl\Phi}\ds.
\\&= - \int_{\Gamma}   (\tta^t\bnu)\Big [ \dfrac{1}{\mu} \bnu^t\curl\A \Big ]_{\pm}\, \overline{\curl\Phi}\ds. 
\end{align*}
where we have used the fact that 
\begin{align*}
\tta^t\curl\A &= \Big((\tta^t\bnu)\bnu + \bnu\times(\tta\times\bnu) \Big)^t \Big((\curl\A^t\bnu)\bnu + \bnu\times(\curl\A\times\bnu\Big),
                  \\&=  (\tta^t\bnu) (\curl\A^t\bnu) + \ttaS^t\curlS\A.
\end{align*}
	Therefore the eulerian derivative $\mathcal{C}_\curl' (\Omega,\tta)$ of the shape function $\mathcal{C}_\curl (\Omega)$ remains
	\begin{equation}
\mathcal{C}_\curl' (\Omega,\tta)=   \int_{\Omega}\dfrac{1}{\mu} \curl\A'\p \overline{\curl\Phi} \dv
 -\int_{\Gamma}  (\tta^t\bnu)\Big[\dfrac{1}{\mu}\curl\A\p\overline{\curl\Phi} \Big]_\pm \ds \nonumber.
\end{equation}
Finally, considering $\frac{1}{\mu}\curl\A\p\overline{\curl\Phi}=\mu\frac{1}{\mu}\curlS\A\p\frac{1}{\mu}\overline{\curlS\Phi} + \dfrac{1}{\mu}\bnu^t\curl\A\p\overline{\bnu^t\curl\Phi}$ we obtain the shape derivative $\mathcal{C}_\curl' (\Omega,\tta)$ expressed as 
\begin{framed}
\begin{equation}\begin{array}{lll}
\displaystyle\mathcal{C}_\curl' (\Omega,\tta)=&   \displaystyle\int_{\Omega}\dfrac{1}{\mu} \curl\A'\p \overline{\curl\Phi} \dv\vspace{.01in}
\\& -\displaystyle\int_{\Gamma}  (\tta^t\bnu)\Big[\dfrac{1}{\mu}\Big]_\pm \bnu^t\curl\A\p\overline{\bnu^t\curl\Phi}\ds\vspace{.01in}
\\&  -\displaystyle\int_{\Gamma}  \Big[\mu\Big]_\pm \frac{1}{\mu}\curlS\A\p\frac{1}{\mu}\overline{\curlS\Phi} \ds.
\end{array}\end{equation}
\end{framed}
\item $\mathcal{C}_\dvg(\Omega^s)$ writes 
\begin{eqnarray*}
\mathcal{C}_\dvg(\Omega^s) &=&\dfrac{1}{\tilde\mu} \int_{\Om}\dvg\A(\Omega^s)\circ T_{s}\p \overline{(\dvg\Phi)\circ T_{s}} j_{s} \dv
\\&=&\dfrac{1}{\tilde\mu}\int_{\Gamma}\dfrac{1}{j_{s}}\dvg(j_s DT_{s}^{-1}\A(\Omega^s)\circ T_{s}) \overline{\dvg(j_s DT_{s}^{-1}\Phi\circ T_{s})} \dv.
\end{eqnarray*}
Doing similar calculus as before, $\mathcal{C}_\dvg'(\Omega,T_{s})=\lim_{\s}\dfrac{1}{s}(\mathcal{C}_\dvg(\Omega^s)-\mathcal{C}_\dvg(\Omega))$ writes 
\begin{eqnarray*}
\mathcal{C}_\dvg'(\Omega,\tta)&=&-\dfrac{1}{\tilde\mu}\int_{\Omega} \dvg(\tta)\dvg\A \overline{\dvg\Phi} \dv+\dfrac{1}{\tilde\mu}\int_{\Omega} \dvg(\dvg(\tta)\A-D\tta\A + D\A\tta+\A') \overline{\dvg\Phi}\dv
\\&=&\dfrac{1}{\tilde\mu}\int_{\Omega} \dvg\big(\curl(\A\times\tta)\big) \overline{\dvg\Phi}\dv+\dfrac{1}{\tilde\mu}\int_{\Omega}\dvg\A' \overline{\dvg\Phi}\dv.
\end{eqnarray*}
	Finally we have 
\begin{framed}
\begin{align}
\mathcal{C}_\dvg'(\Omega,\tta)=\dfrac{1}{\tilde\mu}\int_{\Omega}\dvg\A' \overline{\dvg\Phi}\dv.
\end{align}
\end{framed}
%---------------------------------------------------- item 
\item $\displaystyle\mathcal{C}_\text{mix}(\Omega^s)$ writes 
\begin{eqnarray*}
\displaystyle\mathcal{C}_\text{mix}(\Omega^s_c)
&=&\dfrac{1}{i\omega}\int_{\Om_c^s}\sigma\big( i\omega\A(\Om_c^s)+\V(\Om_c^s))\p\big(i\omega\overline{\Phi}+\overline{\grad\varphi}\big) \dv,
\\&=&\dfrac{1}{i\omega}\int_{\Om_c} \Big(\sigma\big( i\omega\A(\Om_c^s)+\V(\Om_c^s)\big) \circ T_s \Big)\p \Big(\big(i\omega\overline{\Phi}+\overline{\grad\varphi}\big) \circ T_s \Big)\,j_s \dv,
\\&=&\dfrac{1}{i\omega}\int_{\Om_c} \Big(\sigma\big( i\omega\A(\Om_c^s)\circ T_s+\big(\V(\Om_c^s)\big)\circ T_s \Big)\p \Big(\big(i\omega\overline{\Phi}+\overline{\grad\varphi}\big) \circ T_s \Big)\,j_s \dv,
\\&=&\dfrac{1}{i\omega}\int_{\Om_c} \Big(\sigma\big( i\omega\A(\Om_c^s)\circ T_s +DT_s^{-t}\grad\big(\Vv(\Om_c^s) \circ T_s\big) \Big)\p \big(i\omega\overline{\Phi}\circ T_s+DT_s^{-t}\overline{\grad\varphi}\circ T_s\big)\,j_s \dv,
\end{eqnarray*}
Thus we can calculate the shape derivative $\mathcal{C}_\text{mix}'(\Omega_c,\tta)$, which writes:
\begin{eqnarray*}
\mathcal{C}_\text{mix}'(\Omega_c,\tta)&=&
\dfrac{1}{i\omega}\int_{\Omega_c} \dvg(\tta)\sigma\big(i\omega\A+\V\big)\p\big(i\omega\overline{\Phi}+\overline{\grad\varphi}\big) \dv
\\&&+\dfrac{1}{i\omega}\int_{\Om_c} \sigma\big( i\omega\A'+\V'\big) \p\big(i\omega\overline{\Phi}+\overline{\grad\varphi}\big) \dv
\\&&+\dfrac{1}{i\omega}\int_{\Om_c}\sigma\Big( D (i\omega\A)\tta - D\tta^t \V + \grad(\tta\p\V) \Big)\p\big(i\omega\overline{\Phi}+\overline{\grad\varphi}\big) \dv
\\&&+\dfrac{1}{i\omega}\int_{\Om_c}\sigma\big( i\omega\A + \V \big)\p\Big(i\omega\overline{D\Phi\tta}-\overline{D\tta^t\grad\varphi} + \overline{\grad(\tta\p\grad\varphi)}\Big) \dv
\end{eqnarray*}
We can easily proof for any complex valued function ${\bf V}$
\begin{align*}
\grad(\tta\p\V)&=(\grad\tta)\V + D(\V)\tta,\\&= D\tta^t\V + D(\V)\tta.
\end{align*}
	We use the above identity to obtain:
\begin{eqnarray*}
\mathcal{C}_\text{mix}'(\Omega_c,\tta)&=&
\dfrac{1}{i\omega}\int_{\Om_c} \sigma\big( i\omega\A'+\V'\big) \p\big(i\omega\overline{\Phi}+\overline{\grad\varphi}\big) \dv
\\&&+\dfrac{1}{i\omega}\int_{\Omega_c} \tta^t\grad\bigg(\sigma\big(i\omega\A+\V\big)\p\big(i\omega\overline{\Phi}+\overline{\grad\varphi}\big)\bigg) \dv
\\&&-\dfrac{1}{i\omega}\int_{\Gamma} (\tta^t\bnu)\sigma\big(i\omega\A+\V\big)\p\big(i\omega\overline{\Phi}+\overline{\grad\varphi}\big) \ds
\\&&+\dfrac{1}{i\omega}\int_{\Om_c}\sigma D\Big(i\omega\A+\V\Big)\tta\p\big(i\omega\overline{\Phi}+\overline{\grad\varphi}\big) \dv
\\&&+\dfrac{1}{i\omega}\int_{\Om_c}\sigma\big( i\omega\A + \V \big)\p D\Big(i\omega\overline{\Phi}+\overline{\grad\varphi}\Big)\tta \dv
\end{eqnarray*}
	Finally, we have 
	\begin{align*}
\mathcal{C}_\text{mix}'(\Omega_c,\tta)=\nonumber
\dfrac{1}{i\omega}\int_{\Om_c} \sigma\big( i\omega\A'+\V'\big) \p\big(i\omega\overline{\Phi}+\overline{\grad\varphi}\big) \dv
-\dfrac{1}{i\omega}\int_{\Gamma} (\tta^t\bnu) \Big[
								\sigma\big(i\omega\A+\V\big)\p\big(i\omega\overline{\Phi}+\overline{\grad\varphi}\big)
								\Big]_\pm \ds
\end{align*} 
Remark that $\big(i\omega\A+\V\big)\p\big(i\omega\overline{\Phi}+\overline{\grad\varphi}\big)$ is equivalent to $\big(i\omega\As+\Vs\big)\p\big(i\omega\overline{\PhiS}+\overline{\nabla_{\tau}\varphi}\big) + \big(i\omega\bnu^t\A+\bnu^t\V\big)\p\big(i\omega\overline{\bnu^t\Phi}+\overline{\bnu^t\grad\varphi}\big)$. Thus we conclude by the following 
\begin{framed}
\begin{align}
\mathcal{C}_\text{mix}'(\Omega_c,\tta)=&\nonumber
\dfrac{1}{i\omega}\int_{\Om_c} \sigma\big( i\omega\A'+\V'\big) \p\big(i\omega\overline{\Phi}+\overline{\grad\varphi}\big) \dv
\\&-\dfrac{1}{i\omega}\int_{\Gamma} (\tta^t\bnu) \Big[\sigma\Big]_\pm \big(i\omega\As+\Vs\big)\p\big(i\omega\overline{\PhiS}+\overline{\gradS\varphi}\big) \ds
\end{align}
 \end{framed}

\end{enumerate}
since the tangential components of $(i\omega\A+\V)$ and tests functions are continuous across the surface $\Gamma$ while the normal component; $\bnu^t(i\omega\A+\V)=0$ is vanishing on $\Gamma$.
\end{proof}

%---------------------------------------------------------------------------------------------------
%			The governing equation of the shape functions A'
%---------------------------------------------------------------------------------------------------

\subsubsection{The governing equation of the shape functions}

Since the shape deformation $\tta$ concerns the deposit or the flawed part, we use thus the notation $\A(\Om_d)$ and $\Vv(\Om_d)$ to emphasize the shape deformation dependance. Let $(\Phi,\varphi)$ be the test functions that belong to $\big(\mathcal{D}(\Om)\big)^3\times\mathcal{D}(\Om)$. 
Recall that the solution of the weak form 
\begin{equation}\label{varfsesqui}
\mathcal{L}\big(\A,\Vv;\Phi,\varphi\big)= \int_{\Omega}\J\p\overline{\Phi} \dv - \dfrac{1}{i\omega}\int_{\Omega_c}\J\p\overline{\nabla\varphi} \dv.
\end{equation}
is a solution of the strong problem Eqs.~\eqref{strongPbAV}. Here the sesquilinear form $\mathcal{L}$ is defined at Eq.~\eqref{sesquil}.
Remark that the sesquilinear form can be written using the shape functions defined on Lemma.~\ref{shape-derivatives}. In fact we have 
$$
\mathcal{L}\big(\A,\Vv;\Phi,\varphi\big)=\mathcal{C}_\text{curl}(\Omega) + \mathcal{C}_\dvg(\Omega) - \mathcal{C}_\text{mix}(\Omega_c).
$$
We calculate the shape derivative on both sides of the integral identities on Eq.~\eqref{varfsesqui}, we obtain:
\begin{align*}
\mathcal{C}_\text{curl}'(\Omega,\tta)  &+\mathcal{C}_\dvg'(\Omega,\tta) -\mathcal{C}_\text{mix}'(\Omega_c,\tta) =
\\& \displaystyle\int_{\Omega}\dfrac{1}{\mu} \curl\A'\p \overline{\curl\Phi} \dv	+\dfrac{1}{\tilde\mu}\int_{\Omega}\dvg\A' \dvg\Phi\,dv
-\dfrac{1}{i\omega}\int_{\Om_c} \sigma\big( i\omega\A'+\V'\big) \p\big(i\omega\overline{\Phi}+\overline{\grad\varphi}\big) \dv
\\& -\displaystyle\int_{\Omega}  (\tta^t\bnu)\Big[\dfrac{1}{\mu}\Big]_\pm \bnu^t\curl\A\p\overline{\bnu^t\curl\Phi}\ds
  -\displaystyle\int_{\Omega}  \Big[\mu\Big]_\pm \frac{1}{\mu}\curlS\A\p\frac{1}{\mu}\overline{\curlS\Phi} \ds.
\\& +\dfrac{1}{i\omega}\int_{\Gamma} (\tta^t\bnu) \Big[\sigma\Big]_\pm \big(i\omega\As+\Vs\big)\p\big(i\omega\overline{\PhiS}+\overline{\gradS\varphi}\big) \ds	    
\end{align*}
Therefore, the sesquilinear form on the shape derivative $(\A',\Vv')$ satisfies:
\begin{framed}
\begin{equation}\begin{array}{ll}\displaystyle\mathcal{L}(\A',\V';\Phi,\varphi)=
& \displaystyle\int_{\Omega}  (\tta^t\bnu)\Big[\dfrac{1}{\mu}\Big]_\pm \bnu^t\curl\A\p\overline{\bnu^t\curl\Phi}\ds
   +\displaystyle\int_{\Omega}  \Big[\mu\Big]_\pm \frac{1}{\mu}\curlS\A\p\frac{1}{\mu}\overline{\curlS\Phi} \ds.
\\& -\dfrac{1}{i\omega}\displaystyle\int_{\Gamma} (\tta^t\bnu) \Big[\sigma\Big]_\pm \big(i\omega\As+\Vs\big)\p\big(i\omega\overline{\PhiS}+\overline{\gradS\varphi}\big) \ds	    
\end{array}\end{equation}\end{framed}

\subsubsection{Impedance shape gradient}
We demonstrate in the sequel the shape gradient of the volume impedance signal measurement $\DZ_{kl}$ where we recall its formula taking into account Eq.~\eqref{defA} it follows
\begin{align*}
|\J|\DZ_{kl}(\Omega_d)=& {i\omega}\big(\dfrac{1}{\mu}-\dfrac{1}{\mu_0} \big)\int_{\Om_{d}}\curl\A_k(\Omega_d)\p{\curl\A_l^0 }\dv 
\\&+(\sigma_0-\sigma_d)\int_{\Om_d}(i\omega\A_k(\Omega_d)+\V(\Omega_d))\p{(i\omega\A_l^0+\V^0)} \dv.
\end{align*}
Thus 
\begin{align*}
|\J|\DZ_{kl}(\Omega_d^s)=& {i\omega}\big(\dfrac{1}{\mu}-\dfrac{1}{\mu_0} \big)\int_{\Om_{d}}\Big(\curl\A_k(\Omega_d)\Big)\circ T_s\p{\Big(\curl\A_l^0 \Big)\circ T_s} j_s \dv 
\\&+(\sigma_0-\sigma_d)\int_{\Om_d}\Big(i\omega\A_k(\Omega_d)\circ T_s+\big(\V(\Omega_d)\big)\circ T_s\Big)\p{\Big(i\omega\A_l^0\circ T_s+\big(\V^0\big)\circ T_s\Big)}j_s \dv.
%			--------	--------	--------
\\=& {i\omega}\big(\dfrac{1}{\mu}-\dfrac{1}{\mu_0} \big)\int_{\Om_{d}}\dfrac{DT_s^tDT_s}{j_s}\curl\Big(DT_s^t\A_k(\Omega_d)\circ T_s\Big)\p{\curl\Big(DT_s^t\A_l^0 \circ T_s\Big)} \dv 
\\&+(\sigma_0-\sigma_d)\int_{\Om_d}\Big(i\omega\A_k(\Omega_d)\circ T_s+DT_s^{-t}\grad\big(\Vv(\Omega_d)\circ T_s\big)\Big)\p{\Big(i\omega\A_l^0\circ T_s+DT_s^{-t}\grad\big(\Vv^0\circ T_s\Big)\Big)} j_s \dv.
\end{align*}
The shape derivative is therefore written as
\begin{align*}
|\J|\DZ_{kl}'&(\Omega_d^s,\tta)=
\\& {i\omega}\big(\dfrac{1}{\mu}-\dfrac{1}{\mu_0} \big)\int_{\Om_{d}}\Big(-\dvg(\tta)I+D\tta+D\tta^t\Big)\curl\A_k(\Omega_d)\p{\curl\A_l^0} \dv 
\\&+ {i\omega}\big(\dfrac{1}{\mu}-\dfrac{1}{\mu_0} \big)\int_{\Om_{d}}\Big( \curl\A_k'(\Omega_d) + \curl\big(D\tta^t\A_k+D\A_k\tta\big) \Big)\p{\curl\A_l^0} \dv 
\\&+ {i\omega}\big(\dfrac{1}{\mu}-\dfrac{1}{\mu_0} \big)\int_{\Om_{d}} \curl\A_k(\Omega_d)\p{\curl\big(D\tta^t\A_l^0 + D\A_l^0\tta\big)} \dv
\\&+(\sigma_0-\sigma_d)\int_{\Om_d}\dvg(\tta)\Big(i\omega\A_k(\Omega_d)+\grad{\bf V}_{c,k}(\Omega_d)\Big)\p{\Big(i\omega\A_l^0 + \grad {\bf V}_{c,l}^0\Big)} \dv.
\\&+(\sigma_0-\sigma_d)\int_{\Om_d}\Big(i\omega\A_k'(\Omega_d) + i\omega D\A_k\tta -D\tta^t \grad{\bf V}_{c,k} + \grad{\bf V}'_{c,k}(\Omega_d) +\grad\big(\tta\p\grad {\bf V}_{c,k}\big)\Big)\p{\Big(i\omega\A_l^0 + \grad \Vv^0\Big)} \dv.
\\&+(\sigma_0-\sigma_d)\int_{\Om_d}\Big(i\omega\A_k(\Omega_d) + \grad{\bf V}_{c,k}(\Omega_d)\Big)\p{\Big(i\omega D\A_l^0\tta -D\tta^t\grad{\bf V}_{c,l}^0+ \grad\big(\tta\p\grad {\bf V}_{c,l}^0 \big) \Big)} \dv.
\end{align*}
It reads also 
\begin{align}
|\J|\DZ_{kl}'&(\Omega_d,\tta)=\nonumber
\\=& {i\omega}\big(\dfrac{1}{\mu}-\dfrac{1}{\mu_0} \big)\int_{\Om_{d}} \curl\A_k'(\Omega_d) \p{\curl\A_l^0} \dv \nonumber
\\&+(\sigma_0-\sigma_d)\int_{\Om_d}\Big(i\omega\A_k'(\Omega_d)+ \grad{\bf V}'_{c,k}(\Omega_d)  \Big)\p{\Big(i\omega\A_l^0 + \grad \Vv^0\Big)} \dv \nonumber
\\&+(\sigma_0-\sigma_d)\int_{\Gamma_d}\Big[(\tta^t\bnu)\Big( i\omega\A_k(\Omega_d)+\grad{\bf V}_{c,k}(\Omega_d)\big)\p{(i\omega\A_l^0 + \grad {\bf V}_{c,l}^0)}\Big)\Big]_{\pm} \dv\label{la-6}
\\&-(\sigma_0-\sigma_d)\int_{\Om_d}\tta^t\grad\Big( \big(i\omega\A_k(\Omega_d)+\grad{\bf V}_{c,k}(\Omega_d)\big)\p{(i\omega\A_l^0 + \grad {\bf V}_{c,l}^0)}\Big) \dv\label{la-5}
\\&+ {i\omega}\big(\dfrac{1}{\mu}-\dfrac{1}{\mu_0} \big)\int_{\Om_{d}}\Big(\grad\big(\curl\A_k(\Omega)\p\tta \big)  + \mu\sigma(i\omega\A_k+{\bf V}_{c,k})\times\tta\big)\Big)\p{\curl\A_l^0} \dv\label{la-4} 
\\&+ {i\omega}\big(\dfrac{1}{\mu}-\dfrac{1}{\mu_0} \big)\int_{\Om_{d}} \curl\A_k(\Omega_d)\p{\curl\big(D\tta^t\A_l^0 + D\A_l^0\tta\big)} \dv\label{la-3}
\\&+(\sigma_0-\sigma_d)\int_{\Om_d}\Big(D(i\omega\A_k(\Omega_d)+ \grad{\bf V}_{c,k}(\Omega_d) )\tta \Big)\p{\Big(i\omega\A_l^0 + \grad \Vv^0\Big)} \dv\label{la-2}
\\&+(\sigma_0-\sigma_d)\int_{\Om_d}\Big(i\omega\A_k(\Omega_d) + \grad{\bf V}_{c,k}(\Omega_d)\Big)\p{\Big( D\big(i\omega\A_l^0 + \grad {\bf V}_{c,l}^0 \big)\tta \Big)} \dv\label{la-1}.
\end{align}
Remark that the sum of the domain integrals.~\eqref{la-5},\eqref{la-2} and \eqref{la-1} vanishes, due to the gradient of a scalar product. Remark also that domain integral.~\eqref{la-4} could be written, using divergence theorem, as 
\begin{align*}
{i\omega}&\big(\dfrac{1}{\mu}-\dfrac{1}{\mu_0} \big)\int_{\Om_{d}}\Big(\grad\big(\curl\A_k(\Omega)\p\tta\big)  + \mu\sigma(i\omega\A_k+{\bf V}_{c,k})\times\tta\big)\Big)\p{\curl\A_l^0} \dv
\\=&{i\omega}\int_{\Gamma}\Big[\dfrac{1}{\mu}\big(\tta^t\curl\A_k)\,\bnu\p{\curl\A_l^0}\Big]_\pm\dv
-{i\omega}\big(\dfrac{1}{\mu}-\dfrac{1}{\mu_0} \big)\int_{\Om_{d}}\mu\sigma\Big(i\omega\A_k(\Omega_d)+{\bf V}_{c,k}\Big)\p{\curl\A_l^0\times\tta} \dv.
\end{align*}
where we used the formula $\dvg(\big(\curl\A_k(\Omega)\p\tta\p{\curl\A_l^0})=\grad\big(\tta^t\curl\A_k(\Omega)\big)\p{\curl\A_l^0}$. The first domain integral in the above line, could be written%using surface Stock's theorem as:
%\begin{align*}
%{i\omega}\int_{\Gamma}\Big[\dfrac{1}{\mu}\big(\tta^t\curl\A_k)\,\bnu\p{\curl\A_l^0}\Big]_\pm\dv=&
%	{i\omega}\int_{\Gamma}\Big[\dfrac{1}{\mu}\gradS\big(\tta^t\curl\A_k)\p\bnu\times\A_l^0\Big]_\pm\dv.
%\\=&{i\omega}\int_{\Gamma}\gradS\Big( \big(\tta^t\bnu\big)\Big[\dfrac{1}{\mu} \curl\A_k)\Big]_\pm\Big)\p\bnu\times\A_l^0 \dv.
%\end{align*}
\begin{align*}
{i\omega}\int_{\Gamma}\Big[\dfrac{1}{\mu}\big(\tta^t\curl\A_k)\,\bnu\p{\curl\A_l^0}\Big]_\pm\dv=&
{i\omega}\int_{\Gamma} (\tta^t\bnu) \Big[ \dfrac{1}{\mu}\bnu^t\curl\A_k\Big]_{\pm}\bnu^t\curl\A_l^0 \ds
\end{align*}

For the domain integral.~\eqref{la-3}, we use the fact that $\curl\big(D\tta^t\A_l^0 + D\A_l^0\tta\big)=\curl\big(\curl\A\times\tta\big)$ and after integration by part we obtain
\begin{align*}
{i\omega}&\big(\dfrac{1}{\mu}-\dfrac{1}{\mu_0} \big)\int_{\Om_{d}} \curl\A_k(\Omega_d)\p{\curl\big(D\tta^t\A_l^0 + D\A_l^0\tta\big)} \dv 
\\&= {i\omega}\big(\dfrac{1}{\mu}-\dfrac{1}{\mu_0} \big)\bigg(\int_{\Om_{d}}\curl\curl\A_k(\Omega_d)\p{\curl\A\times\tta} \dv-\int_{\Gamma}\bnu\times\curl\A_k\p{\curl\A_l^0\times\tta}\ds \bigg)
\\&= {i\omega}\big(\dfrac{1}{\mu}-\dfrac{1}{\mu_0} \big) \int_{\Om_{d}}\mu\sigma\Big(i\omega\A_k(\Omega_d)+{\bf V}_{c,k}\Big)\p{\curl\A_l^0\times\tta} \dv
\\&\quad -{i\omega}\big(\dfrac{1}{\mu}-\dfrac{1}{\mu_0} \big)  \int_{\Gamma} (\bnu^t\overline{\curl\A^0_l}) (\tta^t\curl\A_k) \ds 
	      + {i\omega}\big(\dfrac{1}{\mu}-\dfrac{1}{\mu_0} \big) \int_{\Gamma}(\tta^t\bnu) \curl\A_k\p\overline{\curl\A^0_l}.
\end{align*}
Remak that the jump on the domain integral. ~\eqref{la-6} could be written using the tangential and the normal component of both $i\omega\A_k+{\bf V}_{c,k}$ and $i\omega\A_k^0+{\bf V}_{c,l}^0$, which remains taking into account the vanishing normal component of both (see~\eqref{strongPbAV}$_3$), as 
\begin{align*}
\int_{\Gamma_d}&\Big[\sigma\big(\tta^t\bnu\big)\Big( \big(i\omega\A_k(\Omega_d)+\grad{\bf V}_{c,k}(\Omega_d)\big)\p{(i\omega\A_l^0 + \grad {\bf V}_{c,l}^0)}\Big)\Big]_{\pm} \dv
\\&=\int_{\Gamma_d}\Big[\sigma\Big]_{\pm}\big(\tta^t\bnu\big)\Big( \big(i\omega\A_{k,\tau}(\Omega_d)+\grad{\bf V}_{c,k,\tau}(\Omega_d)\big)\p{(i\omega\A_{l,\tau}^0 + \grad {\bf V}_{c,l,\Tau}^0)}\Big) \dv.
\end{align*}

After incorporating all the above external calculus in the impedance shape derivative $\DZ'(\Omega,\tta)$ we obtain:
\begin{corollary}
The impedance shape derivative has the following form
\begin{framed}
\begin{equation}\begin{array}{lll}
|\J|\DZ_{kl}'(\Omega_d,\tta)=&{i\omega}\big(\dfrac{1}{\mu}-\dfrac{1}{\mu_0} \big)\displaystyle\int_{\Om_{d}} \curl\A_k'(\Omega_d) \p{\curl\A_l^0} \dv\vspace{.1in}
\\&+(\sigma_0-\sigma_d)\displaystyle\int_{\Om_d}\Big(i\omega\A_k'(\Omega_d)+ \grad{\bf V}'_{c,k}(\Omega_d)  \Big)\p{\Big(i\omega\A_l^0 + \grad \Vv^0\Big)} \dv\vspace{.1in}
\\&  \displaystyle\int_{\Gamma}  (\tta^t\bnu)\Big[\dfrac{1}{\mu}\Big]_\pm \bnu^t\curl\A_k\p\overline{\bnu^t\curl\A_l^0}\ds
   +\displaystyle\int_{\Gamma}  \Big[\mu\Big]_\pm \frac{1}{\mu}\curlS\A_k\p\frac{1}{\mu}\overline{\curlS\A_l^0} \ds.\vspace{.1in}
\\& -\dfrac{1}{i\omega}\displaystyle\int_{\Gamma} (\tta^t\bnu) \Big[\sigma\Big]_\pm \big(i\omega\A_{k,\tau}+\gradS\Vv\big)\p\big(i\omega\overline{\A_{l,\tau}^0}+\overline{\gradS \Vv^0}\big) \ds	    
\end{array}\end{equation}
\end{framed}
\end{corollary}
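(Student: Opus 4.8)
The plan is to obtain $\DZ_{kl}'(\Om_d,\tta)$ by differentiating, at $s=0$, the volume representation of $|\J|\DZ_{kl}$ transported onto the fixed reference deposit $\Om_d$ by the map $T_s$, and then to absorb the transport--induced bulk terms into surface jumps on $\Gamma=\Gamma_d$ using the strong Euler--Lagrange equations~\eqref{strongPbAV}. Concretely, I would start from the volume expression of $|\J|\DZ_{kl}(\Om_d)$ in terms of $\curl\A_k(\Om_d)$, $\A_k(\Om_d)$ and $\Vv(\Om_d)$ recalled just above the statement, pull the integral over $\Om_d^s$ back onto $\Om_d$ using the covariance rules for $\curl$ and $\grad$ stated above (the $DT_s/j_s$ law and the $DT_s^{-t}$ law), and differentiate in $s$ using $\partial_s j_s|_{s=0}=\dvg\tta$ and the other derivative formulas of the Proposition above (for $DT_s$, $DT_s^tDT_s/j_s$, etc.). Rewriting the material derivatives of $\A_k$ and $\Vv$ as shape derivatives through~\eqref{shapeE2shapeA} and the relation $\mathcal{W}=D\tta^t\E+D\E\tta+\E'$, one is left with the two ``clean'' volume terms ${i\omega}(\tfrac1\mu-\tfrac1{\mu_0})\int_{\Om_d}\curl\A_k'\p\curl\A_l^0$ and $(\sigma_0-\sigma_d)\int_{\Om_d}(i\omega\A_k'+\grad{\bf V}_{c,k}')\p(i\omega\A_l^0+\grad\Vv^0)$, together with the six transport integrals~\eqref{la-1}--\eqref{la-6}.

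Next I would eliminate the six transport integrals in three moves. First, the integrals~\eqref{la-5},~\eqref{la-2} and~\eqref{la-1} recombine, via the product rule for the gradient of the scalar product $(i\omega\A_k+\grad{\bf V}_{c,k})\p(i\omega\A_l^0+\grad{\bf V}_{c,l}^0)$, into zero plus the surface term already isolated in~\eqref{la-6}. Second, in~\eqref{la-4} I would use $\grad(\tta^t\curl\A_k)\p\curl\A_l^0=\dvg\big((\tta^t\curl\A_k)\,\curl\A_l^0\big)$, apply the divergence theorem, and substitute $\curl\curl\A_k=\mu\sigma(i\omega\A_k+{\bf V}_{c,k}+\J)$ from~\eqref{strongPbAV}$_1$ (with the gauge $\dvg\A_k=0$); this turns~\eqref{la-4} into a $\Gamma$-integral plus the bulk term $-{i\omega}(\tfrac1\mu-\tfrac1{\mu_0})\int_{\Om_d}\mu\sigma(i\omega\A_k+{\bf V}_{c,k})\p(\curl\A_l^0\times\tta)$. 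Third, in~\eqref{la-3} I would use the identity $\curl(D\tta^t\A_l^0+D\A_l^0\tta)=\curl(\curl\A_l^0\times\tta)$ established in~\eqref{idb}, integrate by parts, and again substitute $\curl\curl\A_k$ from~\eqref{strongPbAV}$_1$; this yields another $\Gamma$-integral plus the bulk term $+{i\omega}(\tfrac1\mu-\tfrac1{\mu_0})\int_{\Om_d}\mu\sigma(i\omega\A_k+{\bf V}_{c,k})\p(\curl\A_l^0\times\tta)$, which cancels exactly the one produced in the second move.

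Finally I would collect the surviving surface integrals on $\Gamma=\Gamma_d$. Splitting $\curl\A=(\bnu^t\curl\A)\bnu+\bnu\times(\curl\A\times\bnu)$ and $i\omega\A+\grad\Vv=(i\omega\A+\grad\Vv)_\tau+\big(\bnu^t(i\omega\A+\grad\Vv)\big)\bnu$, using the continuity of the tangential traces across $\Gamma$ and the vanishing of the normal trace $\bnu^t(i\omega\A+\grad\Vv)=0$ on $\Gamma_d$ from~\eqref{strongPbAV}$_3$, the remnants of~\eqref{la-6} and of the rewritten~\eqref{la-4} and~\eqref{la-3} collapse into $\int_\Gamma(\tta^t\bnu)\big[\tfrac1\mu\big]_\pm\bnu^t\curl\A_k\p\overline{\bnu^t\curl\A_l^0}\,\ds+\int_\Gamma\big[\mu\big]_\pm\tfrac1\mu\curlS\A_k\p\tfrac1\mu\overline{\curlS\A_l^0}\,\ds-\tfrac1{i\omega}\int_\Gamma(\tta^t\bnu)\big[\sigma\big]_\pm(i\omega\A_{k,\tau}+\gradS\Vv)\p(i\omega\overline{\A_{l,\tau}^0}+\overline{\gradS\Vv^0})\,\ds$, which is precisely the asserted formula; this is the same mechanism already used in Lemma~\ref{shape-derivatives} for the sesquilinear form. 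The only real difficulty is the bookkeeping: one must track the many bulk and boundary terms and recognize the two cross-cancellations, namely the product-rule cancellation of~\eqref{la-5},~\eqref{la-2},~\eqref{la-1}, and the cancellation of the two identical $\mu\sigma(i\omega\A_k+{\bf V}_{c,k})\p(\curl\A_l^0\times\tta)$ bulk pieces that appear from~\eqref{la-4} and~\eqref{la-3} only after the Euler--Lagrange substitution; once these are spotted, the normal/tangential decomposition on $\Gamma$ together with the transmission conditions finishes the proof.
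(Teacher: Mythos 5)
Your proposal follows essentially the same route as the paper: pull back $|\J|\DZ_{kl}(\Om_d^s)$ via $T_s$, differentiate at $s=0$ to get the two clean volume terms plus the transport integrals~\eqref{la-1}--\eqref{la-6}, cancel \eqref{la-5}+\eqref{la-2}+\eqref{la-1} by the product rule, convert \eqref{la-4} and \eqref{la-3} into surface jumps via the divergence theorem, identity~\eqref{idb} and the substitution $\curl\curl\A_k=\mu\sigma(i\omega\A_k+{\bf V}_{c,k}+\J)$ (with the two resulting bulk $\mu\sigma(\cdot)\p(\curl\A_l^0\times\tta)$ pieces cancelling), and finish with the normal/tangential decomposition and the transmission conditions on $\Gamma$. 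This matches the paper's own derivation step for step, including the two key cross-cancellations.
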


\subsubsection{The governing equation of the adjoint state}
Let us define the magnetic vector potential $P$ and the scalar electric potential $W_c$ as the unique solution of the following:
\begin{equation}
\mathcal{L}^*(P,W;\Phi,\Vv) =
\big[\dfrac{i\omega}{\mu}\big]_\pm \displaystyle\int_{\Om_d}\curl\A'(\Omega_d)\p\curl\A_l^0 \dv
- \big[\sigma\big] \displaystyle\int_{\Om_d}\Big(i\omega\A_k'(\Omega_d)+ \grad{\bf V}'_{c,k}(\Omega_d)  \Big)\p{\Big(i\omega\A_l^0 + \grad \Vv^0\Big)} \dv
\end{equation}
where $\mathcal{L}^*$ is the conjugate of the sesquilinear form $\mathcal{L}$ defined at Eq.~\eqref{sesquil}.
we have thus
\begin{eqnarray*}
|\J|\DZ_{kl}'(\Omega_d,\tta) &=& \mathcal{L}^*(P,W;\A'_k,{\bf V'}_{c,k}) + \mathcal{L}(\A'_k,{\bf V}'_{c,k};\A_l^0,{\bf V}_{c,l}^0).
\\ &=& \mathcal{L}(\A'_k,{\bf V'}_{c,k};\overline{P},\overline{W}) + \mathcal{L}(\A'_k,{\bf V}'_{c,k};\A_l^0,{\bf V}_{c,l}^0).
\\ &=& \mathcal{L}(\A'_k,{\bf V}'_{c,k};\A_l^0+\overline{P},{\bf V}_{c,l}^0+\overline{W}).
\end{eqnarray*}
We have therefore the  following theorem. 
\begin{theorem}

\begin{framed}
\begin{equation}\begin{array}{lll}
|\J|\DZ_{kl}'(\Omega_d,\tta)&=
 \displaystyle\int_{\Gamma}  (\tta^t\bnu)\Big[\dfrac{1}{\mu}\Big]_\pm \bnu^t\curl\A_k\p\overline{\bnu^t\curl\A_l^0+\overline{P}}\ds\vspace{.1in}
\\& \displaystyle\int_{\Gamma}  \Big[\mu\Big]_\pm \frac{1}{\mu}\curlS\A_k\p\frac{1}{\mu}\overline{\curlS\A_l^0+\curlS\overline{P}} \ds\vspace{.1in}
\\& -\displaystyle\dfrac{1}{i\omega}\int_{\Gamma}\Big[\sigma\Big]_{\pm}(\tta^t\bnu)(i\omega\A_{k,\tau}+{\bf V}_{c,k,\tau})\p \Big(i\omega\overline{\bnu\times(\A_l^0+\overline{P})} 
 + \overline{\bnu\times(\grad{\bf V}_{c,l}^0+\grad W)}\Big) \ds\end{array}\end{equation}
\end{framed}
\end{theorem}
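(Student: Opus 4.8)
The plan is to assemble the final theorem by combining three ingredients already established in the excerpt: the \emph{Corollary} giving the explicit form of $|\J|\DZ_{kl}'(\Omega_d,\tta)$ in terms of the shape derivative pair $(\A_k',{\bf V}_{c,k}')$ plus surface jump terms; the \emph{governing equation of the shape derivative}, which expresses $\mathcal{L}(\A',\V';\Phi,\varphi)$ as a sum of the same type of surface integrals; and the \emph{defining relation for the adjoint state} $(P,W)$ via the conjugate form $\mathcal{L}^*$. The first step is simply to read off from the Corollary that the $\Om_d$-volume contributions to $|\J|\DZ_{kl}'$ are exactly $\mathcal{L}(\A'_k,{\bf V}'_{c,k};\A_l^0,{\bf V}_{c,l}^0)$ after unfolding the sesquilinear form $\mathcal{L}$ defined in~\eqref{sesquil}; this is the motivation for the particular right-hand side chosen in the adjoint problem.

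Next I would carry out the adjoint manipulation displayed just before the theorem: by definition of $(P,W)$ through $\mathcal{L}^*$, the right-hand side of the adjoint equation equals the two volume integrals appearing in the Corollary, so $|\J|\DZ_{kl}'(\Omega_d,\tta)=\mathcal{L}^*(P,W;\A'_k,{\bf V}'_{c,k})+\mathcal{L}(\A'_k,{\bf V}'_{c,k};\A_l^0,{\bf V}_{c,l}^0)$. Using that $\mathcal{L}^*$ is the conjugate of $\mathcal{L}$, one rewrites $\mathcal{L}^*(P,W;\A'_k,{\bf V}'_{c,k})=\mathcal{L}(\A'_k,{\bf V}'_{c,k};\overline P,\overline W)$; then sesquilinearity in the test-function slot collapses the sum into $\mathcal{L}(\A'_k,{\bf V}'_{c,k};\A_l^0+\overline P,{\bf V}_{c,l}^0+\overline W)$. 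At this point the expression for the shape gradient no longer contains the volume integrals, only the single evaluation of $\mathcal{L}$ on the shifted test functions.

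The final step is to invoke the \emph{governing equation of the shape functions}, which states precisely that $\mathcal{L}(\A',\V';\Phi,\varphi)$ equals the three boundary-jump integrals over $\Gamma$ (the $[1/\mu]_\pm$ term, the $[\mu]_\pm$ surface-curl term, and the $-\tfrac1{i\omega}[\sigma]_\pm$ tangential-trace term), for test functions in $(\mathcal D(\Om))^3\times\mathcal D(\Om)$. Applying this with $\Phi=\A_l^0+\overline P$ and $\varphi={\bf V}_{c,l}^0+\overline W$ (and using the continuity of tangential traces and vanishing of the normal component of $i\omega\A+\V$ on $\Gamma$ to write the last term with $\bnu\times(\cdot)$), one obtains exactly the boxed formula claimed in the theorem. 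The surface-curl term needs the identity $\curlS(\A_l^0+\overline P)=\curlS\A_l^0+\curlS\overline P$, which is just linearity of the tangential curl.

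The main obstacle I anticipate is bookkeeping rather than conceptual: one must be scrupulous about complex conjugation (the adjoint pair carries a bar, and the governing equation for $\A'$ was derived with a specific convention for which slot is conjugated), and about which surface traces are tangential versus normal, so that replacing $i\omega\A_l^0+\grad{\bf V}_{c,l}^0$ by its tangential part and then by $\bnu\times(\bnu\times(\cdot))$-type expressions is legitimate. A secondary subtlety is justifying that $\A_l^0+\overline P$ and ${\bf V}_{c,l}^0+\overline W$ are admissible as test functions in the variational identity established only for smooth compactly supported fields; this requires a density argument, relying on the continuity of both sides of the governing equation in the appropriate $H(\curl)\cap H_0(\dvg)$ and $H^1$ norms, together with the regularity of the background fields and of the adjoint solution away from $\Gamma$.
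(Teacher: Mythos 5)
Your overall route is the same as the paper's: decompose $|\J|\DZ_{kl}'$ via the Corollary, absorb the two pieces into $\mathcal{L}^*$ and $\mathcal{L}$, use additivity in the antilinear slot to collapse everything into $\mathcal{L}(\A'_k,{\bf V}'_{c,k};\A_l^0+\overline P,{\bf V}_{c,l}^0+\overline W)$, and finish with the governing equation of the shape derivative. But your first identification is wrong, and it leaves a real hole. You claim that the two $\Om_d$-volume integrals of the Corollary are ``exactly $\mathcal{L}(\A'_k,{\bf V}'_{c,k};\A_l^0,{\bf V}_{c,l}^0)$ after unfolding the sesquilinear form.'' Unfolding the definition \eqref{sesquil} produces integrals over all of $\Omega$ and $\Omega_c$ with coefficients $1/\mu$, $1/\tilde\mu$ and $-\sigma/(i\omega)$, together with a $\dvg$--$\dvg$ term and conjugation on the second slot; the Corollary's volume terms live on $\Om_d$ only and carry the difference coefficients $i\omega(1/\mu-1/\mu_0)$ and $\sigma_0-\sigma_d$. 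These are not the same object. What is true --- and what the argument needs --- is that those volume terms coincide with the right-hand side of the adjoint problem evaluated at the test pair $(\A'_k,{\bf V}'_{c,k})$, i.e.\ with $\mathcal{L}^*(P,W;\A'_k,{\bf V}'_{c,k})$. You do state this, but then your two premises account for the volume terms twice and say nothing about where the Corollary's three surface integrals over $\Gamma$ go, so the displayed identity $|\J|\DZ_{kl}'=\mathcal{L}^*(P,W;\A'_k,{\bf V}'_{c,k})+\mathcal{L}(\A'_k,{\bf V}'_{c,k};\A_l^0,{\bf V}_{c,l}^0)$ does not follow from what you wrote.

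The missing idea is the other half of the matching: the three surface-jump integrals in the Corollary (the $[1/\mu]_\pm$, $[\mu]_\pm$ and $[\sigma]_\pm$ terms paired with $\A_l^0$ and ${\bf V}_{c,l}^0$) are precisely $\mathcal{L}(\A'_k,{\bf V}'_{c,k};\A_l^0,{\bf V}_{c,l}^0)$, by the framed governing equation for the shape derivative applied with the test pair $(\Phi,\varphi)=(\A_l^0,{\bf V}_{c,l}^0)$ --- not by unfolding the definition of $\mathcal{L}$. Once ``volume $=\mathcal{L}^*(P,W;\A'_k,{\bf V}'_{c,k})$'' and ``surface $=\mathcal{L}(\A'_k,{\bf V}'_{c,k};\A_l^0,{\bf V}_{c,l}^0)$'' are both in place, the remainder of your argument (the conjugate identity $\mathcal{L}^*(P,W;\cdot)=\mathcal{L}(\cdot;\overline P,\overline W)$, additivity in the test slot, and a final application of the governing equation with $\Phi=\A_l^0+\overline P$, $\varphi={\bf V}_{c,l}^0+\overline W$) goes through exactly as in the paper. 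Your closing caveats about conjugation conventions and about whether $\A_l^0+\overline P$ is an admissible test function are legitimate; the paper glosses over both as well.
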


\subsection{Explicit formula of the shape gradient}
Let us recall the cost functional:
\begin{equation*}
  \mathcal{J}(\Omega_{D}) = \int_{z_{\min}}^{z_{\max}} |\Z(\Omega_{D};\zeta) - \Z_{mes}(\zeta)|^{2} d\zeta,
\end{equation*}
where $\Z$ is either $\Z_{FA}$ or $\Z_{F3}$ according to the measurement mode: 
\begin{align*}
  & \Z_{FA}(\tta) = \frac{\i}{2}( \DZ_{11}(\tta)+  \DZ_{21}(\tta)), \qquad
  & \Z_{F3}(\tta) = \frac{\i}{2}( \DZ_{11}(\tta)-   \DZ_{22}(\tta)), \\
  & \Z'_{FA}(\tta) = \frac{\i}{2}( \DZ'_{11}(\tta)+\DZ'_{21}(\tta)), \qquad
  & \Z'_{F3}(\tta) = \frac{\i}{2}( \DZ'_{11}(\tta)- \DZ'_{22}(\tta)).
\end{align*}
The shape derivative is stated as follows
\begin{equation}
  \mathcal{J}'(\Omega_{d})(\tta) = 2\int_{\Omega} \Re\bigg(\Z'(\Omega_d,\tta)\p\overline{\Z(\Omega_{D};\zeta) - \Z_{mes}(\zeta)}\bigg) \dv
\end{equation}
\begin{eqnarray}
  \mathcal{J}'(\Omega_{d})(\tta) = -\frac{\omega}{I^{2}} \int_{\Gamma_{0}}(\bnu\cdot\tta)g \ds,
\end{eqnarray}
where $g = g_{11}+g_{21}$ in the absolute mode or $g = g_{11}-g_{22}$ in the differential mode, with
\begin{eqnarray*}
  g_{kl} = \int_{z_{\min}}^{z_{\max}} & \Re \bigg( (\overline{Z(\Omega_{D};\zeta) - Z_{mes}(\zeta)}) \bigg\{
    \left[\frac{1}{\mu}\right]_{\pm}(\bnu\cdot\curl A_{k})(\bnu\cdot\overline{P_{l}}-\bnu\cdot\curl A^{0}_{l}) \notag \\
  & - [\mu]_{\pm} \left(\bnu\times(\frac{1}{\mu}\curl A_{k}\times \bnu)\right)
    \cdot \left(\bnu\times(\frac{1}{\mu}\curl\overline{P_{l}}\times \bnu) - \bnu\times(\frac{1}{\mu^{0}}\curl A^{0}_{l}\times \bnu)\right) \notag \\
  & + \frac{1}{\i\omega}[\sigma]_{\pm}(\i\omega A_{k\tau}+\grad_{\tau}V_{k})
    \cdot(\overline{\i\omega P_{l\tau}+\grad_{\tau}W_{l}} + \i\omega A^{0}_{l\tau}+\grad_{\tau}V^{0}_{l})
    \bigg\}\bigg) d \zeta.
\end{eqnarray*}
We choose the shape pertubation $\tta$ such that
\begin{eqnarray*}
  \tta = g \bnu \quad \text{on }\Gamma,
\end{eqnarray*}
which is a minimizing direction since
\begin{eqnarray*}
  \mathcal{J}'(\Omega_{d,0})(\tta) = -\frac{\omega}{I^{2}} \int_{\Gamma_{0}} |g|^{2} d s \leq 0.
\end{eqnarray*}

\subsection{Parametrized inversion}
In the cylindrical coordinates $(\vec l,\vec \rho,\vec z)$, assume $\vec \rho$ is fixed (e.g. $\rho=2\pi$ axisymmetric case). We want to retrieve the shape of a deposit which parameters can be expressed
 on the coordinate $(\vec l,\vec z)$. Since the bound of the variable $z$ can be approximated with a signal processing techniques, we are therefore considered with just one variable .i.e. $l$. In this case, the outward normal can be expressed as $\bnu=\epsilon \vec{l}$
%\subsection{Regularization of the gradient}

%---------------------------------------------------------------------------------------------------
%							PARTIE EXPERIMENTALE
%---------------------------------------------------------------------------------------------------

\section{Numerical tests and experiments}

%$$
%\begin{array}{|c|c|c|}\hline 
%\omega=2\pi e5 & \mu_0=4\pi e-7 & \epsilon_0=8.8542e-12 \\\hline \end{array}
%$$
	We present and explain in the sequel some particular techniques to achieve performance of the direct solver, and consequently the inverse solver. 
	
	Probing of deposits is an operation of scan with two probes introduced along the tube. The probes act as cameras and detect abnormal variations of the signal measurements at the presence of a conductor default. Obviously the measurement signal will be confronted with a healthy signal in order to detect change, hence information about deposit. In the mathematical point of view the probes are modeled with coils that generate solenoidal magnetic field that constitute the source term in the equation. Typically at each position of the coils we have a new solution related to the new source term. If we consider a new mesh related to the new position of coils, we are obliged to assemble new matrices and solve new systems, which are typically very huge in terms of memory occupation. It is therefore a good programming practice to create a unique mesh that incorporates all movement of coils along the tube. Hence one only needs to modify the right hand side of the system. Since we use sparse-direct parallel solver we are in good position to factorize the main matrix only once and assemble right hand sides at each new coil position. 
  \begin{figure}[!htbp]
  \begin{minipage}[c]{0.45\linewidth}
   \centering
\includegraphics[width=4cm,height=5cm]{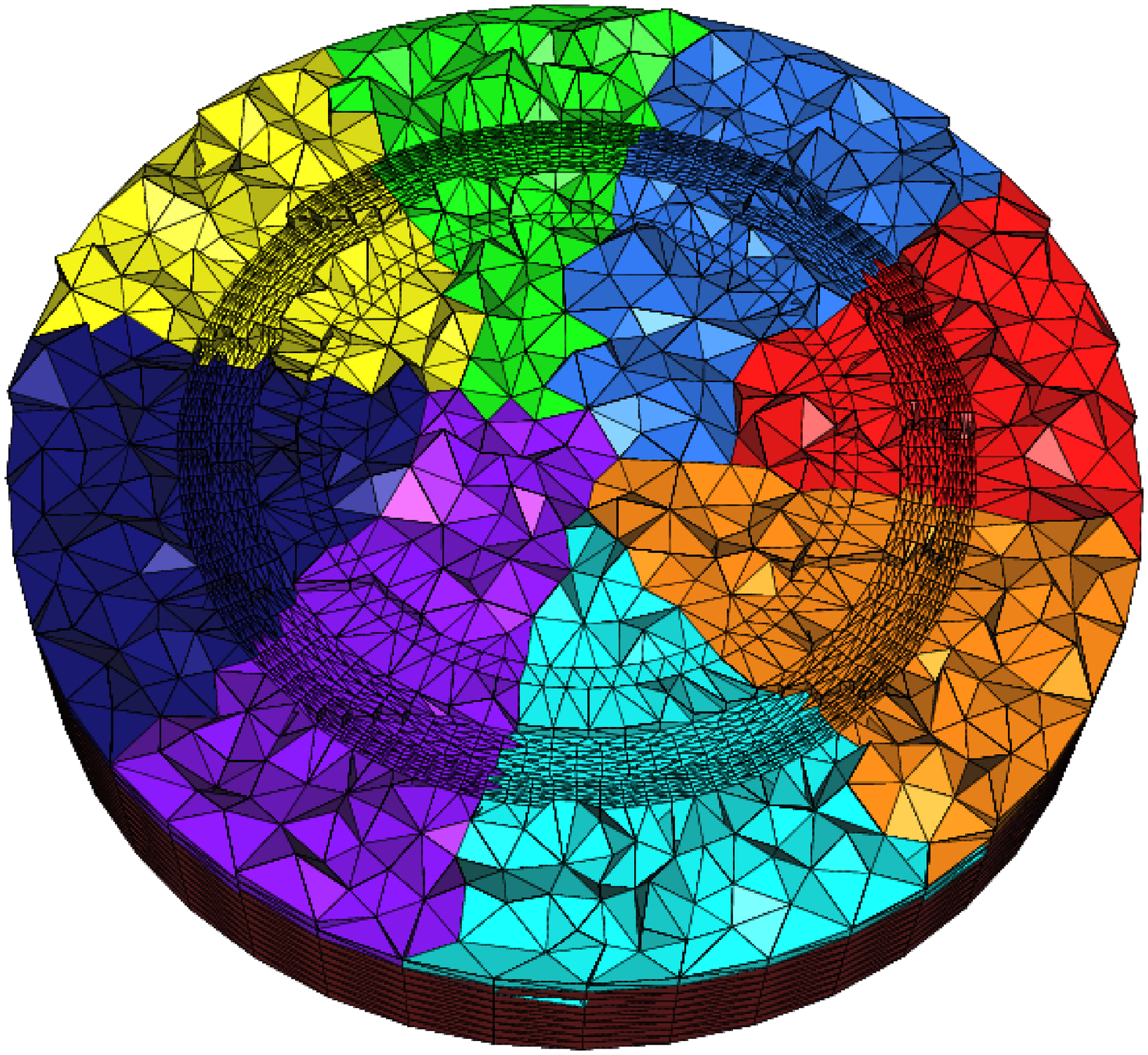}
  \end{minipage}
  \begin{minipage}[c]{0.45\linewidth}
   \centering
\includegraphics[width=4cm,height=5cm]{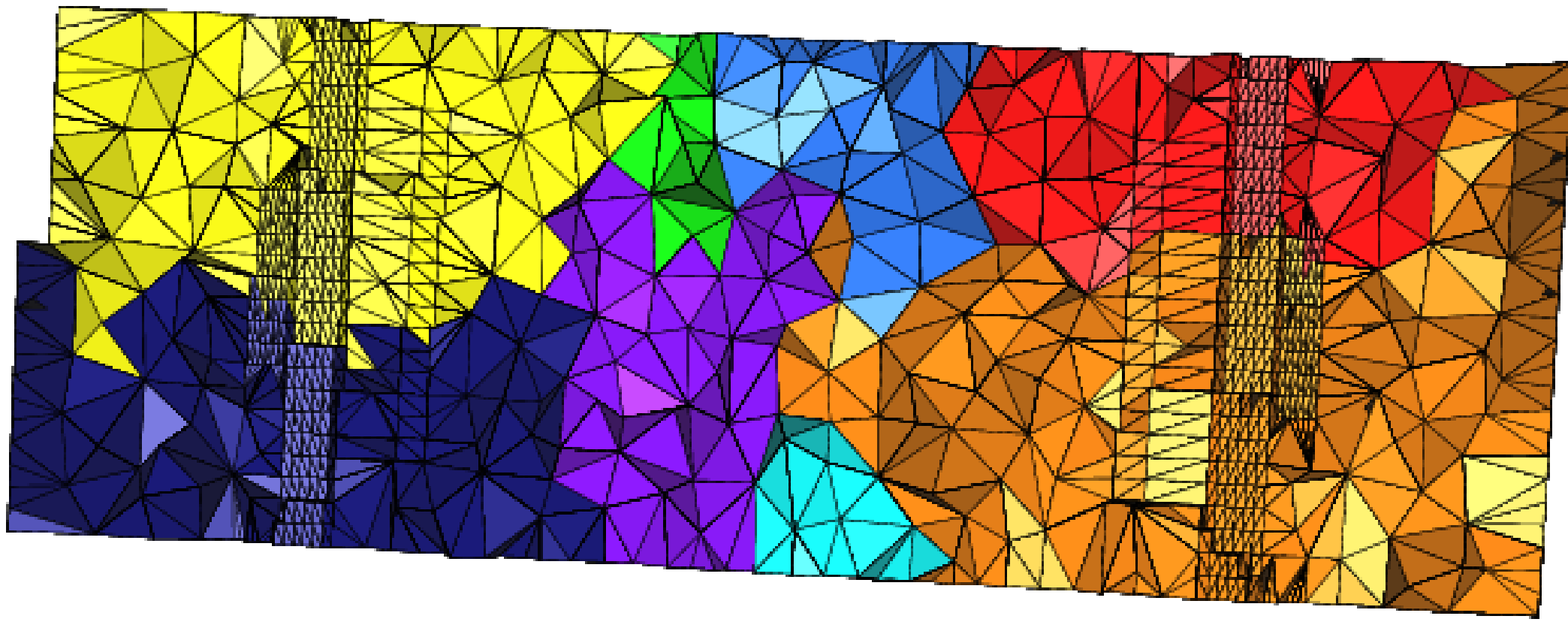}
  \end{minipage}
  \begin{minipage}[c]{0.45\linewidth}
   \centering
\includegraphics[width=4cm,height=5cm]{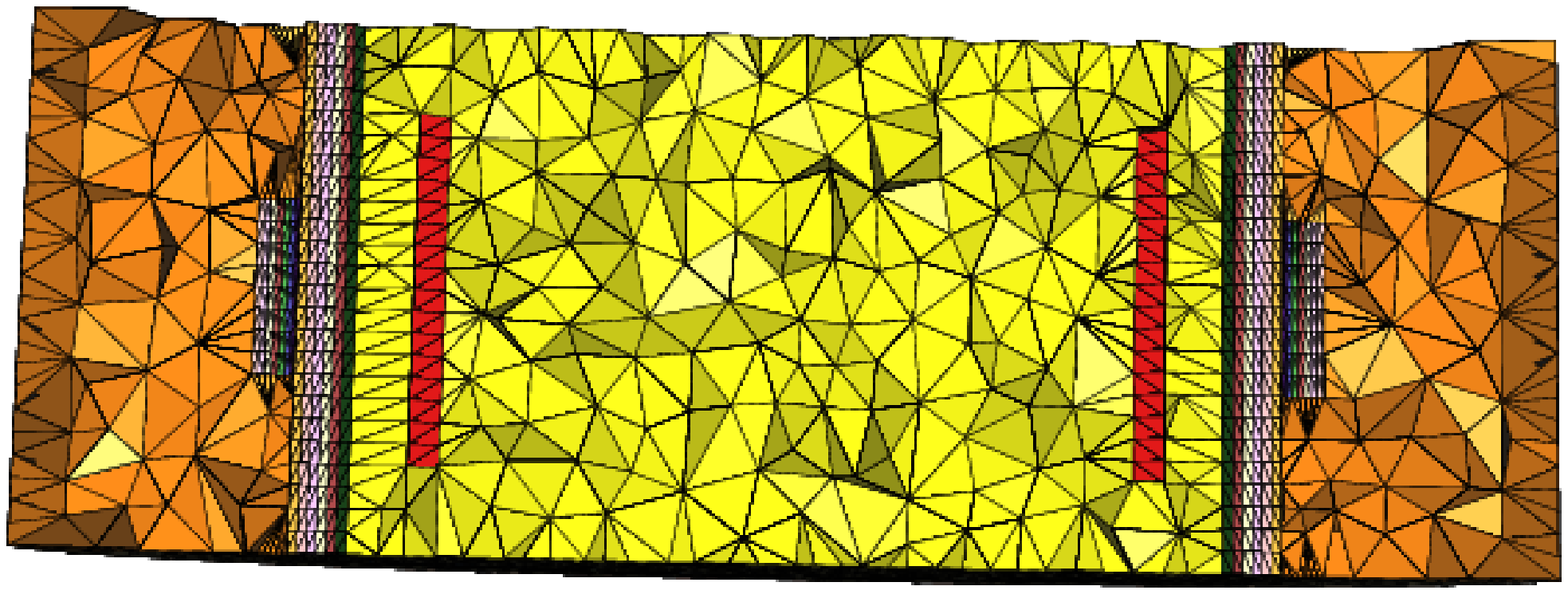}
  \end{minipage}
  \begin{minipage}[c]{0.45\linewidth}
   \centering
\includegraphics[width=4cm,height=5cm]{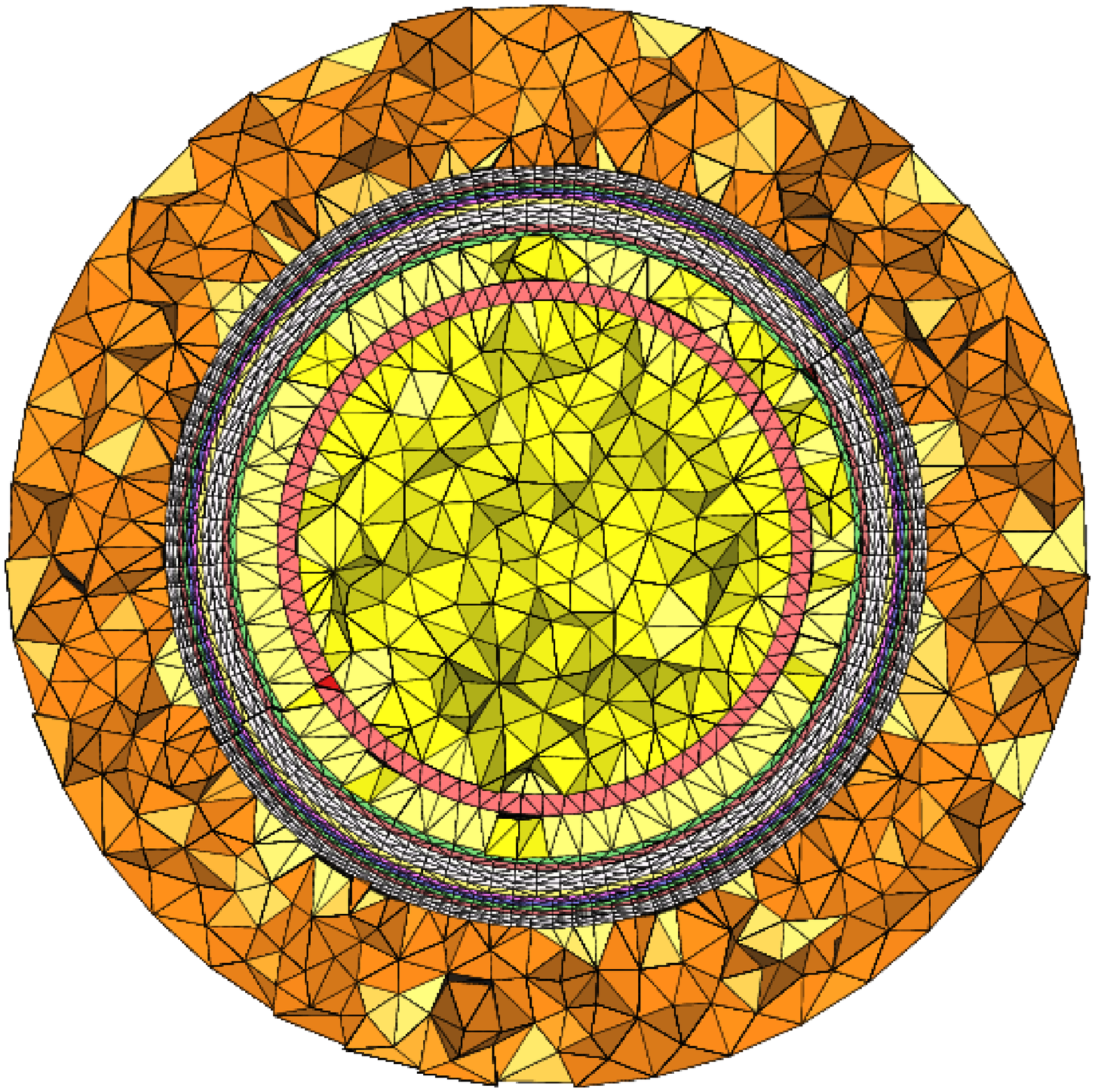}
  \end{minipage}
  \caption{Partition of the 3D mesh in order to deal with parallel matrix assembly. Identification of different regions of the heterogeneous $\sigma$ and $\mu$ (on top) and new label for element according to the new partition of the mesh.}
  \label{meshpartition}
\end{figure}	
	The variational formulation, namely the sesquilinear form defined at Eq.~\eqref{sesquil}, is discretized in order to assemble the finite element matrices. Assuming the above techniques, we have thus two global matrices to deal with during the probing process; one that takes into account the presence of the deposit and the second that disregards it and considers the vacuum. Even the scan process is reduced in term of computational time, because the time consuming of the factorization, it is important to propose parallelization across the assembly in order to accelerate the resolution. However, particular attention must be taken when the problem is non-homogeneous, in the sense of the change of the conductivities and the permeability in the domain. It's good practice to declare those variable (i.e. $\sigma$ and $\mu$ as P0-Lagrange finite elements variables) before the assembly of matrices. This task is done using one graph partitioner e.g. scotch~\cite{Pellegrini01scotchand} or metis~\cite{Karypis95metis} . After partitioning the mesh, elements change their labels as the ranks of the used group of processors. It is therefore accurate to define the P0-Lagrange non-homogeneous domain variable on the non-partitioned mesh and then include them in the variational formulation that admits the partitioning. We present in Fig.~\ref{meshpartition} the same 3d mesh before and after the partition. The top Figures provides partition into 8 sub-regions of the 3d-mesh using scotch graph partitioner, whereas the Figure on the bottom presents the initial mesh where regions identify the insulator, tube and deposit parts. 
	
\begin{quotation}
In order to limit the cash-memory usage, the validation of the direct solver considers some reductions of the computational domain. 
	 We therefore calibrate as it happens the electromagnetic parameters: the permeability and the conductivity of the deposit, while almost test cases keep the conductivity and permeability of Tube as described in Tab.~\ref{industparams}. Thus, we can uses coarse triangulation of the computational domain that include the vacuum and deposits. 
\end{quotation}

  \begin{figure}[!htbp] 
  \begin{minipage}[c] {0.45\linewidth}
   \centering
\includegraphics[width=6cm,height=4.6cm] {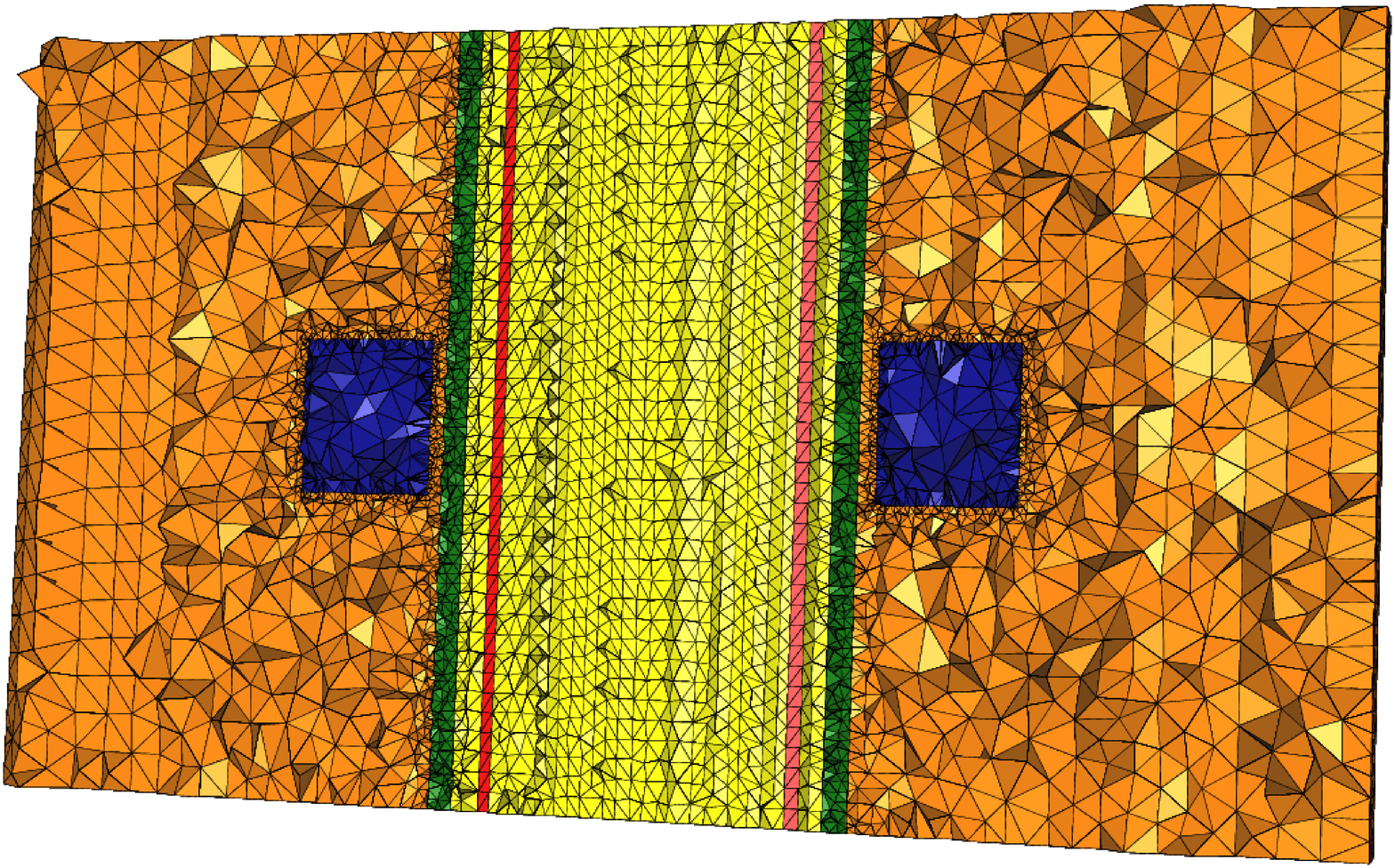}
  \end{minipage}
  \begin{minipage}[c] {0.45\linewidth}
   \centering
\includegraphics[width=3cm,height=4cm] {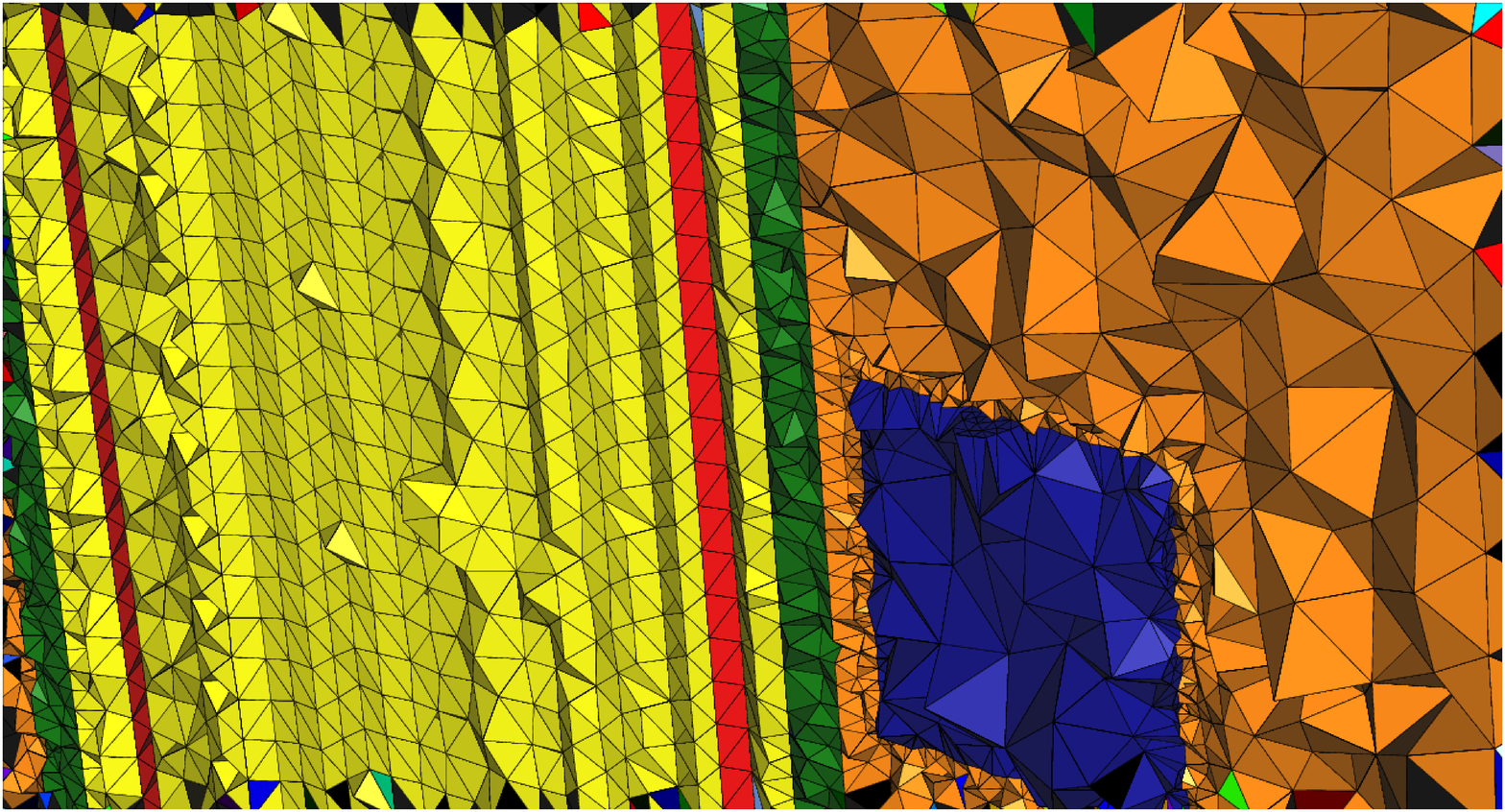}
  \end{minipage}
  \begin{minipage}[c] {0.45\linewidth}
   \centering
\includegraphics[width=6cm,height=4.6cm] {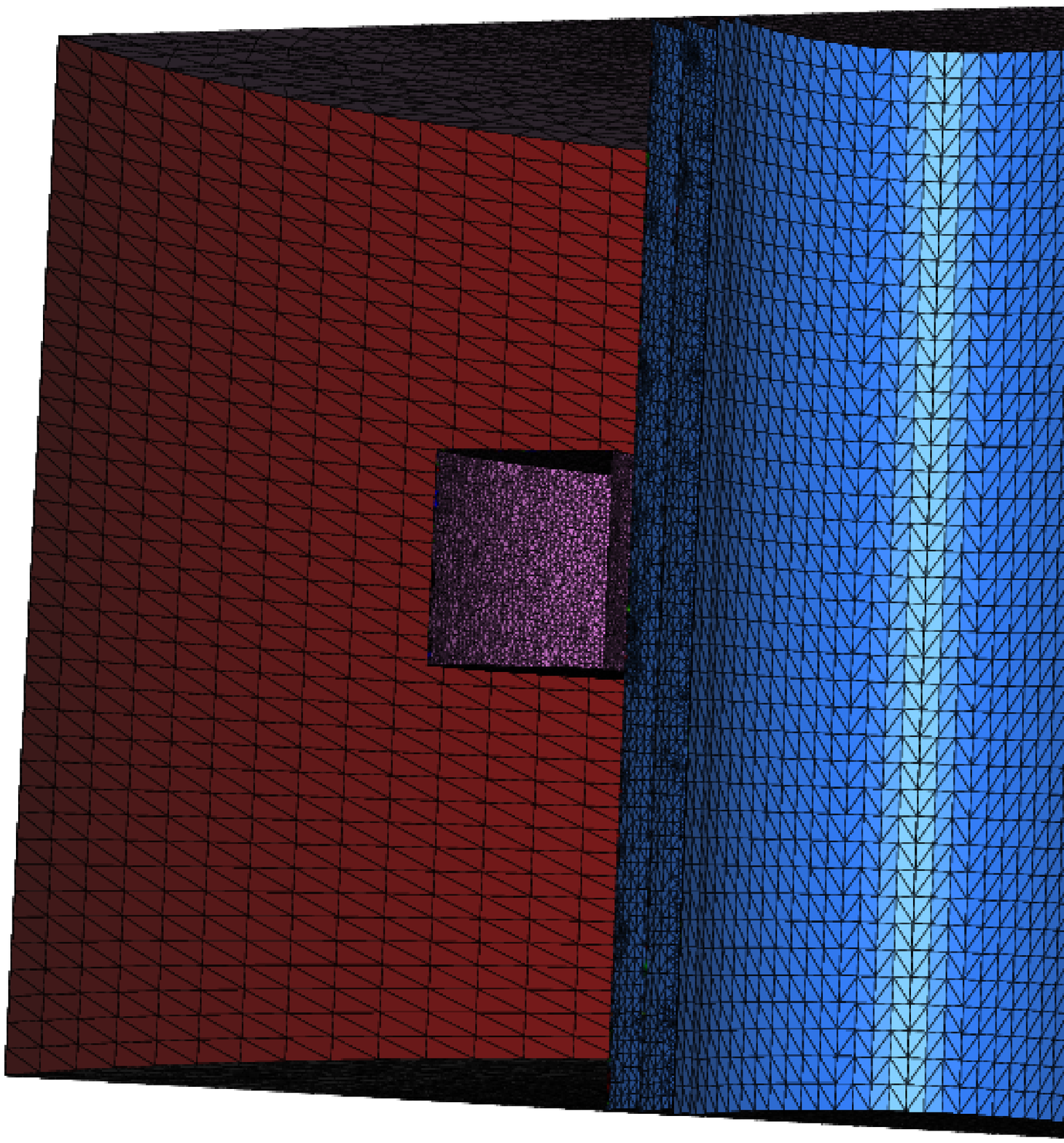}
  \end{minipage}
  \begin{minipage}[c] {0.45\linewidth}
   \centering
\includegraphics[width=3cm,height=4cm] {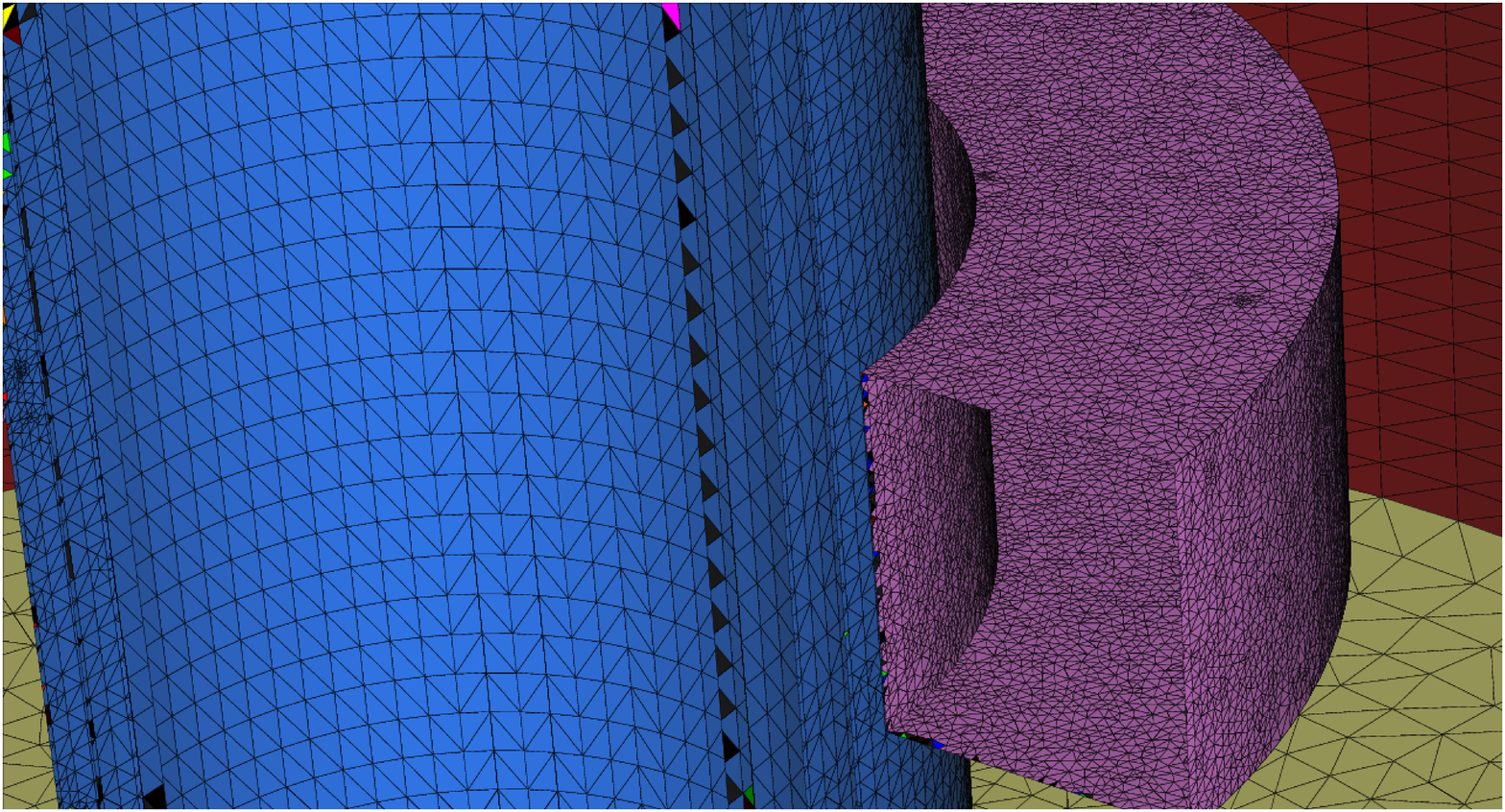}
  \end{minipage}
  \caption{Anisotropic local surface refinement for a high conductive axisymmetric default}
  \label{surfraffine}
\end{figure}	
In the case of a high conductivity of the deposit, we have to refine the mesh locally over this deposit. Anisotropic metric is therefore needed to ensure local refinement. Furthermore if the deposit skin depth is located around the boundary surface of the deposit, we no longer need to refine all the volume of the deposit, but we just need surface refinement. 
	Technically, we implement anisotropic surface 3D mesh refinement using P1-finite element function defined around the boundary of the deposit. That function is therefore used as metric for the mesh generator .e.g "tetgen", which is incorporated in FreeFem++. Fig.~\ref{surfraffine} shows the resulting 3D mesh after the local surface refinement. 
	
%----------------------------------------------------
%		        Axi symmetric case
%----------------------------------------------------

\subsection{3D vs 2D validation of the direct solver (the axisymmetric case)}  
Theoretically the axisymmetric solution is independent of the problem dimension. In fact, in the vectorial form of maxwell equation $\E\equiv(\E_x,\E_y,\E_z)^T$ the projection of the solution $\E$ on the plan .e.g $\vec y,\vec z$ gives only $\E_x$, which is the axisymmetric solution. 
 \begin{figure}[htpb] 
  \begin{minipage}[c] {0.48\linewidth}
   \centering
\includegraphics[origin=c,width=6cm,height=6cm] {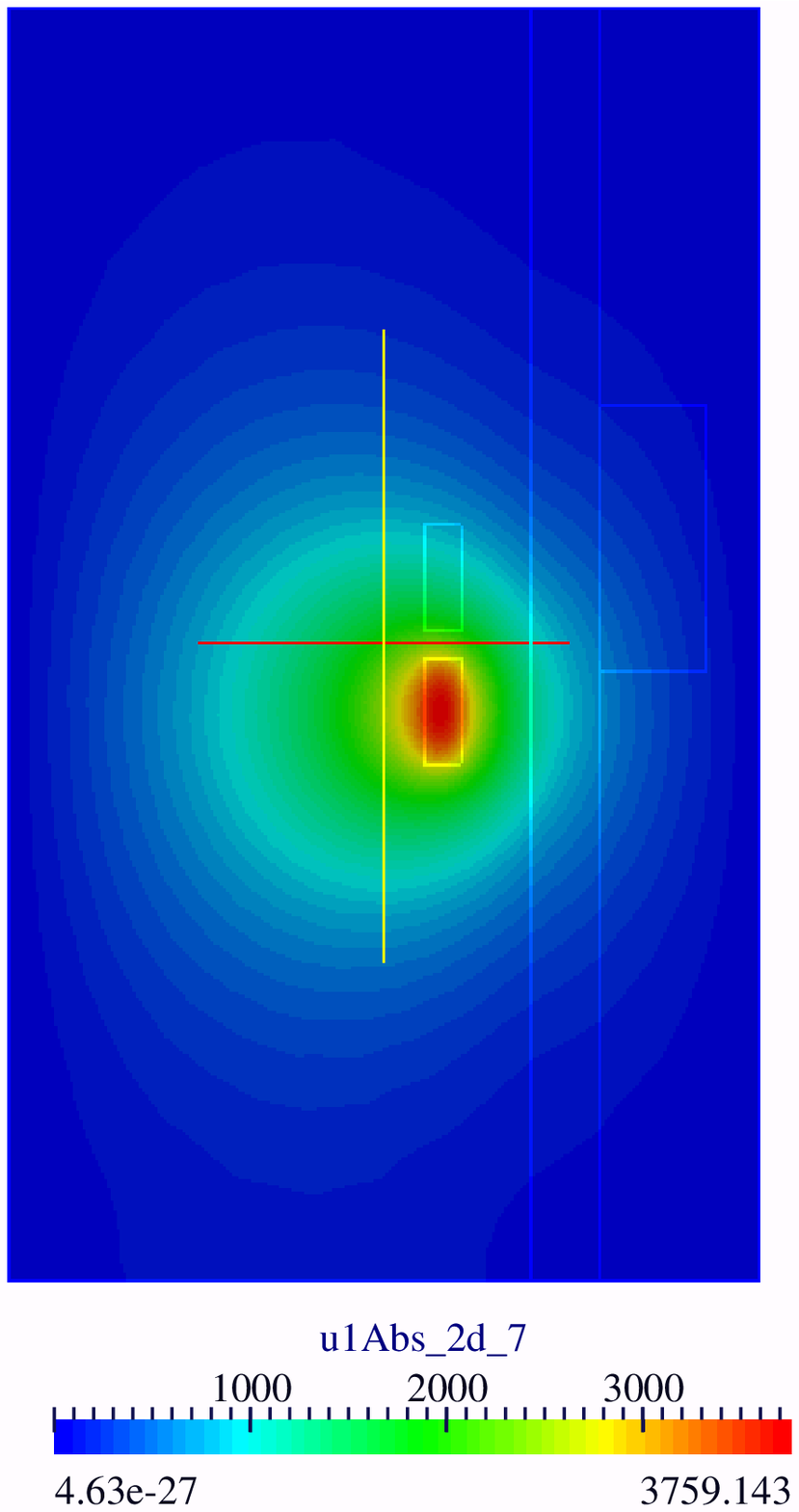}
  \end{minipage}
  \begin{minipage}[c] {0.48\linewidth}
   \centering
\includegraphics[origin=c,width=6cm,height=7cm] {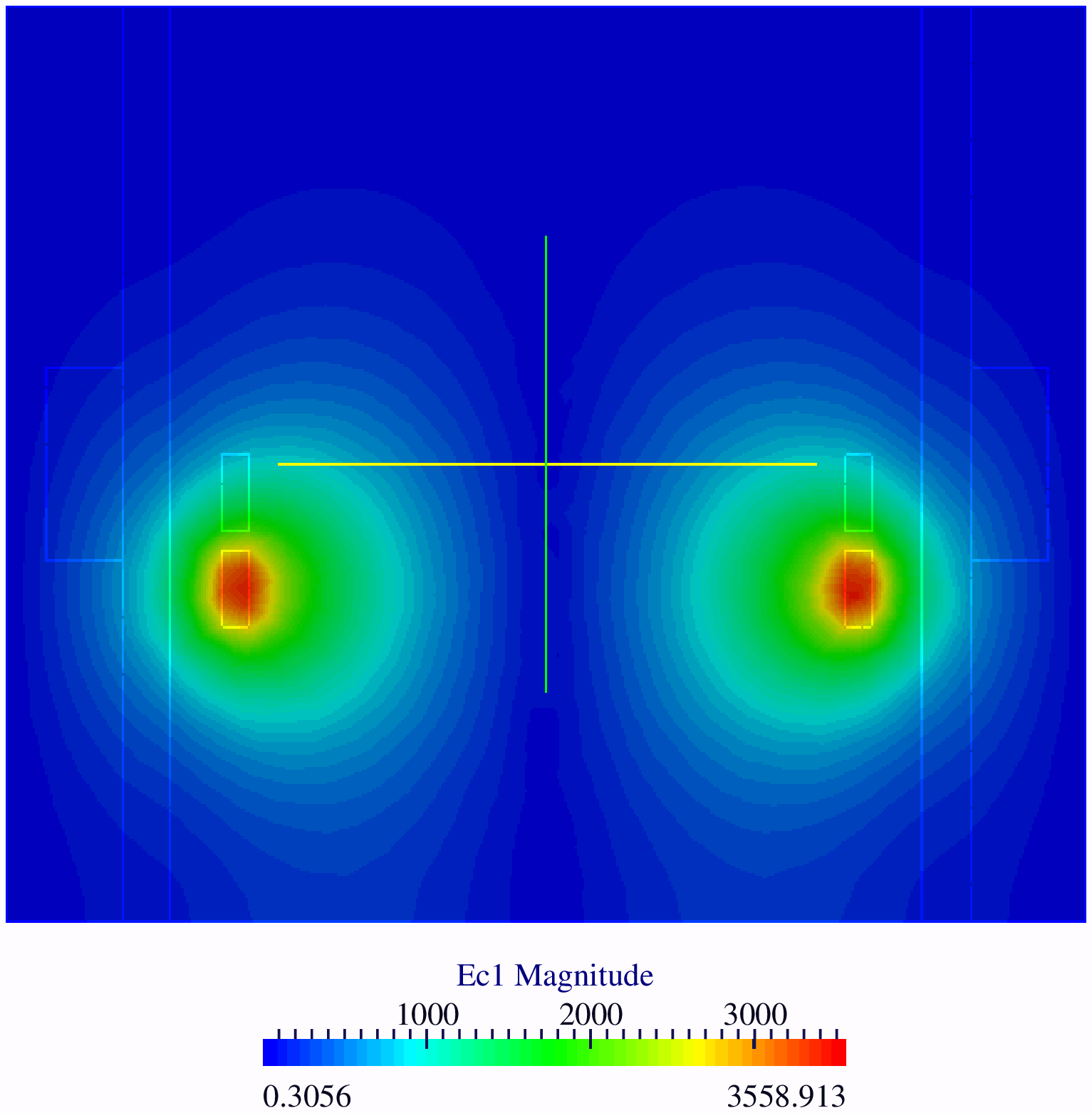}
  \end{minipage}
  \caption{Propagation of the axisymmetric solution (magnetic field $\E$) in the computational domain 2D (left) and 3D (magnetic field $\E=i\omega\A+\V$) (right). Calculus corresponds to $\sigma_d=1e3$.}\label{crownFirstTEST}
\end{figure}
	
	This paragraph concerns the 3D-extension of the axisymmetric simulation~\cite{ectod2d}. We give a comparative results in order to validate the 3D direct solver. 
	It is worth noting that, in order to have a good approximation of the electric fields $\E$ in vacuum we consider the closer part of the deposit as a conductor part, where we put $\sigma_{effictive}=0.1$. 
	In this test case, we consider a crown surrounding the tube as a deposit, Fig.\ref{crownFirstTEST} shows the cross section cut for the whole geometry. As we are concerned with a perfect continuity of the electromagnetic field, we thus consider the tangential component of the electric field vanishing i.e. we put $\E\times\bnu=0$ at the boundary. The comparison results are presented in Fig.~\ref{compareComlexplanValidation} with a complex plan representation of the absolute signal mode $FA$ as well as the differential signal mode $F3$.
	
\begin{figure}[htpb] 
\begin{tabular}{cc}
\includegraphics[angle=-90,origin=c,width=6cm,height=6cm] {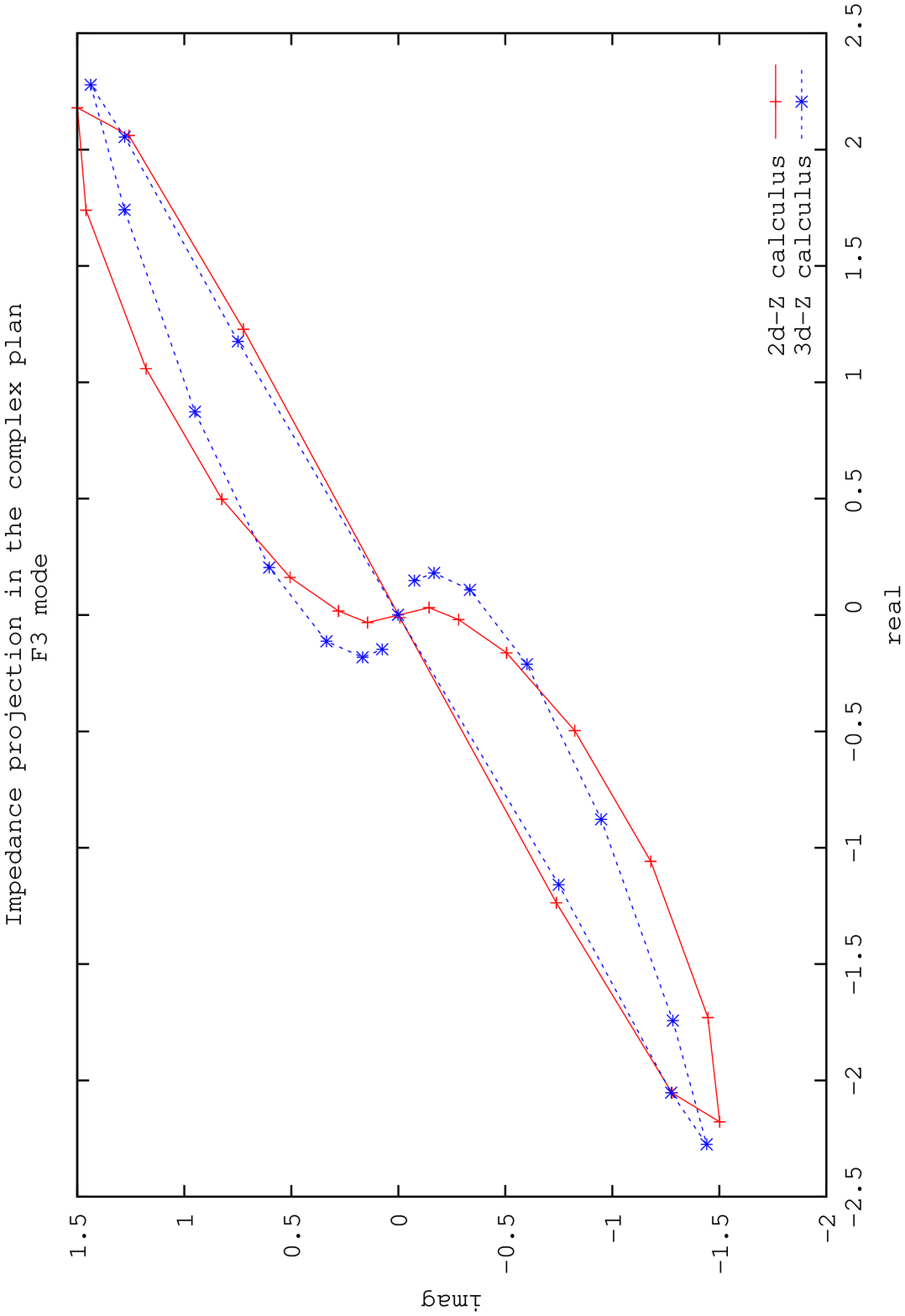} & 
\includegraphics[angle=-90,origin=c,width=6cm,height=6cm] {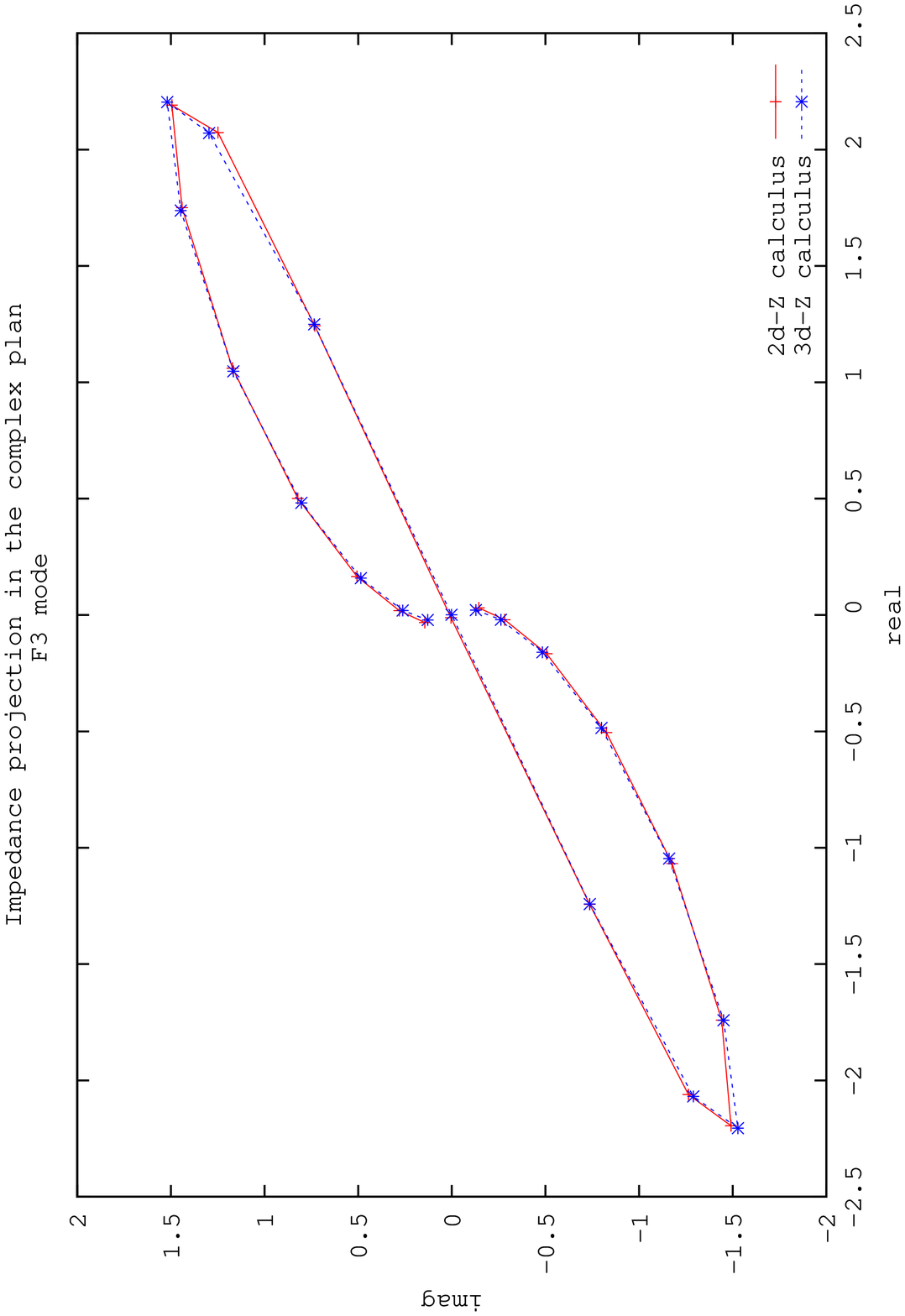} \\
\includegraphics[angle=-90,origin=c,width=6cm,height=6cm] {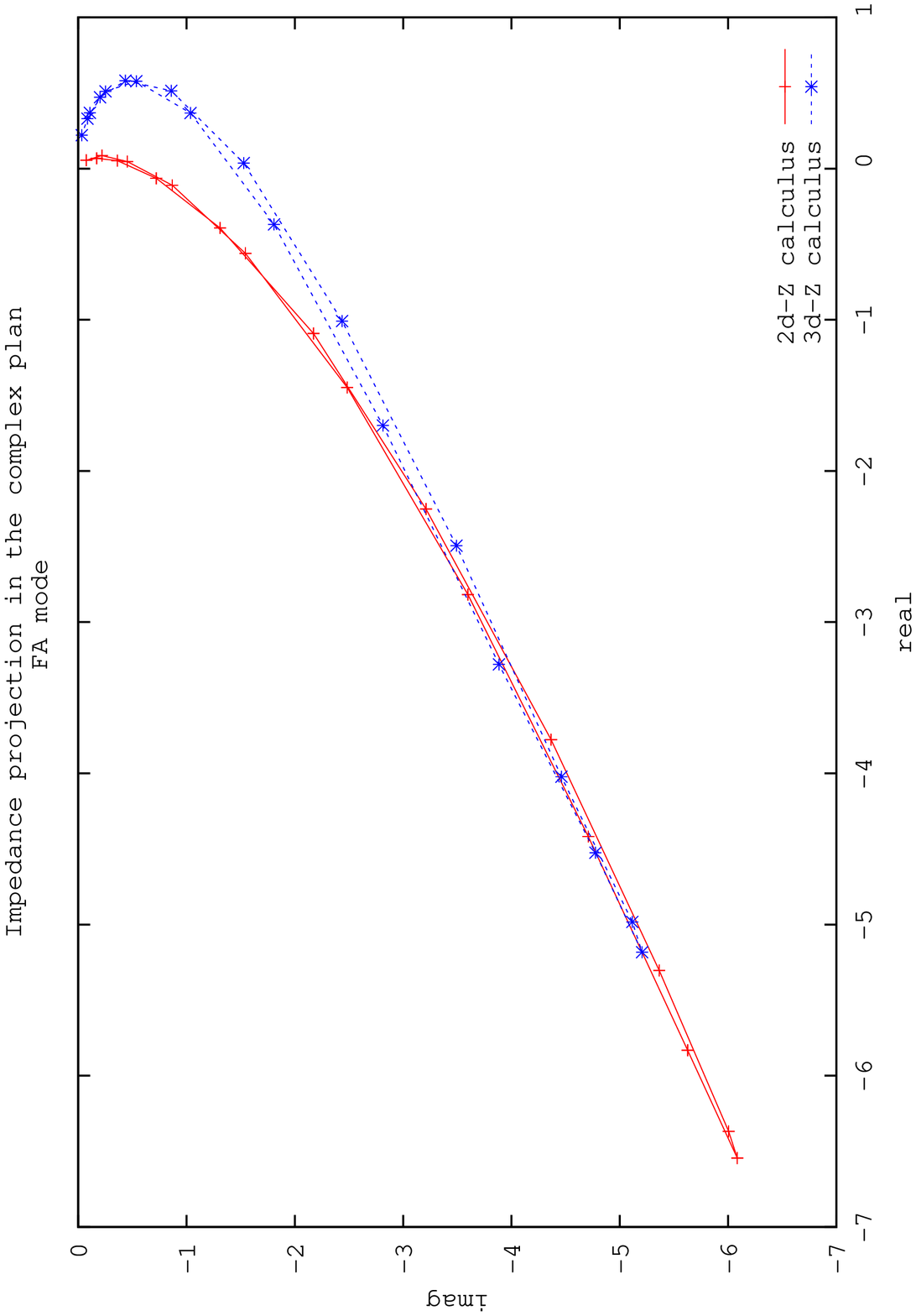} & 
\includegraphics[angle=-90,origin=c,width=6cm,height=6cm] {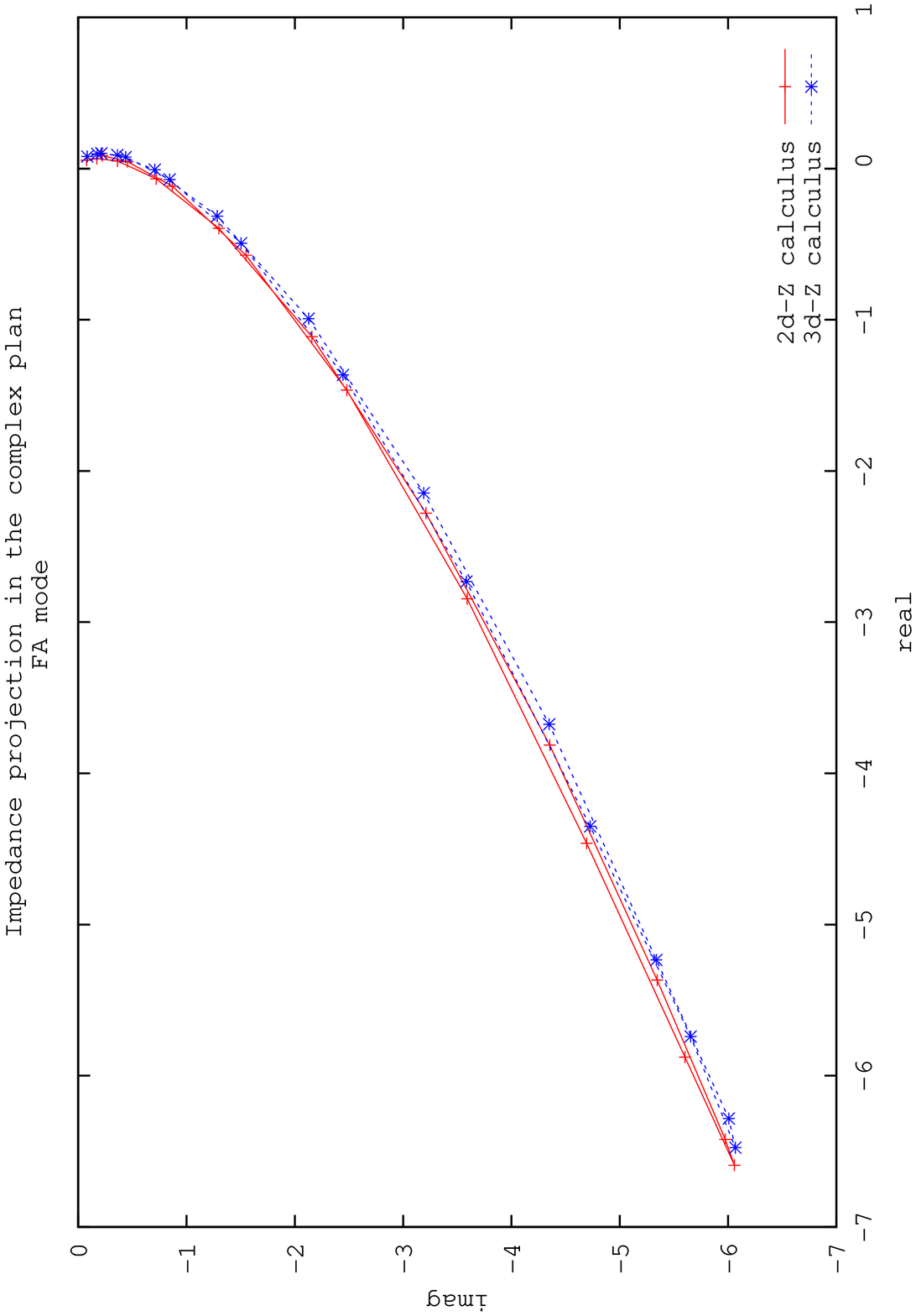} 
\end{tabular}
\caption{Complex-plan comparison of the Impedance signal 3D-vs-2D axisymmetric case: $\bf Z$-F3 (first line) $\bf Z$-FA (second line) where $\sigma_d$ is taken as $1.e4$ (first column) and $1.e3$ (second column). The configuration takes the deposit as a crown surrounding the Tube see Fig.~\ref{crownFirstTEST}.}\label{compareComlexplanValidation}
\end{figure}
\begin{figure}[!htb] 
\begin{tabular}{cc}
      \includegraphics[width=12cm,height=5cm] {./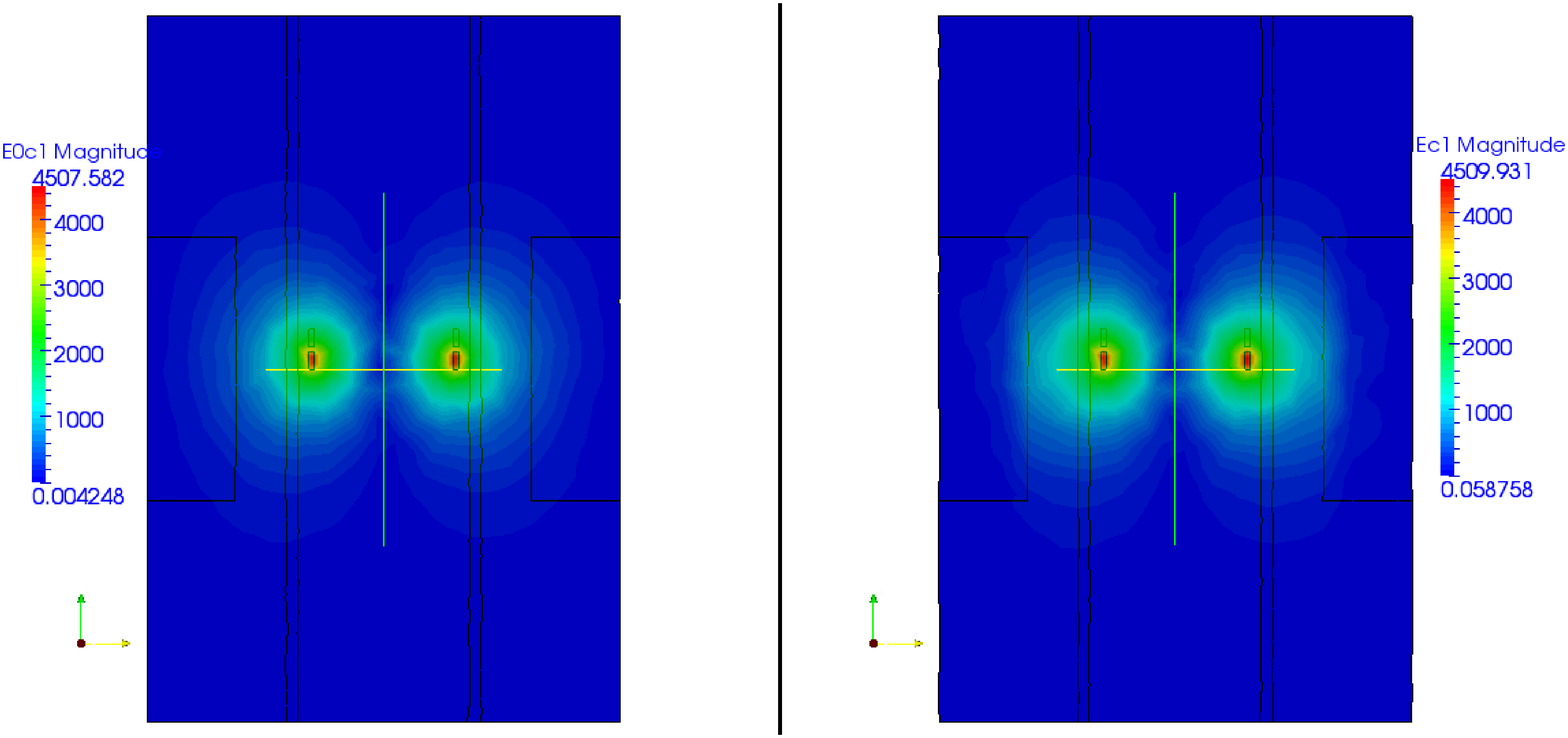}&\\
      \includegraphics[width=12cm,height=5cm] {./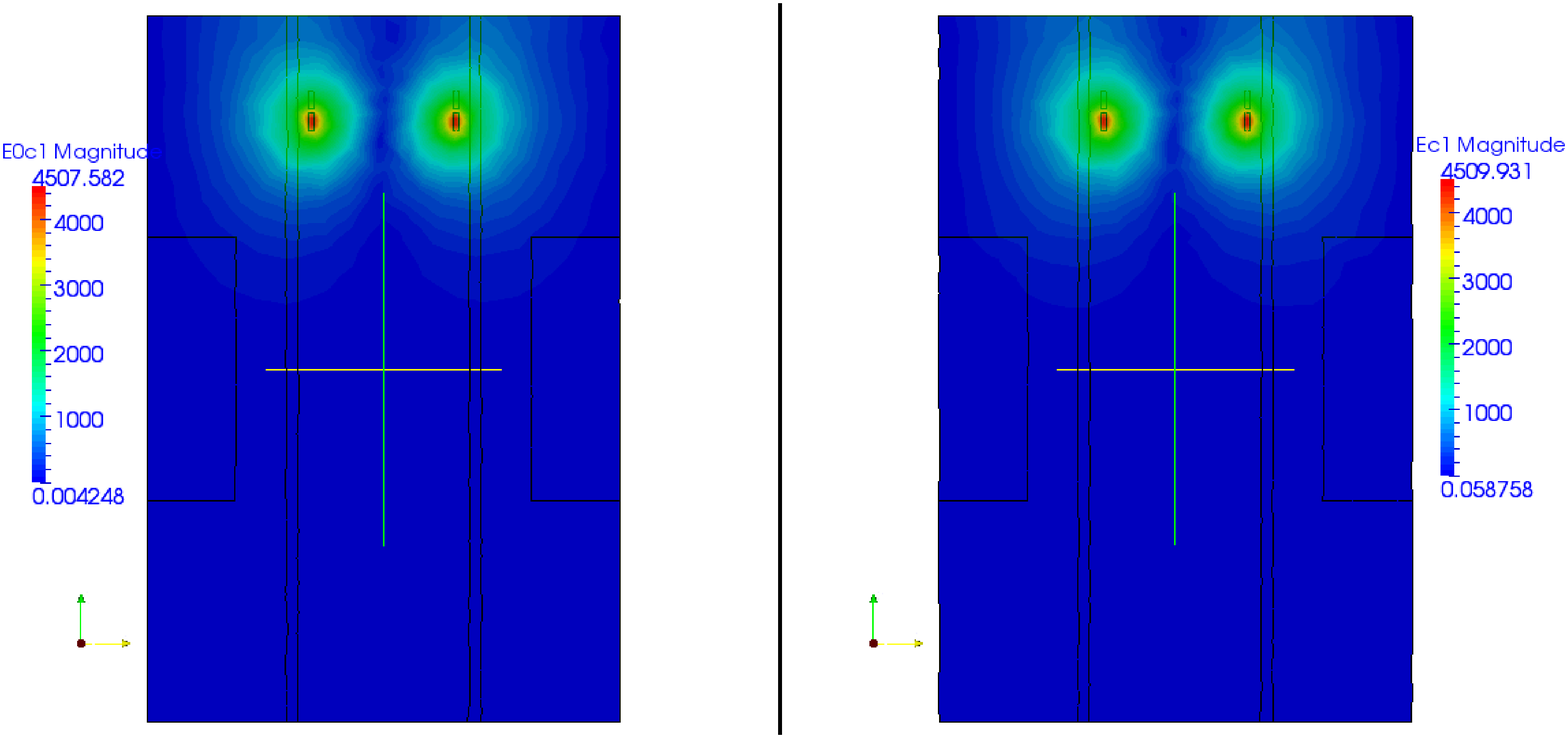}
\end{tabular}
\caption{Distribution of the magnetic field $\E=\i\omega\A+\V$; Solution to be compared with the case that take into account the IBC.}
\end{figure}
%%________________fig 2 3 indust
%\begin{figure}[!htbp] 
%   \begin{minipage}[c] {.46\linewidth}
%      \includegraphics[scale=.45] {./fig/indsutry_fig2.png}
%   \end{minipage} \hfill
%   \begin{minipage}[c] {.46\linewidth}
%      \includegraphics[scale=.45] {./fig/indsutry_fig3.png}
%   \end{minipage}
%   \caption{}
%\end{figure}
%%---------------------------------------------------------------------- Signal industry

\subsection{2D and 3D validations of IBC implementation for TSP model}
The impedance boundary condition is very useful to model highly conductive part, where traditionally one has to deal with a mesh excessively refined in order to well approximate the exponentially decay of solution due to the skin depth of the material (proportional to its conductivity). The impedance boundary condition will therefore reduce the volume of the conductor to its surface. This reduction is cruelly shown at Fig.~\ref{IBC2dSOL} where we present the 2D-axisymmetric solution. The left graphic do not use IBC in the high conductor (modeled with a rectangle in 2D cross section), one can easily remark that the solution is very thin at the surface and practically vanishing due to the skin depth of the TSP. The right graphic shows that we can disregard the volume of  TSP and replace the solution inside by a surface solution taking into account the approximation Eq.~\eqref{IBC}. The comparative impedance signal results are shown at Fig.~\ref{IBC2d}.
%---------------------------------------------- 2D IBC validation 
\begin{figure}[!htbp] 
   \begin{minipage}[c] {.46\linewidth}
\centering      \includegraphics[width=6cm,height=8cm] {./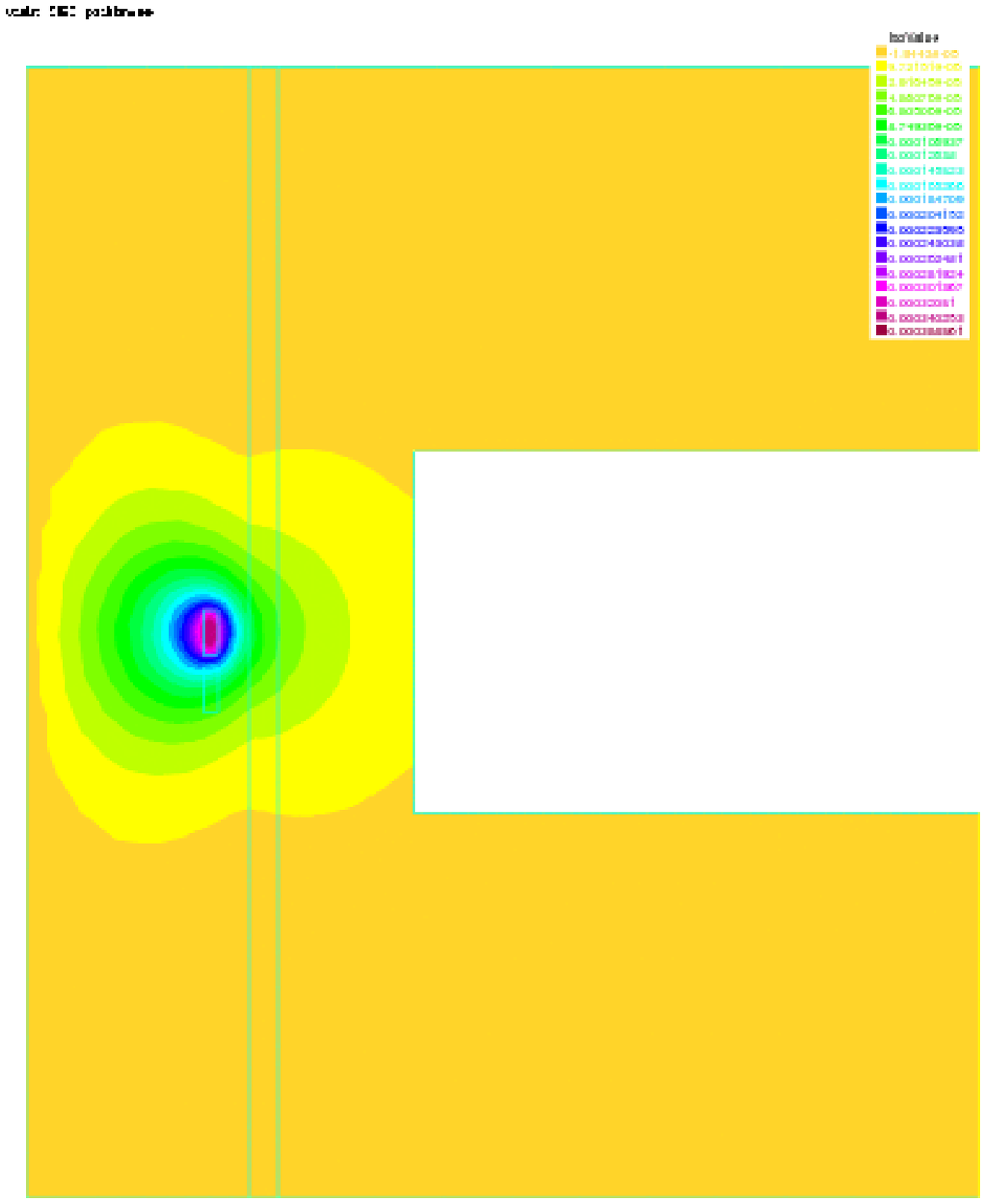}
   \end{minipage} %\hfill
   \begin{minipage}[c] {.46\linewidth}
\centering      \includegraphics[width=6cm,height=8cm] {./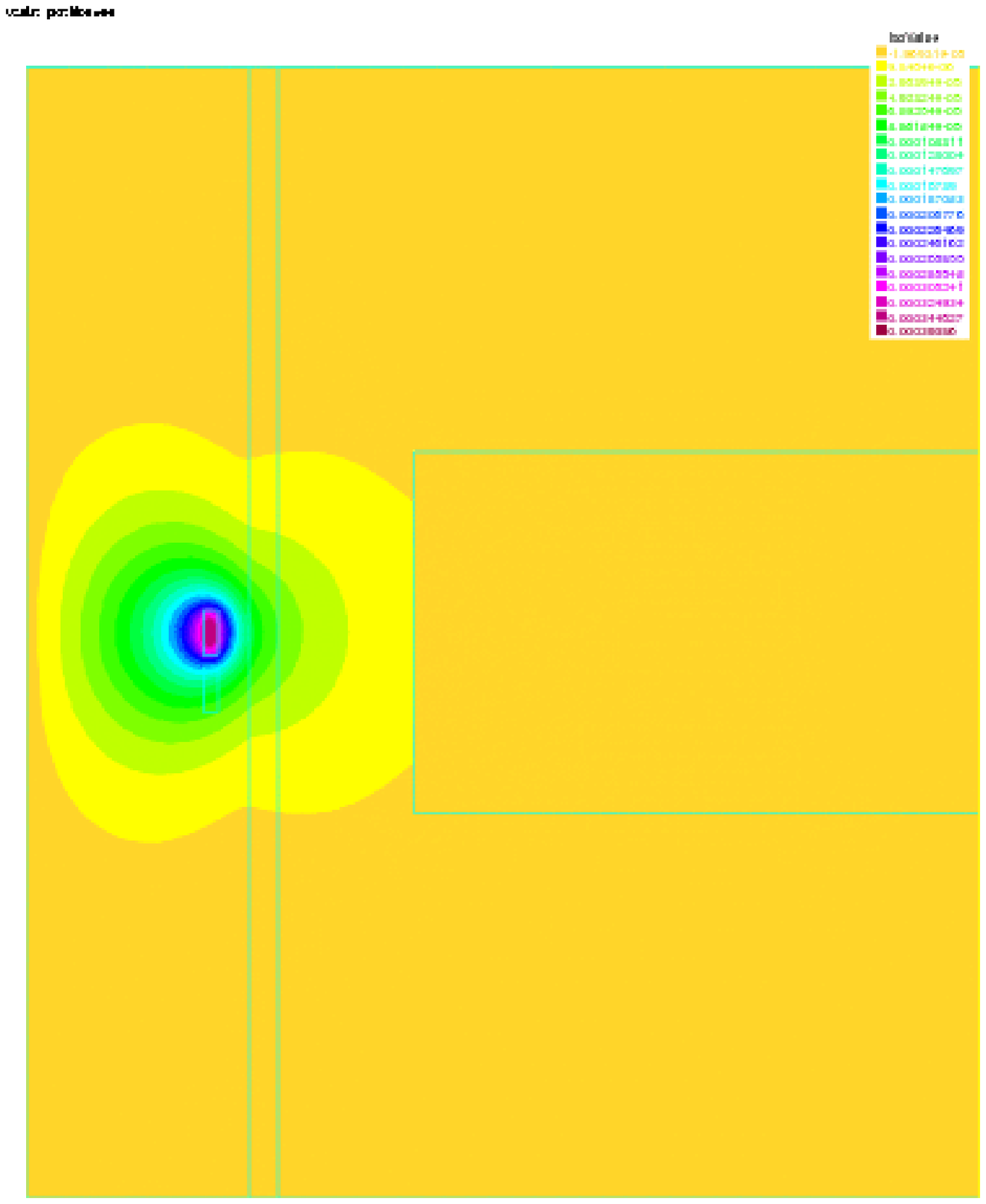}
   \end{minipage}
   \caption{2D solution with (left) and without (right) the impedance boundary condition for a highly conductive part (crown) modeling the TSP}\label{IBC2dSOL}
\end{figure}

\begin{figure}[!htbp] 
      \includegraphics[width=12cm,height=12cm] {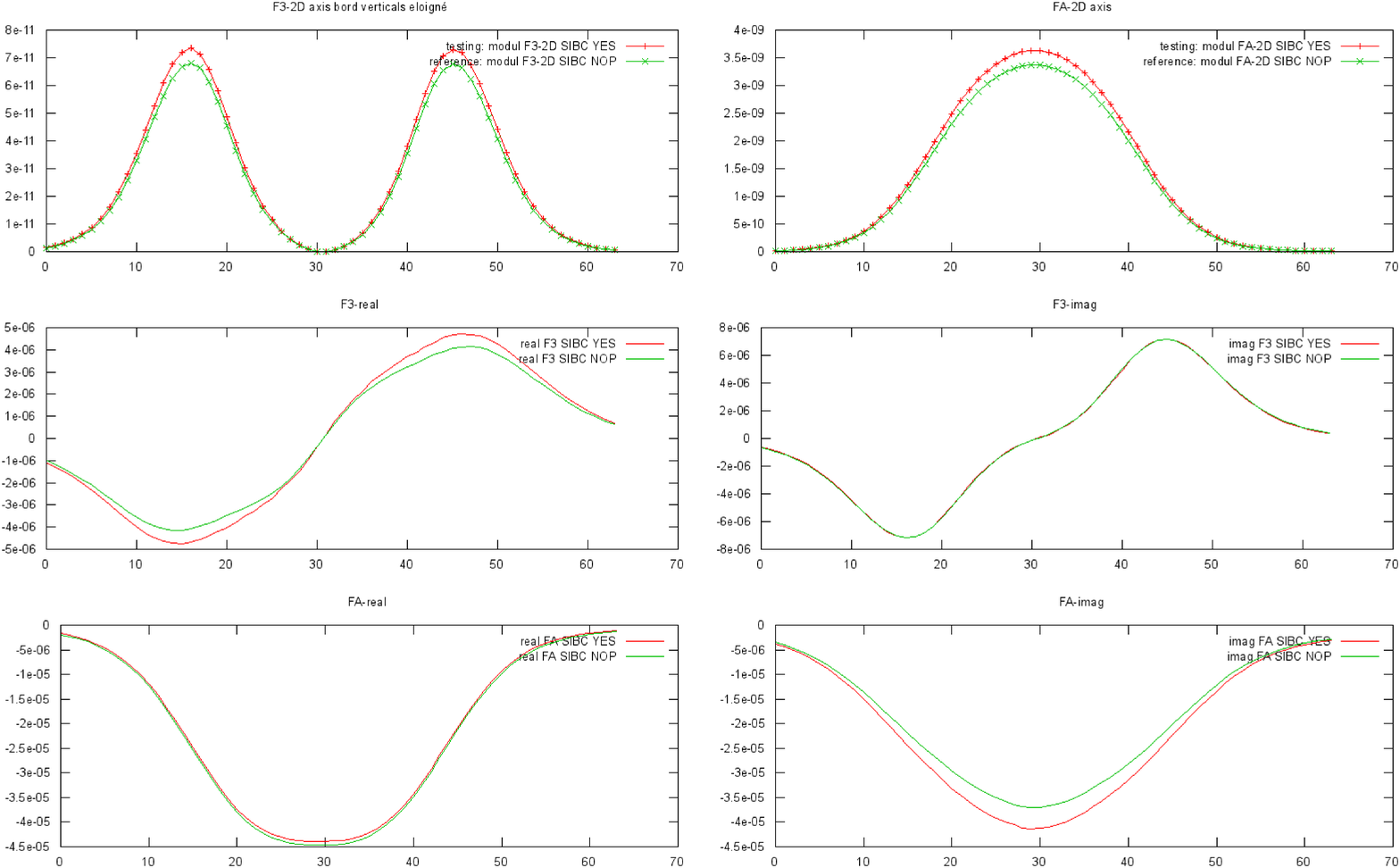}
   \caption{Validation of the IBC in the 2d case.The permeability and conductivity parameters are taken for the vacuum $(\mu_0=4\pi e-07,\sigma_0=0.)$, for the tube  $(\mu_t=1.01\mu_0,\sigma_t=.97e06)$, for the TSP $(\mu_p=70\mu_0,\sigma_p=1.75e06$). This test case corresponds to the configuration presented at Fig.\ref{IBC2dSOL}.}
   \label{IBC2d}
\end{figure}

	In what concern the 3D simulation, it is a little bit different from the 2D because the use of 3D simplex (tetrahedron) in order to approximate surface finite element by penalization~\footnote{The actual version 20.2 of FreeFem++ (september 2013), don't recognize finite element defined at a 3D surface.}. In fact, we disregard element defined on the TSP volume and we penalize the nodes in the closer the complement of the TSP. Fig.~\ref{IBC3dSOL} present the distribution of the solution and Fig.~\ref{IBC3d} compare the impedance signal produced with the 3D simulation against the 2D simulation.  
%---------------------------------------------- 3D IBC validation
\begin{figure}[!htbp] 
   \begin{minipage}[c] {.46\linewidth}\centering  
       \includegraphics[width=8cm,height=8cm] {./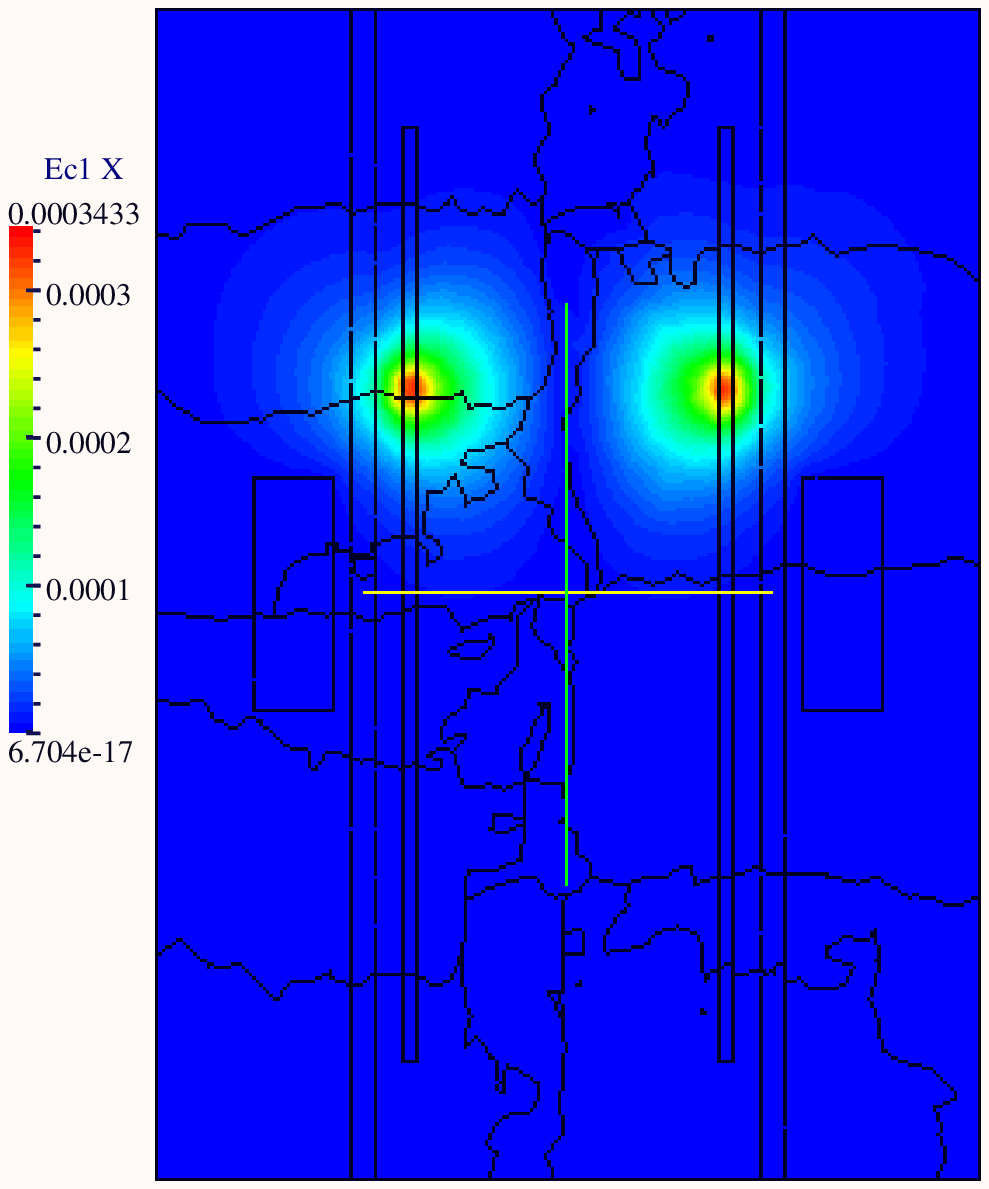}
          \end{minipage}
        %\hfill
      \begin{minipage}[c] {.46\linewidth}\centering  
       \includegraphics[width=8cm,height=8cm] {./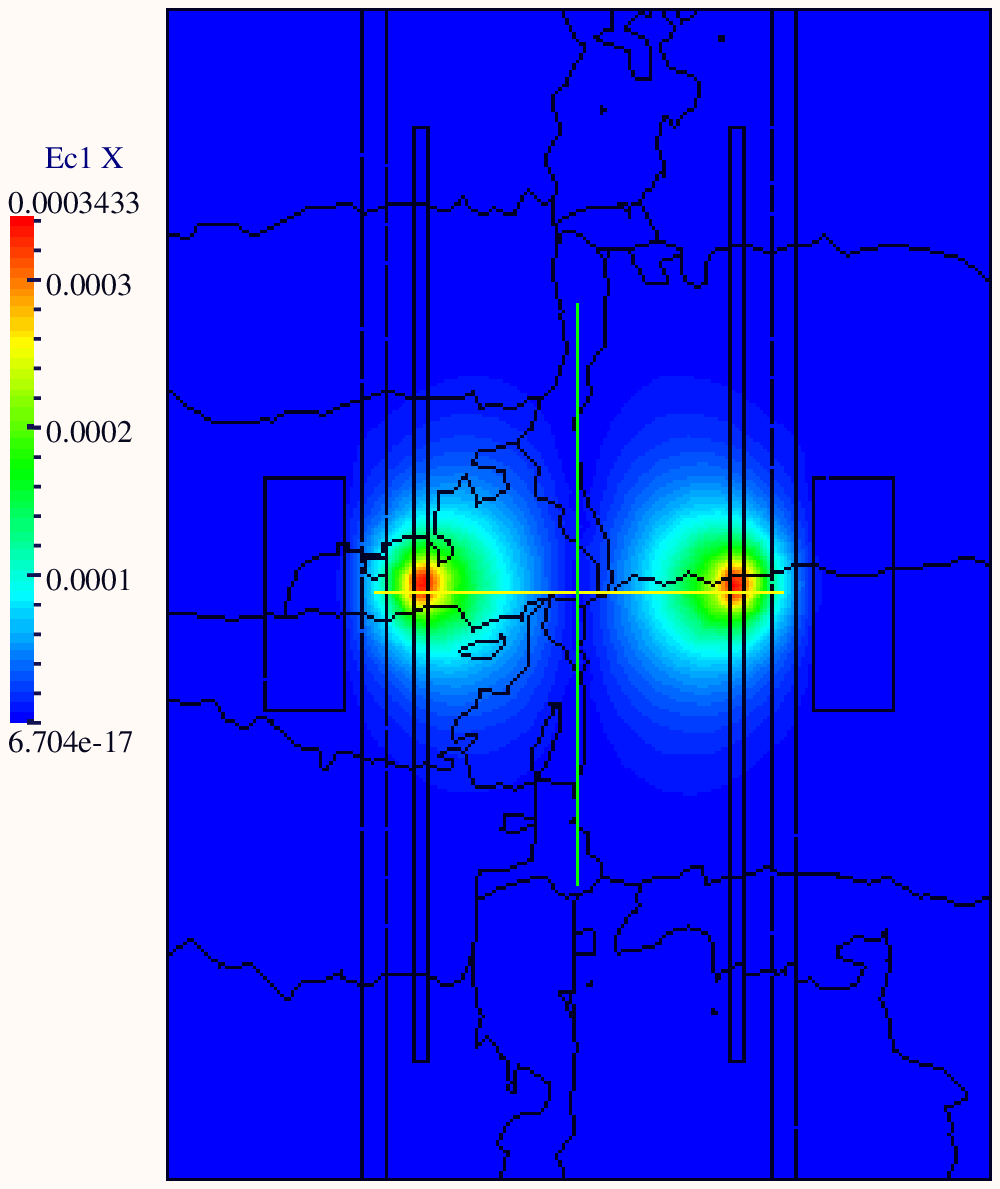}   \end{minipage}
   \caption{3D solution with the impedance boundary condition, which produces the impedance signal as presented in Fig.~\ref{IBC3d}. The deposit in this example is presented by the rectangular (cross section). We show in this figure that interior point don't contribute to the 3D solution because of the penalization techniques.}\label{IBC3dSOL}
\end{figure}
\begin{figure}[!htbp] 
\begin{tabular}{cc}
      \includegraphics [width=14cm,height=4cm]{./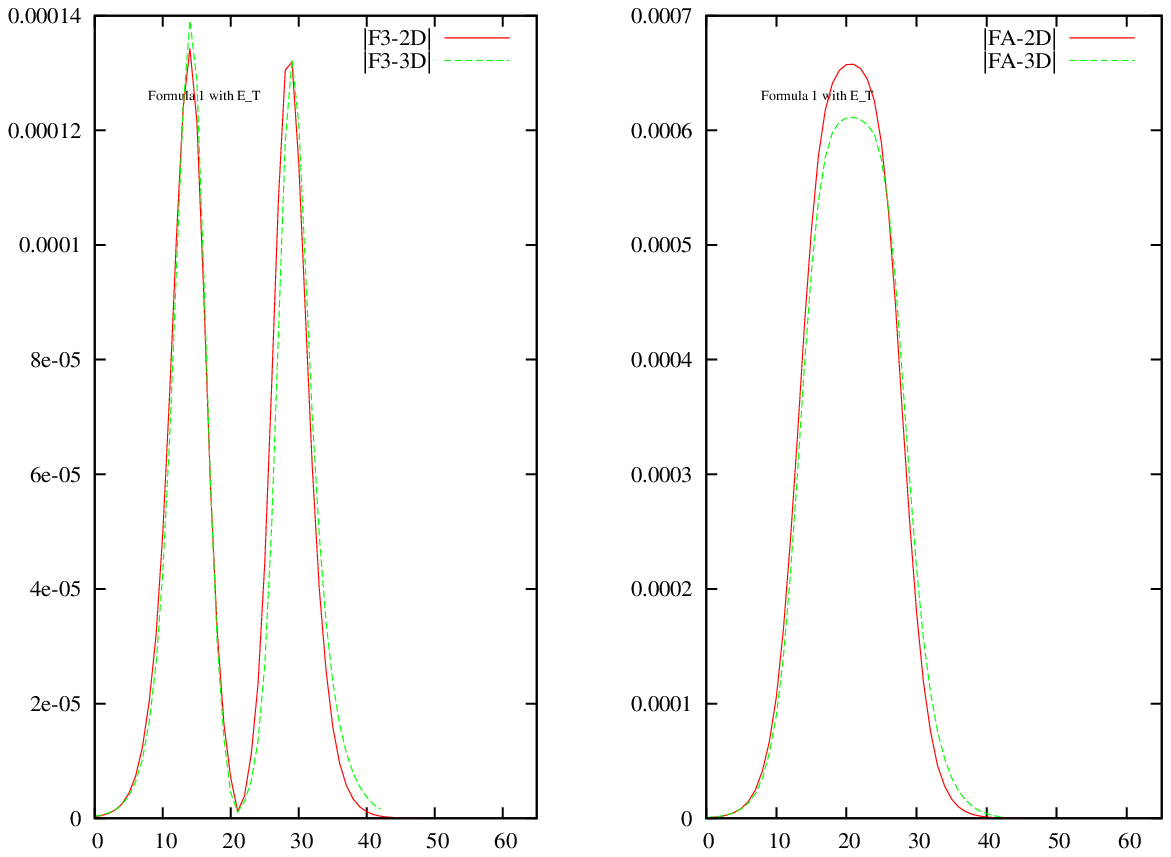}\\
      \includegraphics [width=14cm,height=4cm]{./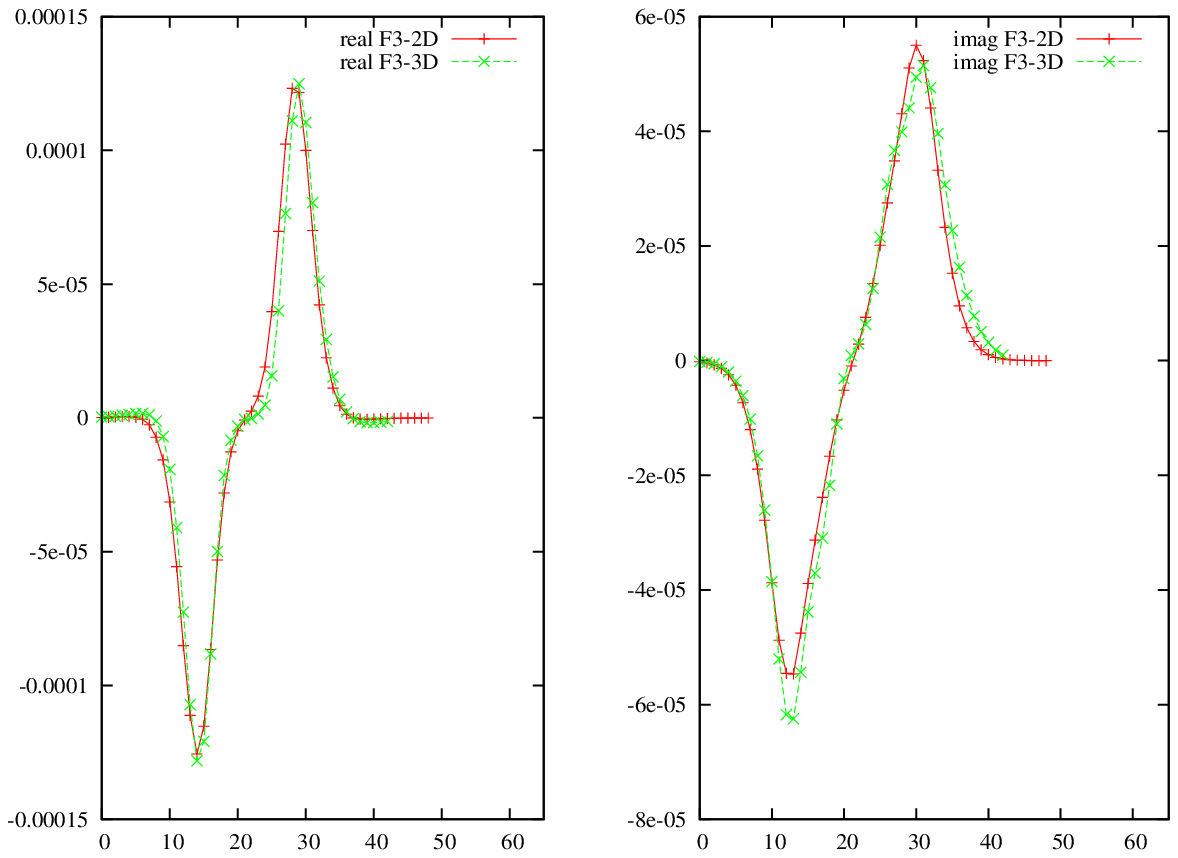}\\
      \includegraphics [width=14cm,height=4cm]{./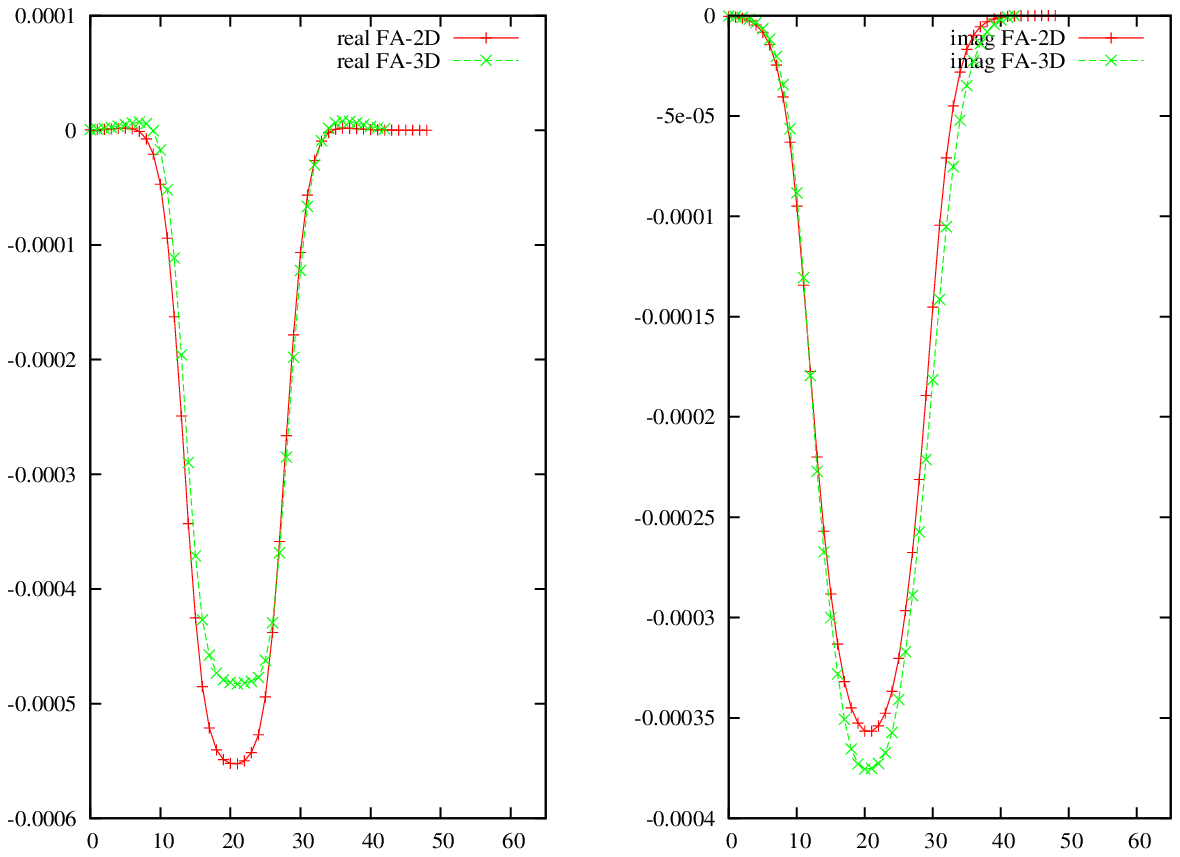}      
\end{tabular}
   \caption{Validation of IBC for the 3D simulation. The permeability and conductivity parameters are taken for the vacuum $(\mu_0=4\pi e-07,\sigma_0=0.)$, for the tube  $(\mu_t=1.01\mu_0,\sigma_t=.97e06)$, for the TSP $(\mu_p=\mu_0,\sigma_p=.97e06$).This test case corresponds to the configuration presented at Fig.\ref{IBC3dSOL}}\label{IBC3d}
\end{figure}
In practice, the incorporation of IBC in the 3D simulation, is achieved by penalizing interior vortex of the TSP by manipulating lines of the main matrix. In fact, lines corresponding to interior points will be eliminated except the diagonal term of the matrix where will be replaced by the value "1".

\subsection{Inversion by gradient descent algorithm}

	The adjoint problem is governed by the conjugate sesquilinear form of the direct problem. Since we use sparse-parallel solver that provides LU factorization of the matrix, we would reuse that factorization in the resolution of the adjoint state. A simple operation consists in dividing by $i\omega$ all terms in variational formulation enables the matrix $M$ of the direct problem to be symmetric. Hence the adjoint problem matrix is nothing but the transpose of $M$ ( with conjugation of complex entries). 
	Since $\overline{M}w=b$ is equivalent to $\overline{\overline{M}w=b}$ that writes $M\overline{w}=\overline{b}$.

	We present in this section, two types of geometrical parameters (width size) inversion for the reconstruction of a one layer deposit and 3 layers deposit. 

	 We consider first a validation of the gradient (shape gradient) of the cost function by comparing the shape gradient to the central difference approximation of the first derivative  of the cost function. The central difference is made with a small shape perturbation in the direction of the outward normal. We validate first the gradient, then we proceed to the numerical inversion that consists of the reconstruction of one layered deposit. Fig.~\ref{J_L}  presents the function decay with the increasing number of iterations of the inversion algorithm. Several snapshots (see Fig~\ref{animcheckL}) are also presented to show the convergence of the shape deformation to the deposit, with which we picked up impedance measurements. 
\begin{figure}[!htbp] 
   \begin{minipage}[c] {.46\linewidth}
      \includegraphics[scale=.6] {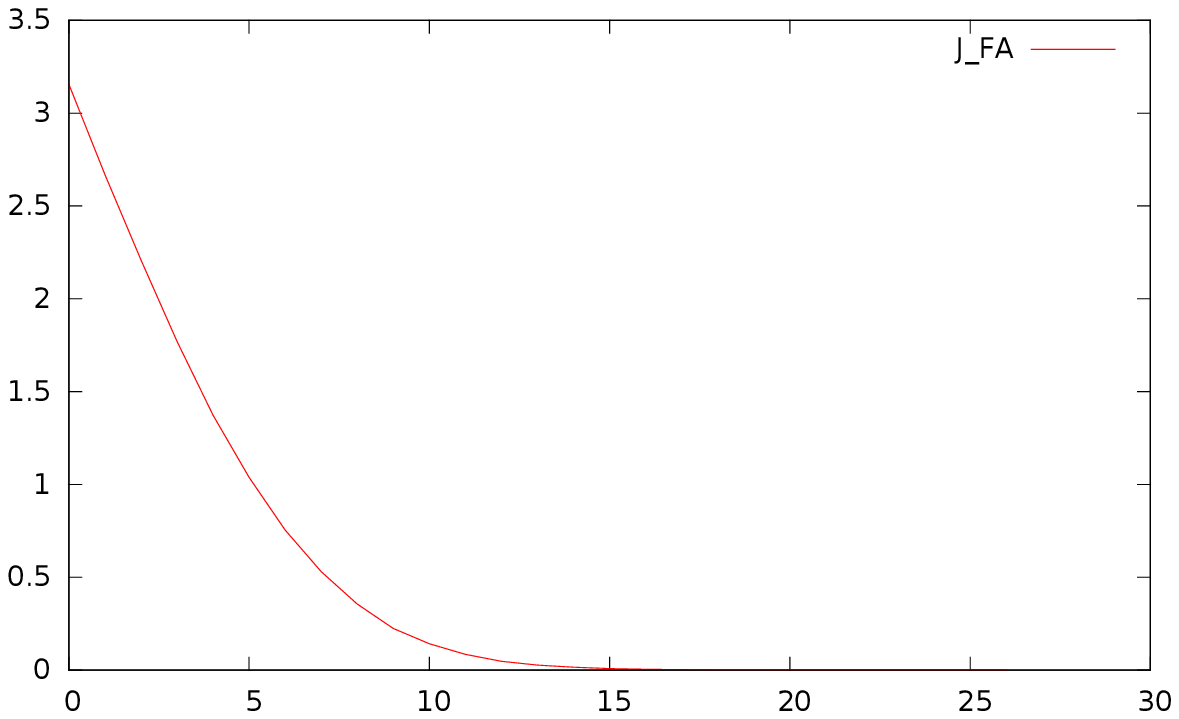}
   \end{minipage} \hfill
   \begin{minipage}[c] {.46\linewidth}
      \includegraphics[scale=.6] {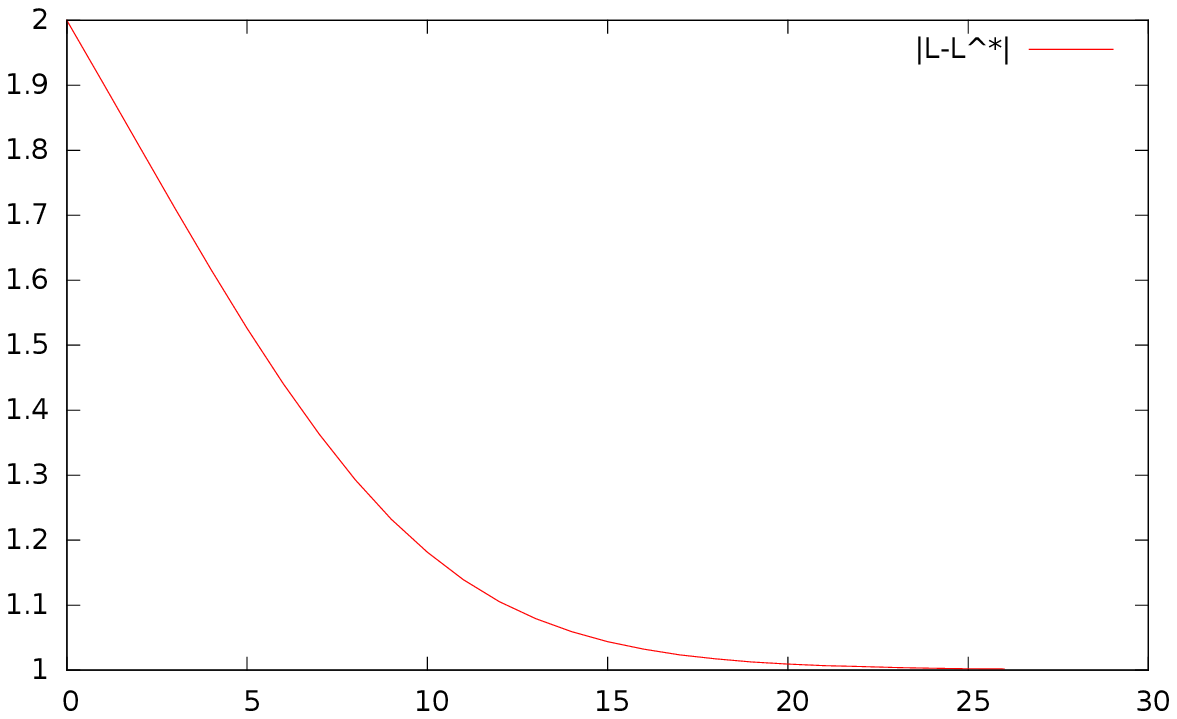}
   \end{minipage}
   \caption{Cost functional decaying with the iteration of the steepest descent algorithm. The parameters $\mu$ and $\sigma$ are taken for the vacuum as $(\mu_0=4\pi e-07,\sigma_0=0)$, for the tube as $(\mu_t=1.01\mu_0,\sigma_t=.97e6)$ and for the deposit $\mu_d=\mu_0,\sigma_d=3e03$}
   \label{J_L}
\end{figure}
	The test case presented in Fig.~\ref{J_L} consists in retrieving the width of an axis-symmetric deposit as crown surrounding the tube without the presence of the TSP.  
We present in Fig.~\ref{animcheckL} some plots of the solution (the iterate mesh) that converge.

\begin{figure}[!htbp] \centering
\begin{tabular}{lr}
\includegraphics[angle=0,origin=c,width=9cm,height=10cm] {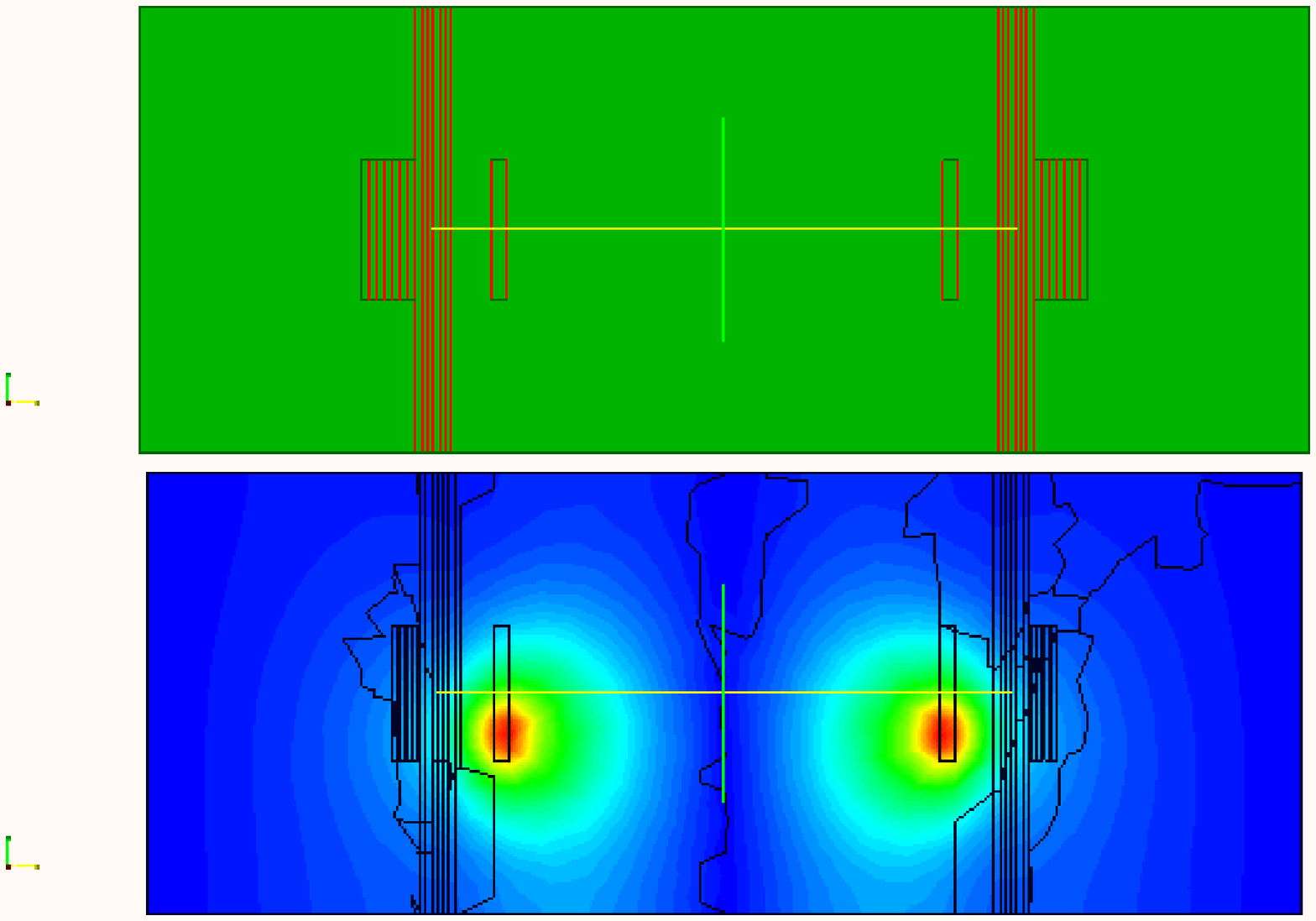} & 
\includegraphics[angle=0,origin=c,width=9cm,height=10cm] {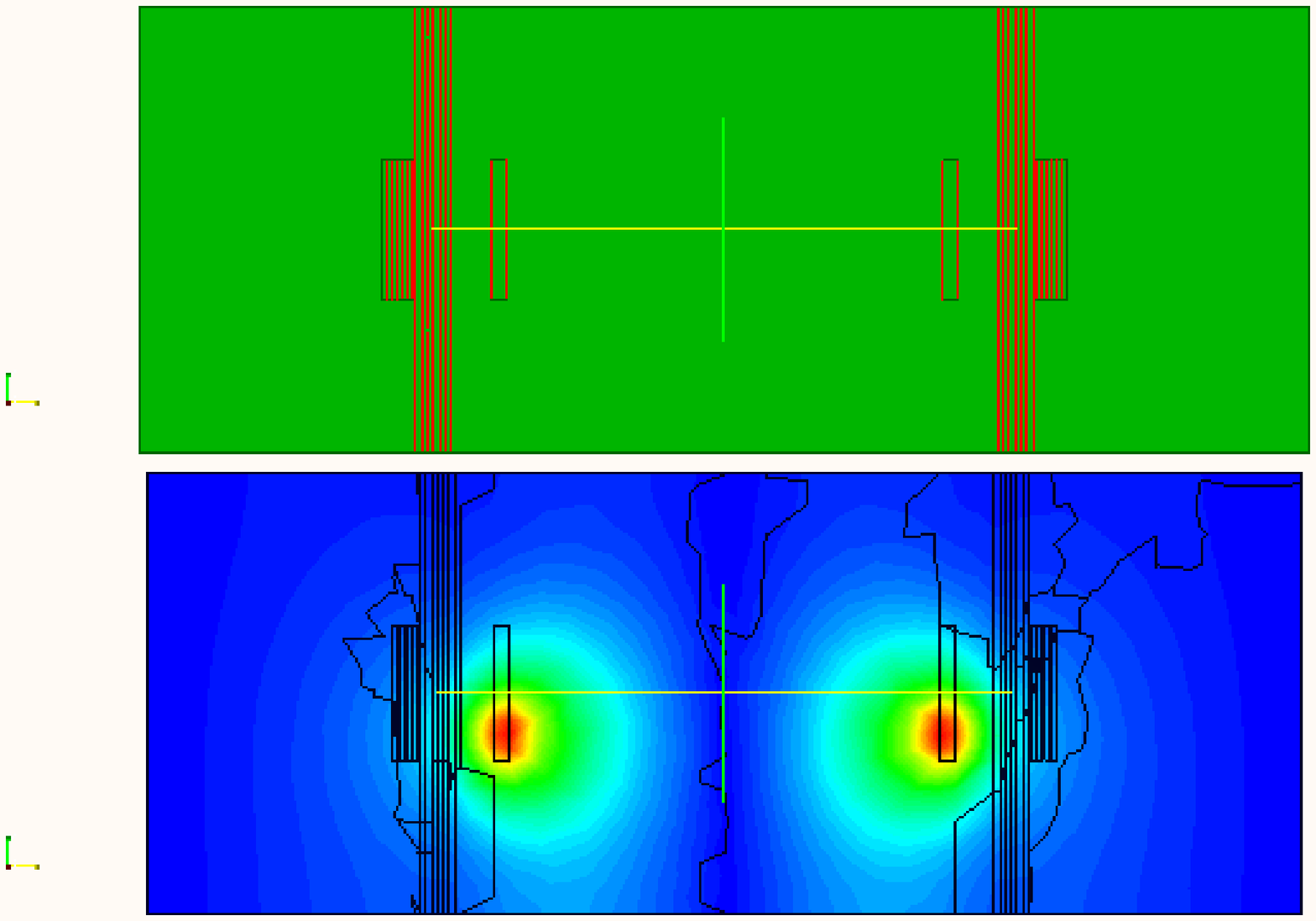} \\
\includegraphics[angle=0,origin=c,width=9cm,height=10cm] {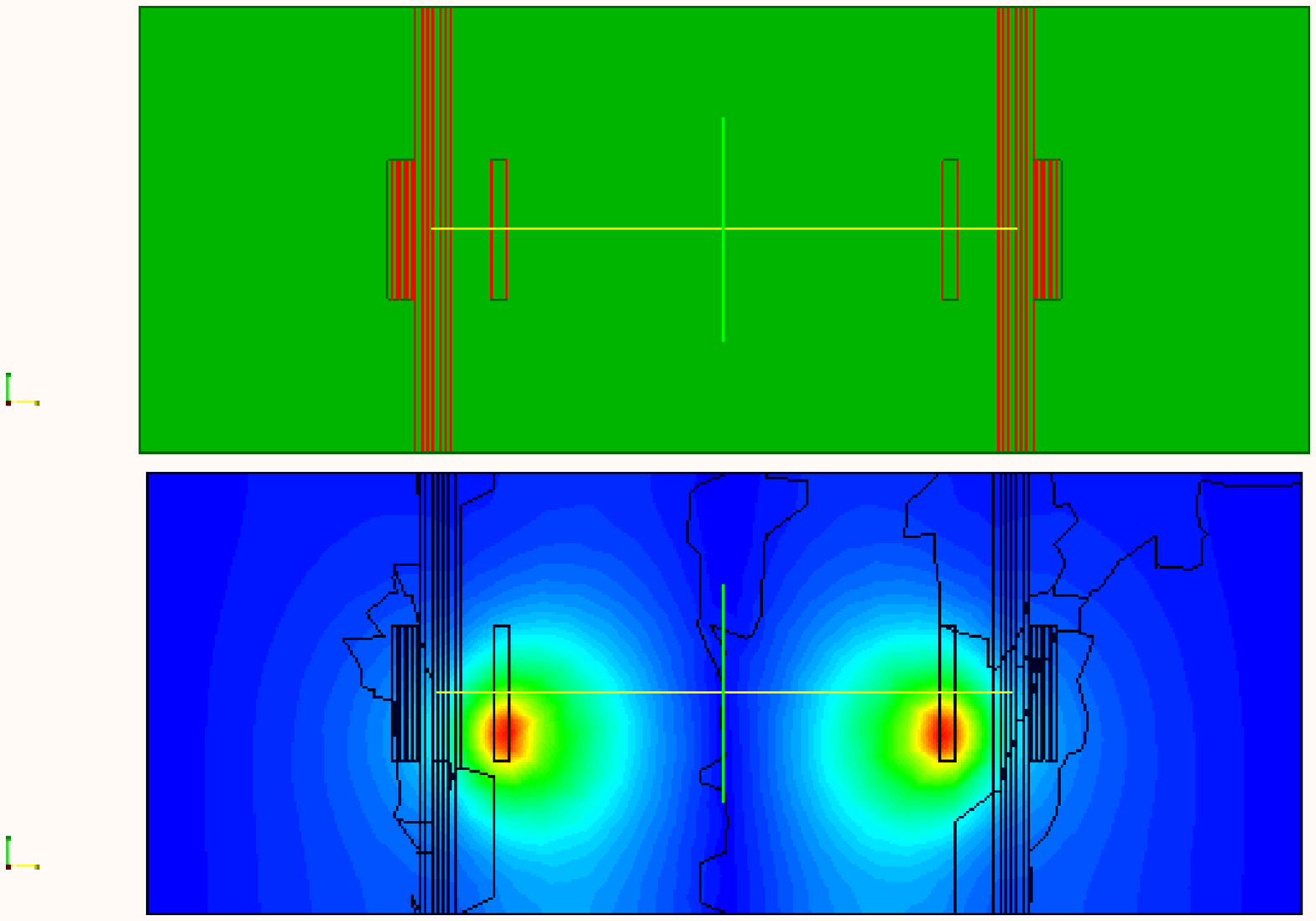} & 
\includegraphics[angle=0,origin=c,width=9cm,height=10cm] {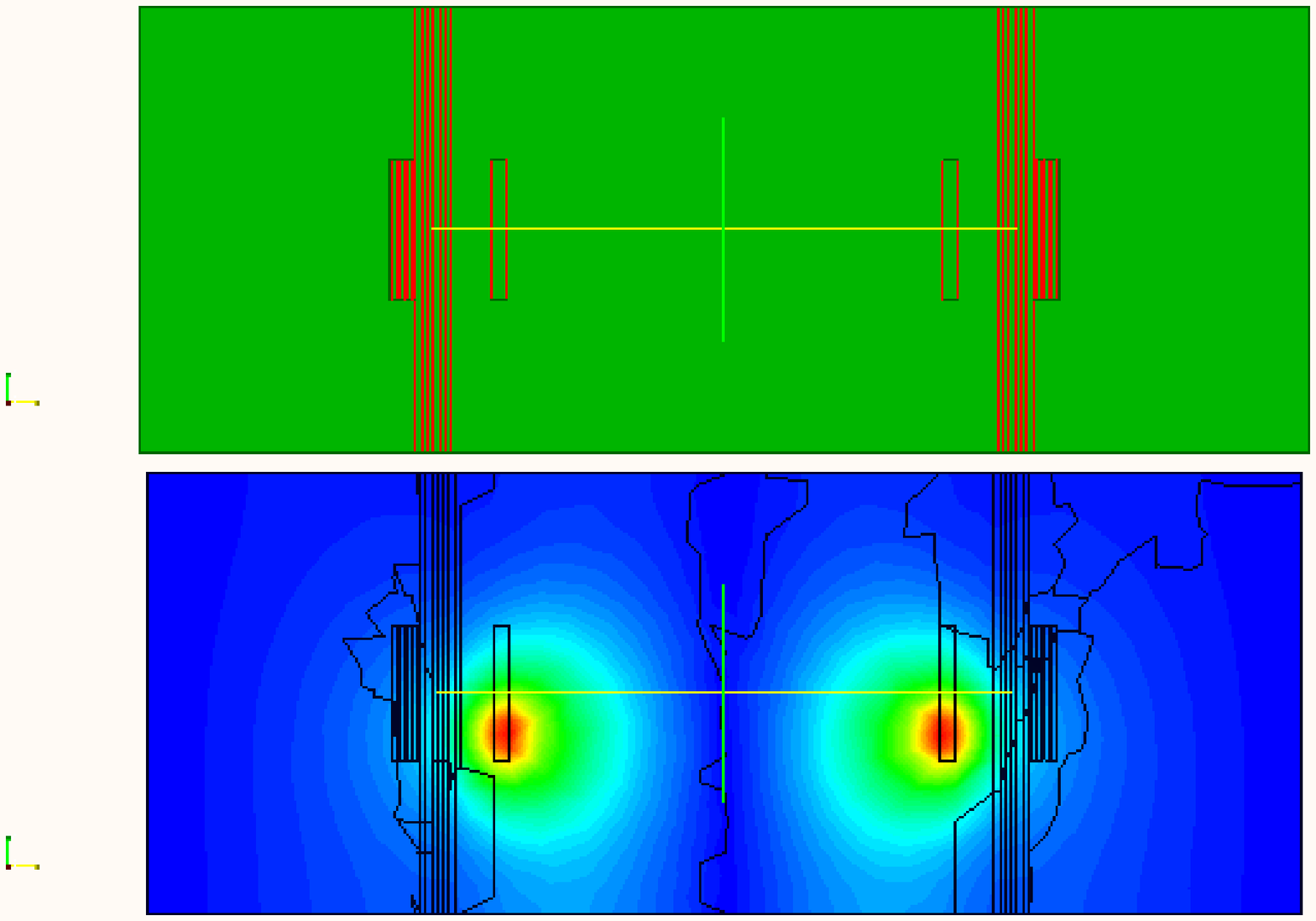} 
\end{tabular}
\caption{Snapshot of the solution for the inverse problem that considers one layere deposits.}\label{animcheckL}
\end{figure}
After the validation of the gradient's calculus by one layer reconstruction, we complicate the problem by considering a 3 layers deposit. When we have to deal with three geometrical parameters, namely $L1,L2$ and $L3$ as widths of a 3 different deposits. Fig.~\ref{J_L123} presents the function decay that considers both FA and F3 impedance signals. 
\begin{figure}[!htp] 
   \begin{minipage}[c] {.46\linewidth}
      \includegraphics[scale=.3,angle=-90] {./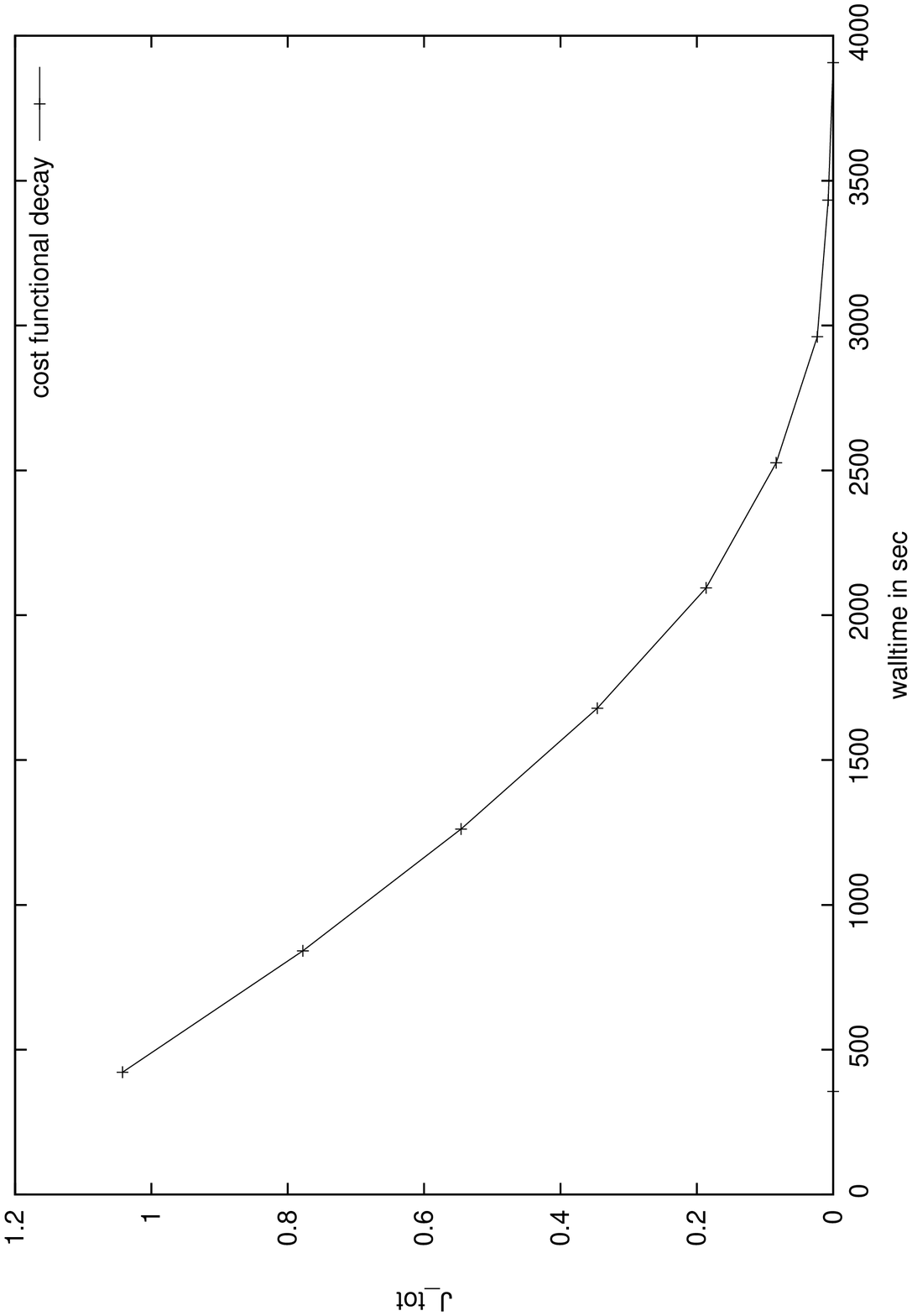}
   \end{minipage} \hfill
   \begin{minipage}[c] {.46\linewidth}
      \includegraphics[scale=.3,angle=-90] {./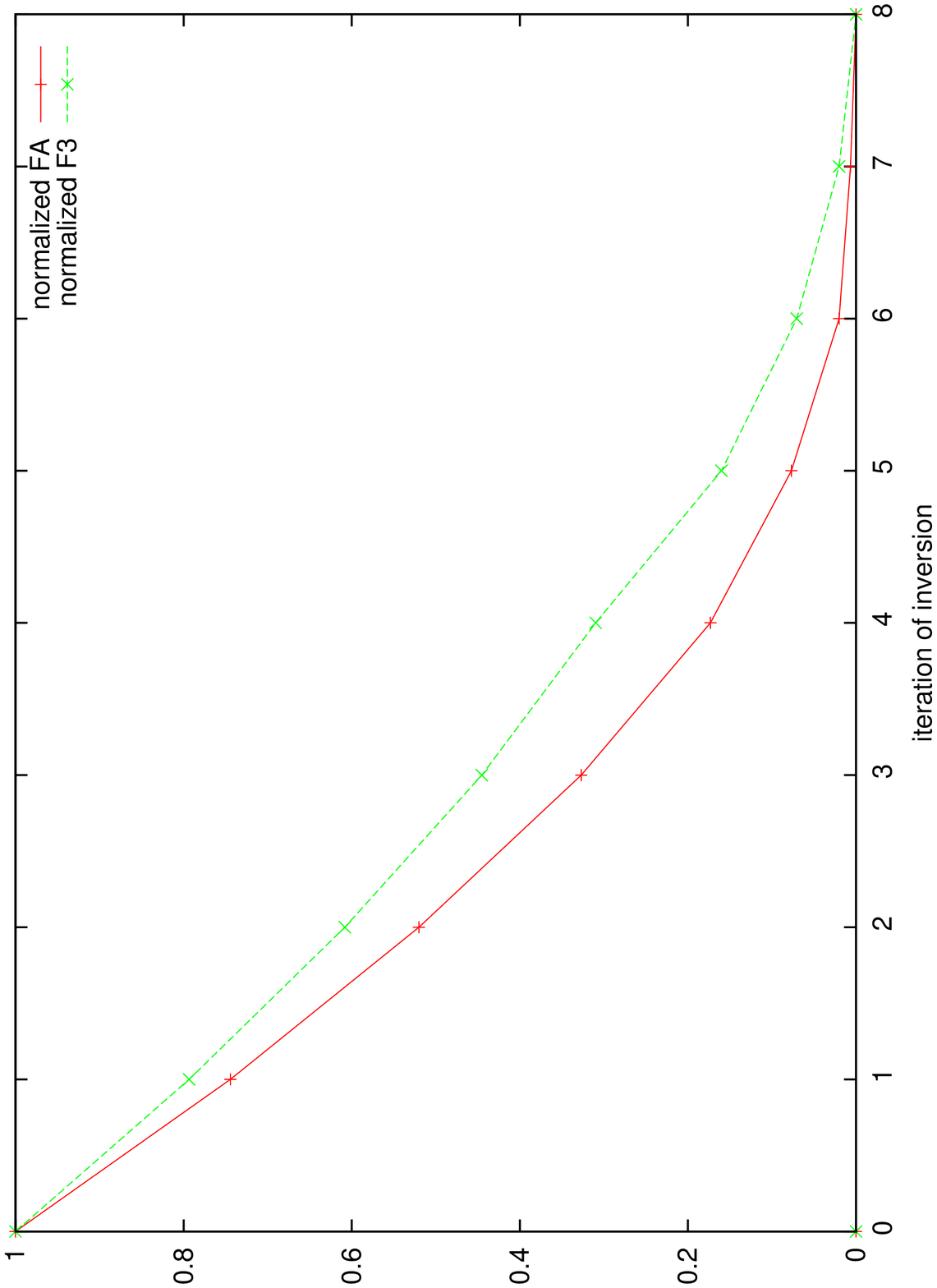}
   \end{minipage}
   \caption{Cost functional decaying with the iteration of the steepest descent algorithm. The minimization procedure takes as cost both FA and F3 signal for the reconstruction of deposit by manipulating 3 layers with different width. The parameters $\mu$ and $\sigma$ are taken for the vacuum as $(\mu_0=4\pi e-07,\sigma_0=0)$, for the tube as $(\mu_t=1.01\mu_0,\sigma_t=.97e6)$ and for the deposit $\mu_d=\mu_0,\sigma_d=3e03$}
   \label{J_L123}
\end{figure}
We present in Fig.~\ref{L123shapeconvg} the step of the inverse solver that reconstruct each layer of the deposit simultaneously. For completeness we add in Fig.~\ref{L123shapeconvg} snapshots of the solution of the mesh transform with the iteration of the inverse problem increase. 

\begin{figure}%[!htp] 
   \begin{minipage}[c] {.46\linewidth}
      \includegraphics[scale=.3,angle=-90] {./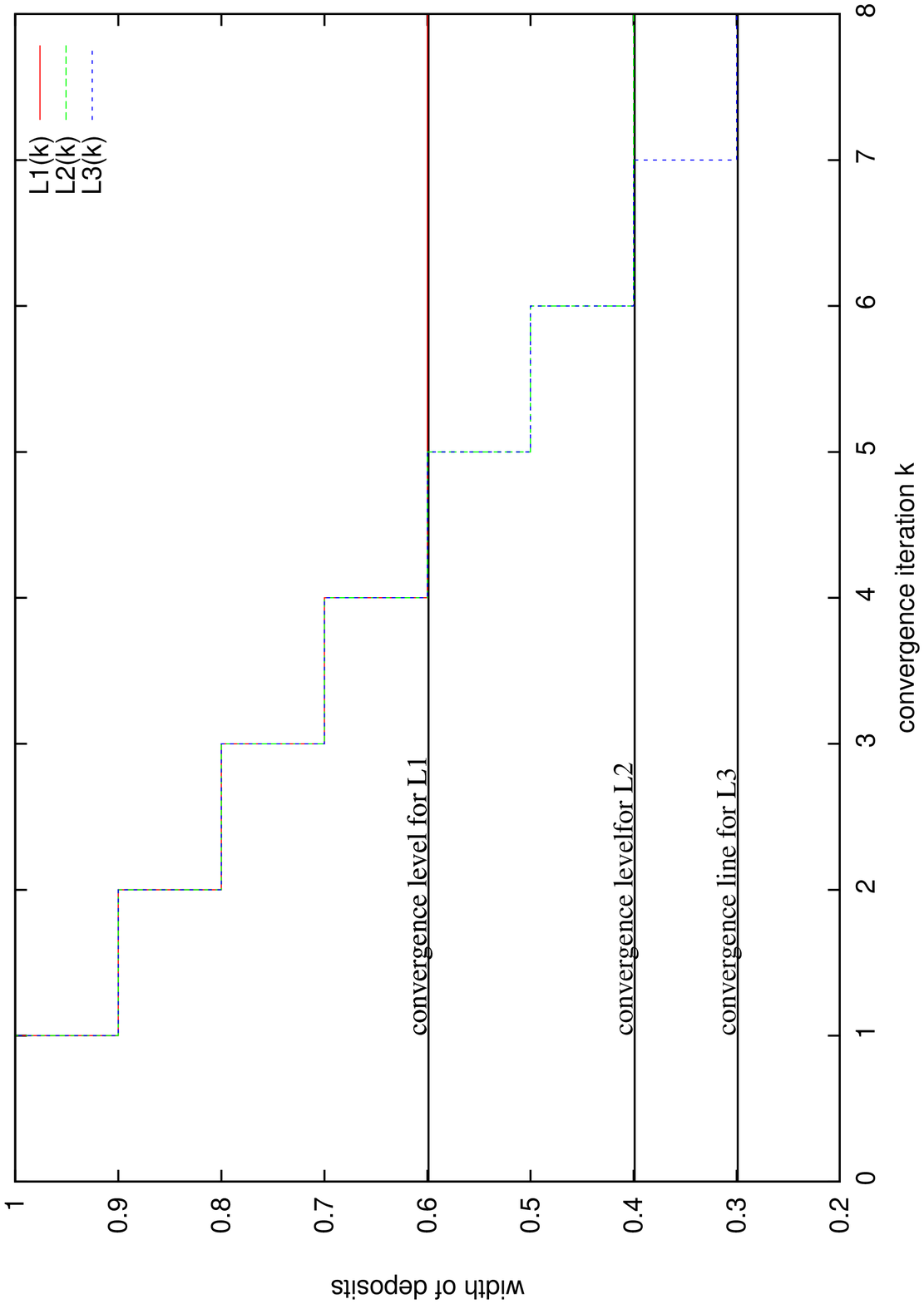}
   \end{minipage} \hfill
   \begin{minipage}[c] {.46\linewidth}
      \includegraphics[scale=.3,angle=0] {./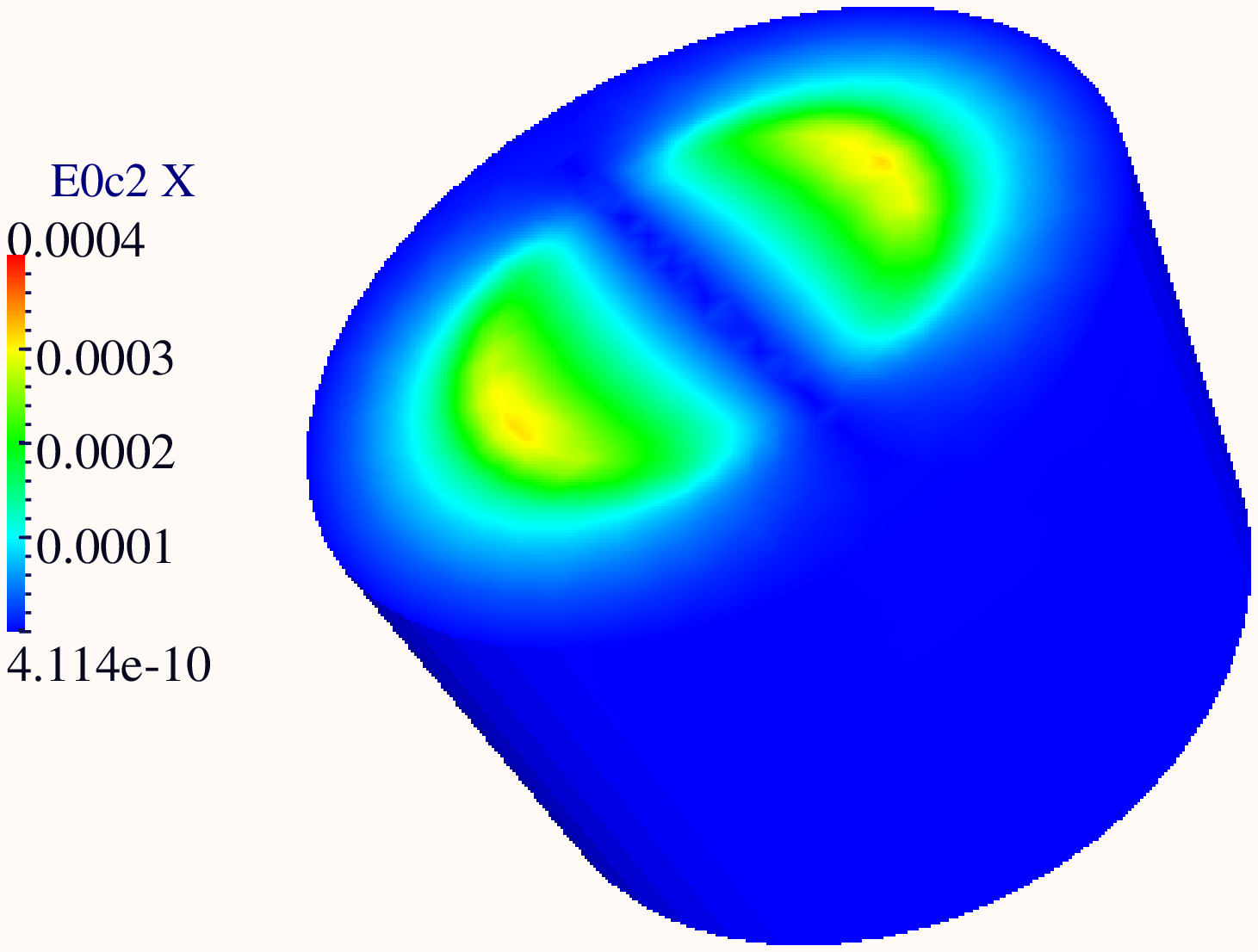}
   \end{minipage}
   \caption{Cost functional decay with the iteration of the steepest descent algorithm. We show that width of layers converges to the right width of the measured deposit. The graph at the right presents the appearance of the 3D solution.}
   \label{L123shapeconvg}
\end{figure}

\begin{figure}[!htbp] 
\begin{tabular}{cc}
\includegraphics[angle=0,origin=c,width=6cm,height=5cm] {./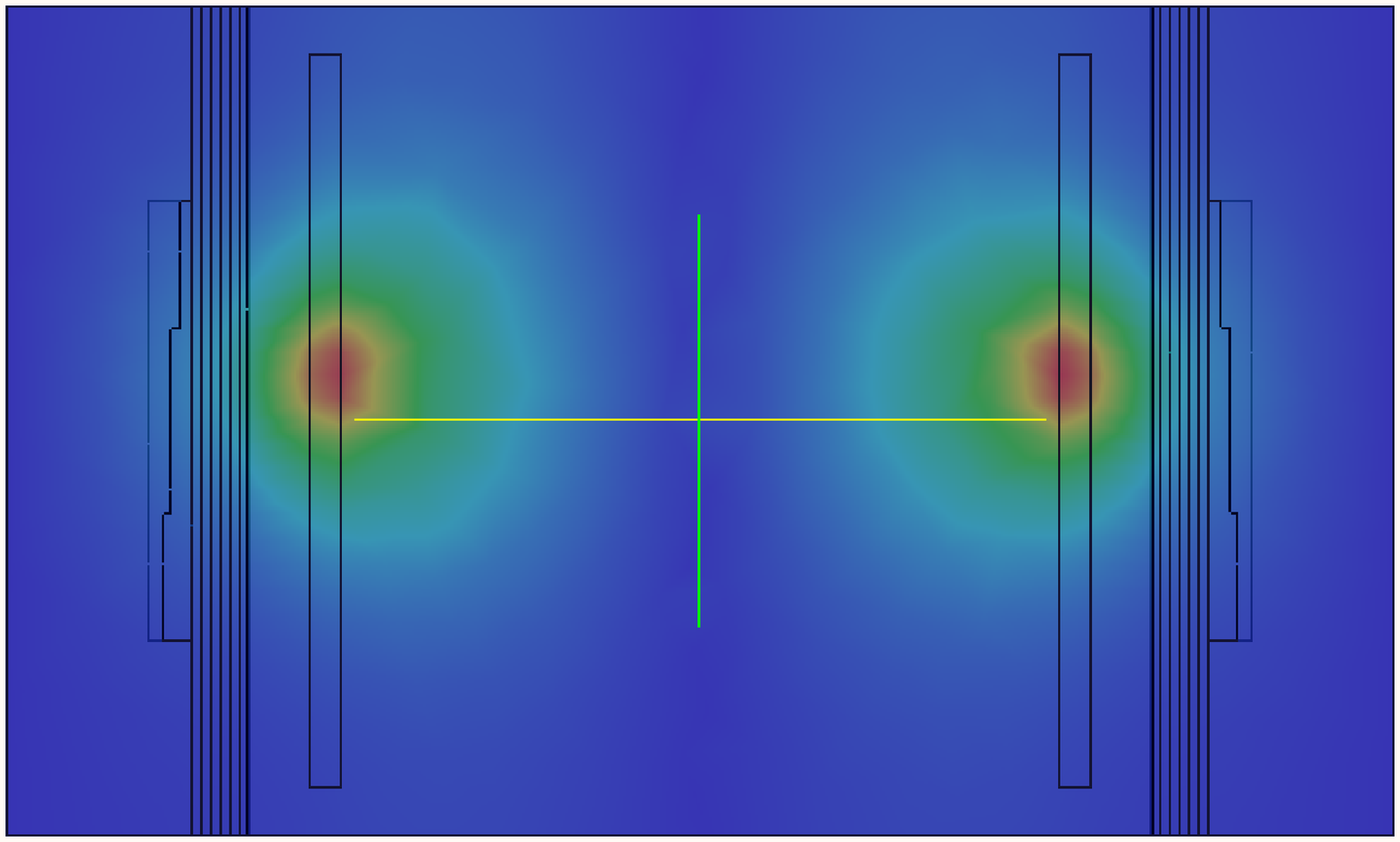} & 
\includegraphics[angle=0,origin=c,width=6cm,height=5cm] {./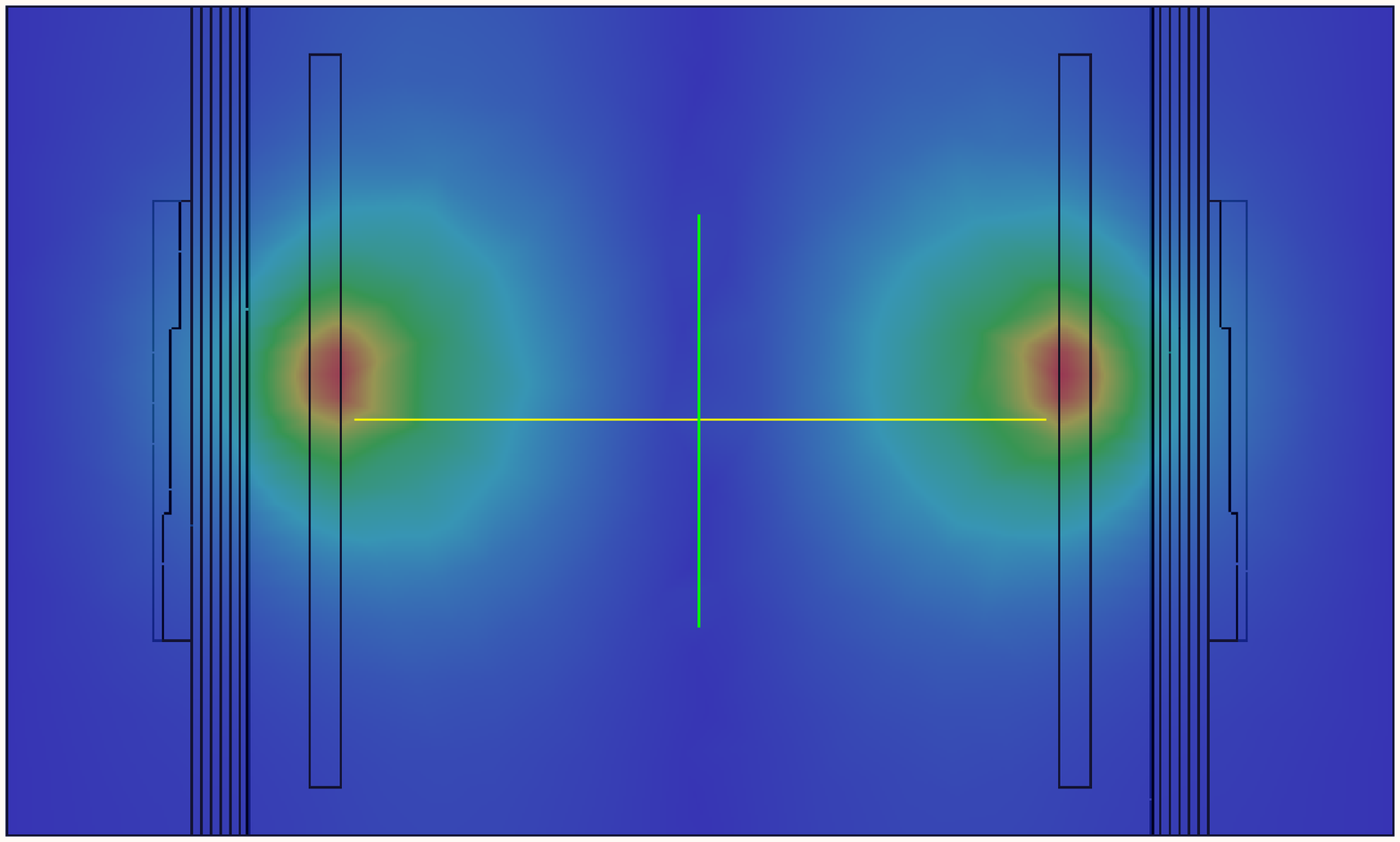} \\
\includegraphics[angle=0,origin=c,width=6cm,height=5cm] {./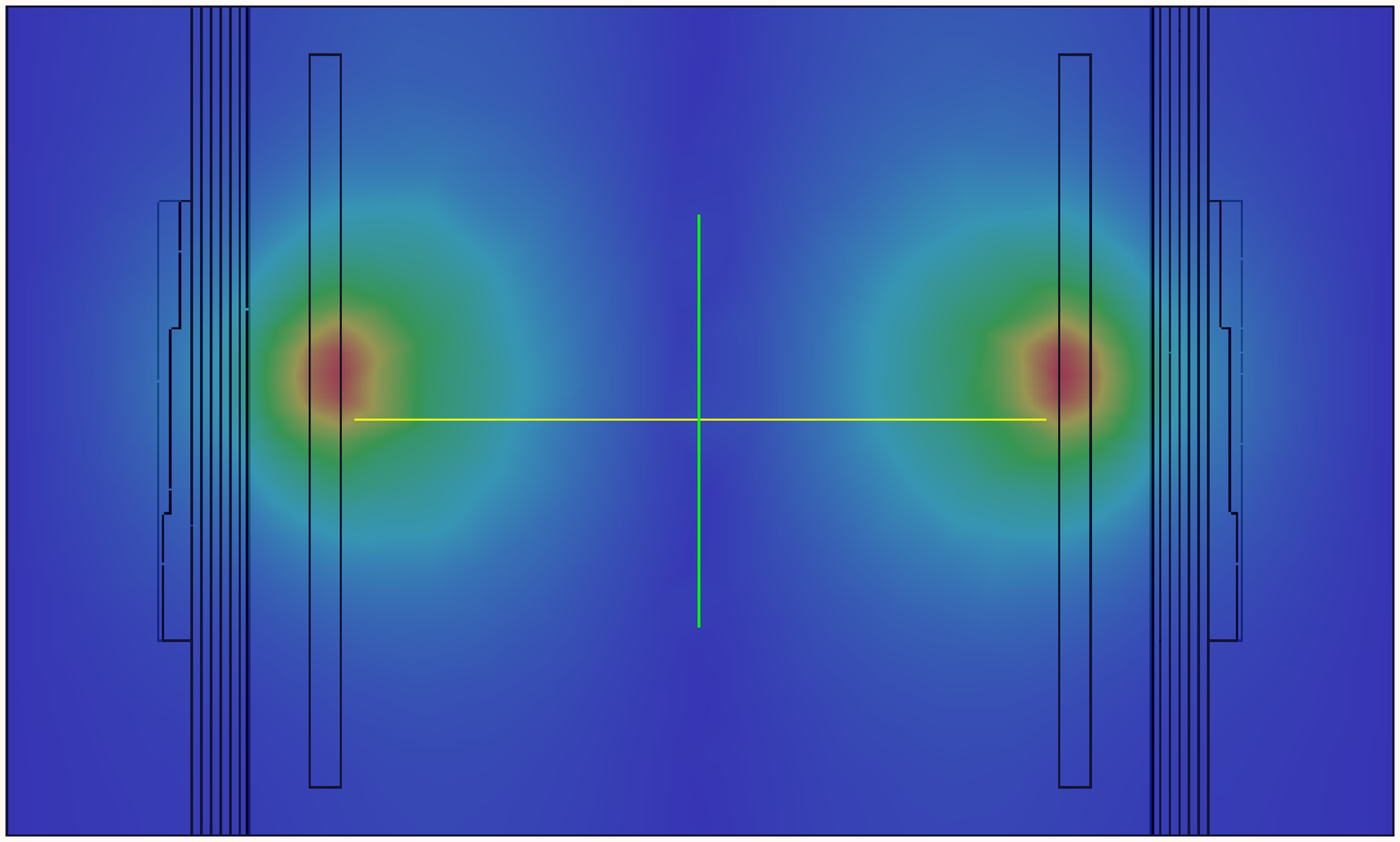} & 
\includegraphics[angle=0,origin=c,width=6cm,height=5cm] {./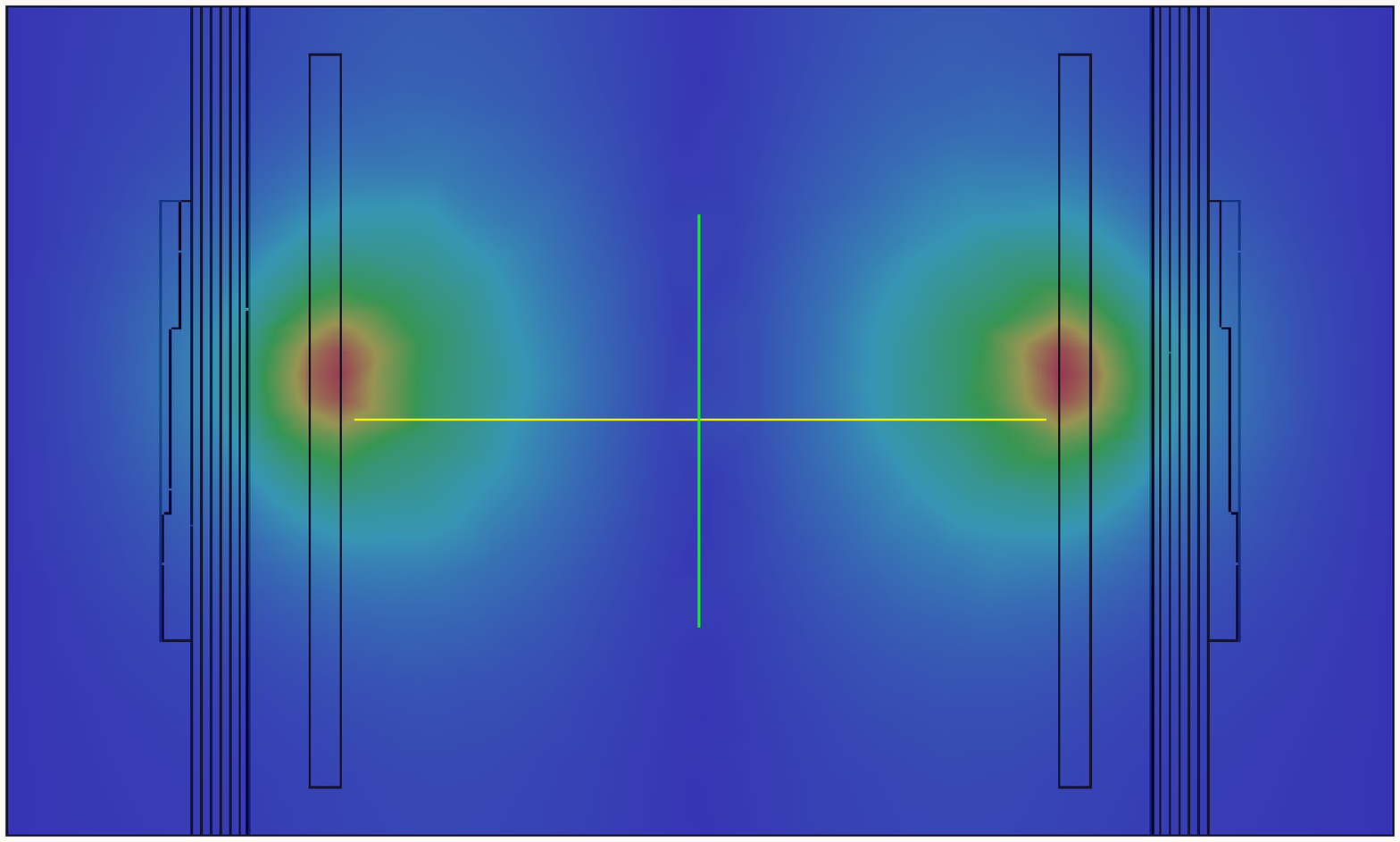} \\
\includegraphics[angle=0,origin=c,width=6cm,height=5cm] {./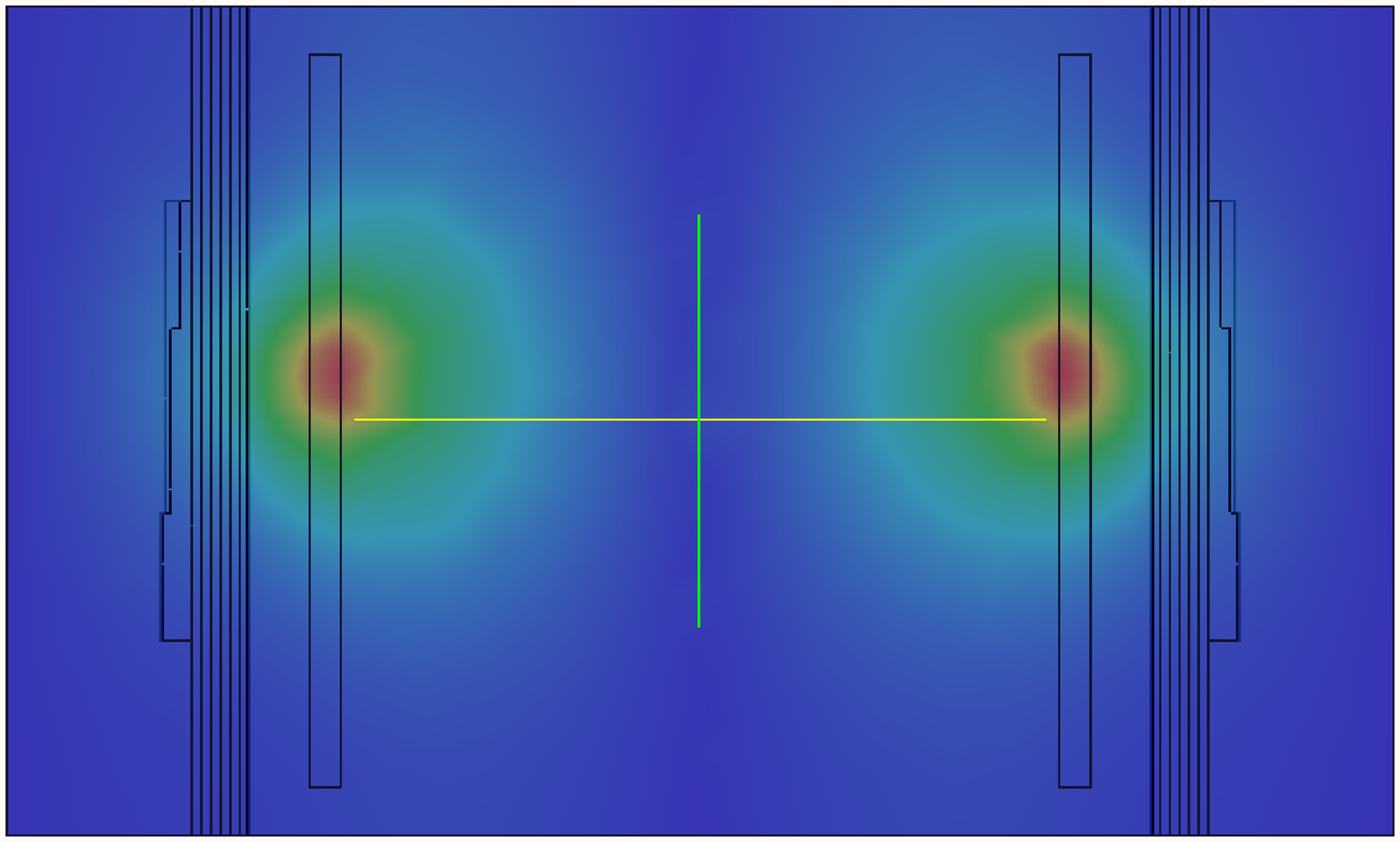} & 
\includegraphics[angle=0,origin=c,width=6cm,height=5cm] {./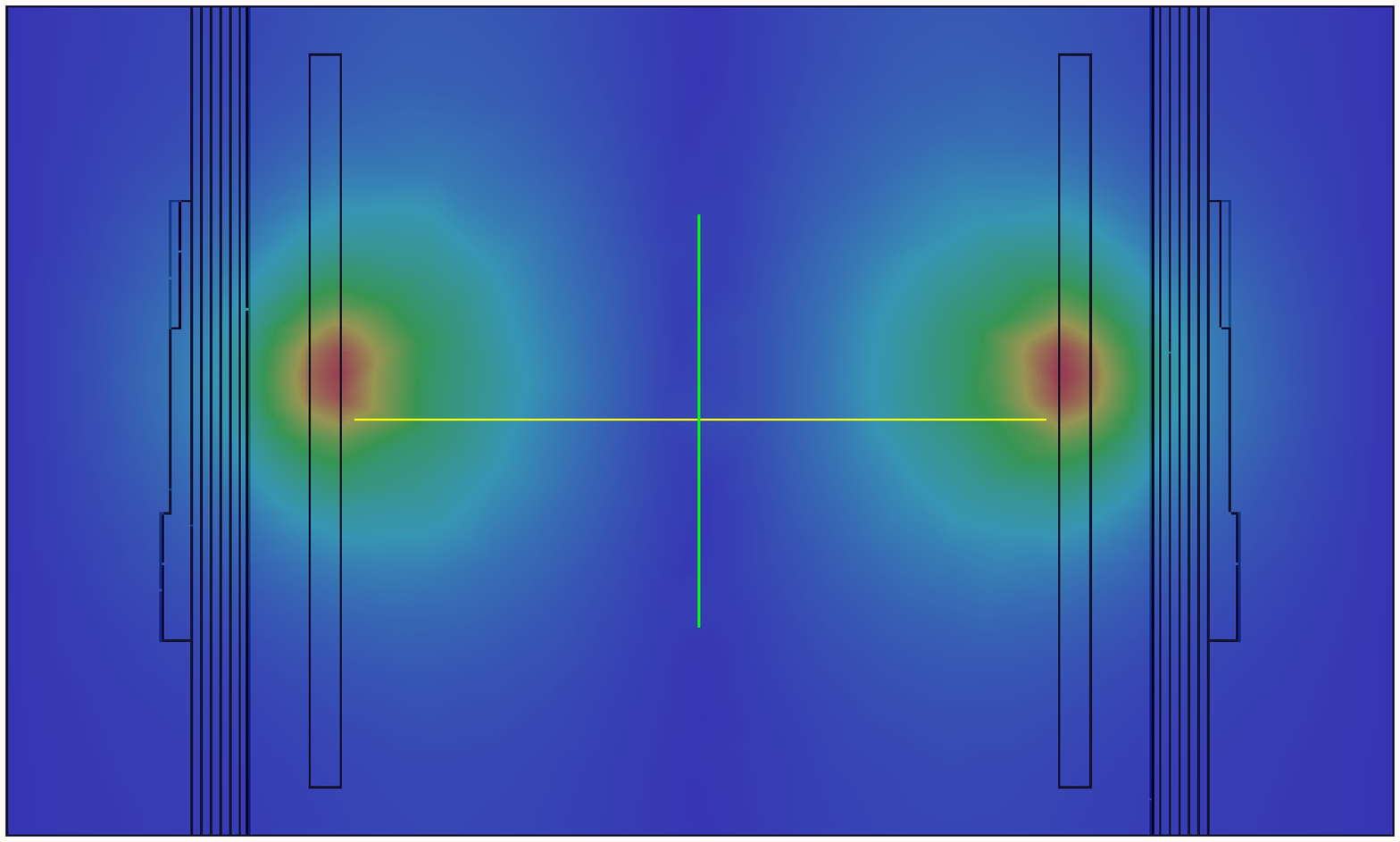} \\
\includegraphics[angle=0,origin=c,width=6cm,height=5cm] {./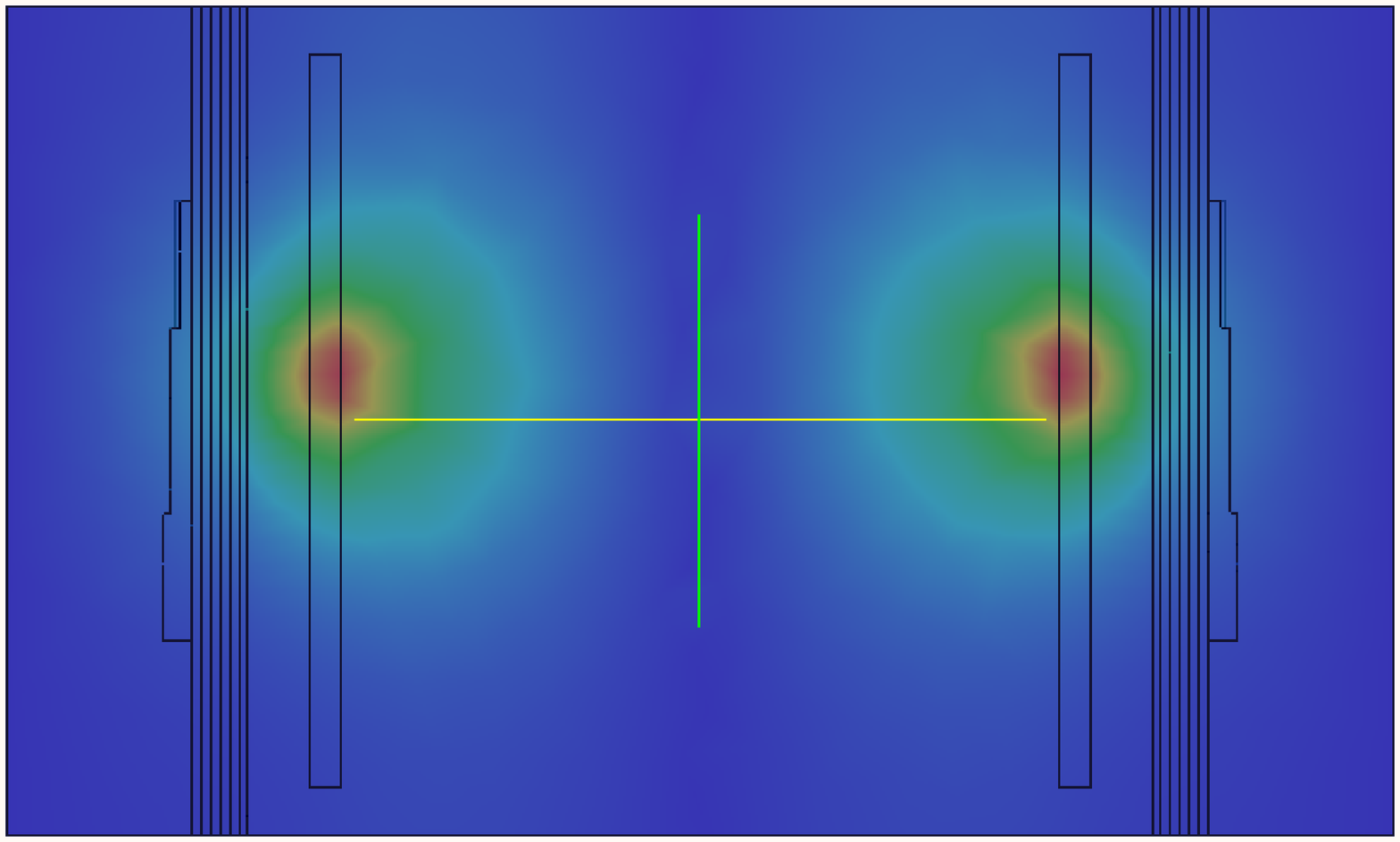} & 
\includegraphics[angle=0,origin=c,width=6cm,height=5cm] {./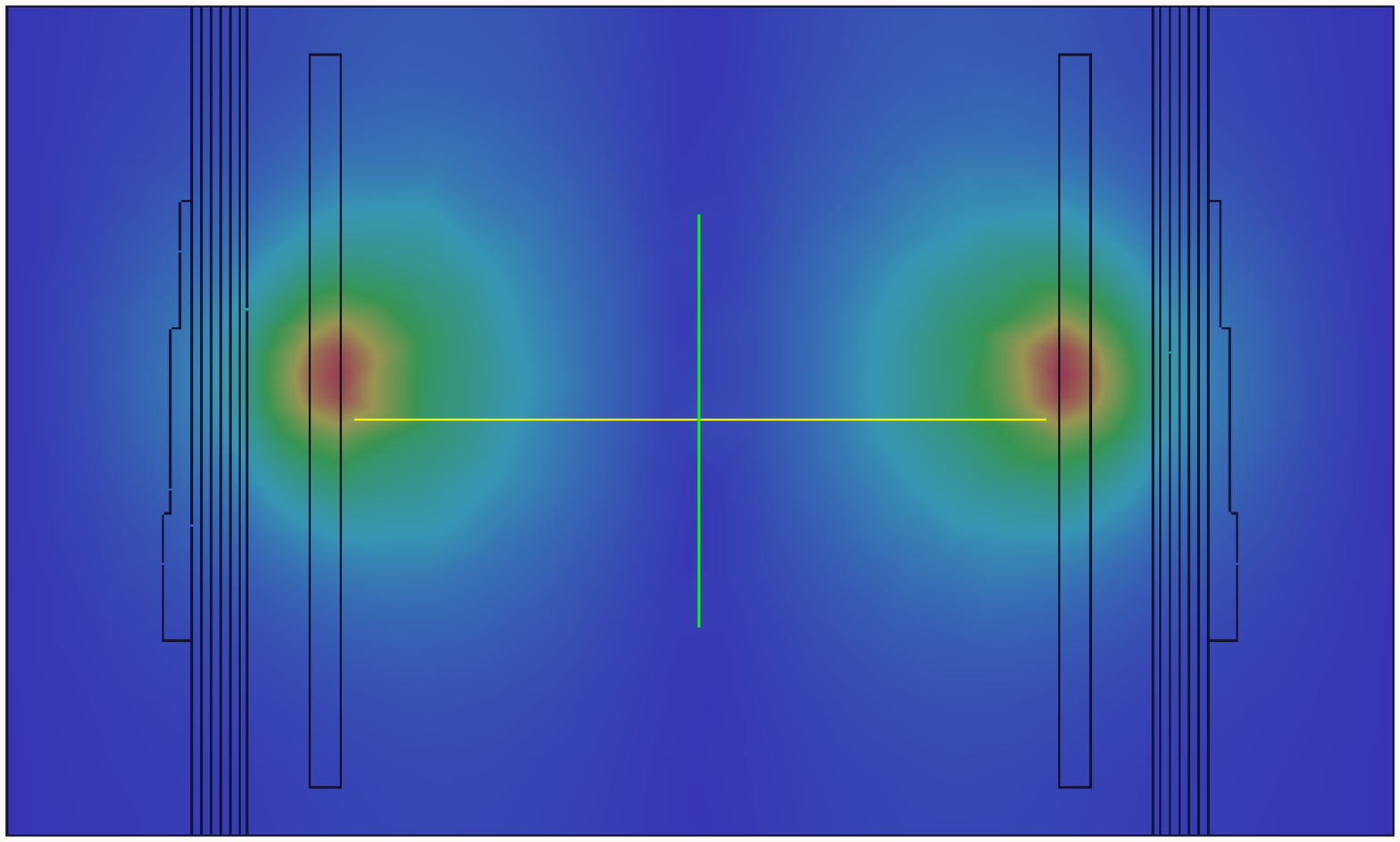} 
\end{tabular}
\caption{( line first presentation of ) Snapshot of the mesh transformation in order to reconstruct the deposit modeled by 3 layers, each graph corresponds to an iteration of inversion. Those solutions are related to results in Fig.~\ref{J_L123}.} 
\label{snapshotsL123}
\end{figure}

\section*{Aknowledgement}
The authors gratefully thanks EDF-R\&D STEP team for their helpful discussions and remarks, who also finances this project, in the postdoctoral framework for the second author. 
\section{Conclusion}

	We set in this report a direct and inverse 3D solver for eddy current probing of deposits. Numerical tests are provided to demonstrate the efficiency of the 3D solver that we validate with comparison to the 2D axisymmetric case. The inverse algorithm in most cases achieves a good convergence error of order 1\%.

	On going work, concerns the regularization of the descent direction of the inverse problem and consider more complicate real life problem, namely the reconstruction of any shaped deposit using non healthy signal du to the presence of the high conductive TSP. 

\clearpage\newpage

\bibliographystyle{alpha}
\bibliography{biblio_eddy_shape}

\end{document}